\DeclareMathOperator{\C}{\mathcal{C}}
\newtheorem{theorem}{Theorem}[section]
\newtheorem{lemma}[theorem]{Lemma}
\newtheorem{corollary}[theorem]{Corollary}
\newtheorem{definition}[theorem]{Definition}
\newtheorem{proposition}[theorem]{Proposition}
\newtheorem{remark}[theorem]{Remark}
\newcommand{\fqn}{\mathbb{F}_{q^n}}
\newcommand{\cG}{{\mathcal G}}
\newcommand{\cH}{{\mathcal H}}
\newcommand{\cD}{{\mathcal D}}
\newcommand{\F}{{\mathbb F}}
\newcommand{\Z}{{\mathbb Z}}
\newcommand{\T}{\mathrm{Tr}}
\newcommand{\fq}{{\mathbb F}_{q}}
\newcommand{\la}{\langle}
\newcommand{\ra}{\rangle}
\newcommand{\PG}{\mathrm{PG}}
\newcommand{\N}{\mathrm{N}}
\title{On the list decodability of rank-metric codes containing Gabidulin codes}
\author{Paolo Santonastaso and Ferdinando Zullo}
\date{}
\begin{document}
\maketitle

\begin{abstract}
Wachter-Zeh in \cite{wachter-zhe_2013}, and later together with Raviv \cite{raviv_2016}, proved that Gabidulin codes cannot be efficiently list decoded for any radius $\tau$, providing that $\tau$ is large enough. Also, they proved that there are infinitely many choices of the parameters for which Gabidulin codes cannot be efficiently list decoded at all.
Subsequently, in \cite{TrombZullo} these results have been extended to the family of generalized Gabidulin codes and to further family of MRD-codes.
In this paper, we provide bounds on the list size of rank-metric codes containing generalized Gabidulin codes in order to determine whether or not a polynomial-time list decoding algorithm exists. 
We detect several families of rank-metric codes containing a generalized Gabidulin code as subcode which cannot be efficiently list decoded for any radius large enough and families of rank-metric codes which cannot be efficiently list decoded.
These results suggest that rank-metric codes which are $\F_{q^m}$-linear or that contains a (power of) generalized Gabidulin code cannot be efficiently list decoded for large values of the radius.
\end{abstract}

\bigskip
{\it AMS subject classification:} 	94B35; 94B05.

\bigskip
{\it Keywords:} rank-metric code; list decoding; linearized polynomial; Gabidulin code.

\section{Introduction}

Rank-metric codes were introduced by Delsarte \cite{delsarte_bilinear_1978} in 1978 and they have been intensively investigated in recent years because of their applications; we refer to \cite{sheekey_newest_preprint} for a recent survey on this topic.
The set of $m \times n$ matrices $\F_q^{m\times n}$ over $\F_q$ may be equipped with the \emph{rank metric}, defined by
\[d(A,B) = \mathrm{rk}\,(A-B).\]
A subset $C \subseteq \F_q^{m\times n}$ endowed with the rank metric is called a \emph{rank-metric code} (shortly, a \emph{RM}-code).
The minimum distance of $C$ is defined as
\[d = \min\{ d(A,B) \colon A,B \in C,\,\, A\neq B \}.\]
Delsarte showed in \cite{delsarte_bilinear_1978} that the parameters of these codes must obey a Singleton-like bound, i.e.
\begin{equation}\label{eq:Singleton} |C| \leq q^{\max\{m,n\}(\min\{m,n\}-d+1)}. 
\end{equation}
When equality holds, we call $C$ a \emph{maximum rank distance} (\emph{MRD} for short) code.
Examples of $\F_q$-linear MRD-codes were first found in \cite{delsarte_bilinear_1978,gabidulin_MRD_1985}, often called \emph{Gabidulin codes}.

More recently in \cite{sheekey_new_2016}, the author exhibited two infinite families of linear MRD-codes which are not equivalent to generalized Gabidulin codes. We call them {\it twisted Gabidulin codes} and {\it generalized twisted Gabidulin} codes. In \cite{lunardon_generalized_2015} it was shown that the latter family contains both generalized Gabidulin codes and twisted Gabidulin codes as proper subsets.
From then, more constructions arises, see e.g.\ \cite{BZZ,CMPZ,CsMPZh,CsMZ2018,LZ2,MMZ,OzbudakOtal,SheekeyLondon,trombetti_new}.

\medskip

From now on suppose $n \leq m$. 

There are other equivalent ways of representing a rank-metric code of $\F_q^{m\times n}$. For our purpose, we will see such codes also as subsets of $\F_{q^m}^n$.

For a vector $\mathbf{v}=(v_1,\ldots,v_n)\in \F_{q^m}^n$, we define its rank weight as follows
\[ \mathrm{rk}(\mathbf{v})=\dim_{\F_q}  \la v_1,\ldots, v_n \ra_{\F_q}. \]
The rank distance between two vectors $\mathbf{u}, \mathbf{v} \in \F_{q^m}^n$ is defined as $d(\mathbf{u},\mathbf{v})=\mathrm{rk}(\mathbf{u}-\mathbf{v})$.
A rank-metric code of $\F_{q^m}^n$ is a subset of $\F_{q^m}^n$ equipped with the aforementioned metric.
The same bound \eqref{eq:Singleton} holds and hence we can define again an MRD-code $\mathrm{C}$ as the code whose parameters attain the equality in \eqref{eq:Singleton}, i.e. $|\mathrm{C}|=q^{mk}$ and for each $\mathbf{u},\mathbf{v}\in \mathrm{C}$ with $\mathbf{u}\neq\mathbf{v}$ we have that $\mathrm{rk}(\mathbf{u}-\mathbf{v})\geq n-k+1$.
Recall also the following definition. For each element $\mathbf{w} \in \F_{q^m}^n$ and $\tau \in \Z^+$, we define
$$B_{\tau}(\mathbf{w}):= \{\mathbf{c} \in \F_{q^m}^n  \colon  \mathrm{rk}(\mathbf{w}-\mathbf{c}) \leq \tau\}.$$

MRD-codes with efficient decoding algorithm are of great interest in practice. Several decoding algorithms exist for Gabidulin codes, see \cite{gabidulin_MRD_1985,Loi06,PWZ,Random,RP04,WAS11,WAS13,WSB10} and for further MRD-codes \cite{Li,Li2,Randrianarisoa,RandrianarisoaRosenthal}.
Also, the list decoding problem for rank-metric codes has been intensively investigated, as it allows to handle a greater number of errors.
The problem of list decoding was originally introduced in the Hamming setting by Elias in \cite{Elias} and Wozencraft in \cite{Wozencraft} and can be stated in a very general fashion. 
Let $\mathrm{C}$ be any rank-metric code of length $n$ in  $\F_{q^m}^n$, let $\tau$ be a positive integer and given a received word, output the list of all codewords of the code within distance $\tau$ from it.   

A \emph{list decoding algorithm} returns the list of all codewords with distance at most $\tau$ from any given word.
Consequently, we say that $\mathrm{C}$ is \emph{efficiently list decodable} at the radius $\tau$, if there exists a polynomial-time list decoding algorithm. Of course, if there exists a word $\mathbf{w} \in \F_{q^m}^n \setminus \mathrm{C}$ for which $B_{\tau}(\mathbf{w})\cap \mathrm{C}$ has exponential size in the length of the code, such an algorithm cannot exist since writing down the list already has exponential complexity. When such an algorithm does not exist we say that $\mathrm{C}$ is \emph{not efficiently list decodable at the radius} $\tau$.
Clearly, all the rank-metric codes are list decodable for any radius less than or equal to the corresponding unique decoding radius. 
Furthermore, if $\mathrm{C}$ is not efficiently list decodable at the radius greater than its unique decoding radius, we say that $\mathrm{C}$ is \emph{not efficiently list decodable at all}.
See \cite{guruswami} for further details on the list decodability issue.

Several construction of rank-metric codes which can be efficiently list decoded from a larger number of errors are known, see e.g.\ \cite{guruswami_2016}. As far as we can tell, none of them are either $\F_{q^m}$-linear or MRD.
Wachter-Zeh in \cite{wachter-zhe_2013}, and later with Raviv \cite{raviv_2016}, investigated the list decoding problem for Gabidulin codes. 

In \cite[Theorem 1]{wachter-zhe_2013}, Wachter-Zeh proved that Gabidulin codes in $\F_{q^m}^n$ with minimum distance $d$ cannot be efficiently list decoded at any radius $\tau$ such that \begin{equation*}\label{eq:boundontau} \tau \geq \frac{m+n}{2}-\sqrt{\frac{{(m+n)}^2}{2} -m(d-\epsilon)},\end{equation*} where $0\leq \epsilon  < 1$.

In \cite{raviv_2016} the authors improved this result under specific restrictions for the involved parameters. As a consequence, they showed infinite families of Gabidulin codes which are not efficiently list decodable at all. 
Such results have been adapted for a more general family of MRD-codes in \cite{TrombZullo}.
In view of these results, Renner, Puchinger and Wachter-Zeh in \cite{LIGA} proposed a new cryptosystem, called LIGA, based on the hardeness of list decoding (and interleaved decoding) of Gabidulin codes.
Motivated by such results we explore negative results on the list decoding of rank-metric codes, using the approach developed in \cite{wachter-zhe_2013} and in \cite{raviv_2016,TrombZullo}. 
Indeed, as shown in \cite{wachter-zhe_2013}, if $n\mid m$, then to ensure the existence of a ball not centered in a codeword with a fixed radius and meeting the code in a enough large number of codewords, may be translated in to find an enough large class of linearized polynomials with maximum kernel defined over $\F_{q^n}$.
This methods applies when the rank-metric code analyzed contains a Gabidulin code, indeed we are able to detect new families of rank-metric codes which are not efficiently list decoded either at all or from a larger number of errors.

The paper is organized as follows.
Section \ref{sec:prel} is devoted to definitions and results on linearized polynomials and rank-metric codes.
In Section \ref{sec:subpol} we give the definition of $\sigma$-subspace polynomial, which generalizes the concept of subspace polynomial, and we describe properties and known families of such polynomials, whereas in Subsection \ref{sec:gentrace} we extend the subspace polynomials introduced in \cite{HKP} obtaining new families of $\sigma$-subspace polynomials.
In Section \ref{sec:gen}, applying the techniques of \cite{wachter-zhe_2013} and of \cite{raviv_2016,TrombZullo}, we first obtain a Johnson-like bound for rank-metric codes containing generalized Gabidulin codes (see \eqref{eq:Johnson}) and then we give a bound on the list size of rank-metric codes containing generalized Gabidulin codes, which relies on the existence of an enough large family of $\sigma$-subspace polynomials. 
In Section \ref{sec:appl1}, we apply the machinery developed in Section \ref{sec:gen} by using the families of $\sigma$-subspace polynomials.
As a byproduct, in Section \ref{sec:appl2} we are able to exhibit families of rank-metric codes which are not efficiently list decoded either at all or from a larger number of errors, some of them are also MRD-codes or close to be MRD.
These results have a natural applications to the list decodability of subspace codes, which we describe in Section \ref{sec:subspacecodes}.
Finally, we conclude the paper by listing some open problems, see Section \ref{sec:probl}.

\section{Preliminaries}\label{sec:prel}

Through this paper, $q$ is a power of a prime $p$, $\F_q$ denotes the finite field of order $q$, $\F_{q^n}$ denotes its finite extension of degree $n\geq 2$.
For any positive divisor $r$ of $n$, let $\mathrm{Tr}_{q^n/q^r}(x)=\sum_{i=0}^{n/r-1} x^{q^{ir}}$ and $\mathrm{N}_{q^n/q^r}(x)=x^{\frac{q^n-1}{q^r-1}}$.

\subsection{Linearized polynomials}

Let $\sigma$ be a generator of $\mathrm{Gal}(\F_{q^m}\colon \F_q)$.
The set of $\sigma$-polynomials (or $\sigma$-linearized polynomials) with coefficients in $\mathbb{F}_{q^m}$ is
\[\mathcal{L}=\left\{\sum_{i=0}^{k} a_i x^{\sigma^i}: a_i\in \mathbb{F}_{q^m}  , k \in \mathbb{N}_0 \right\}.\]
Any polynomial $f$ in $\mathcal{L}$ gives rise to an $\mathbb{F}_q$-linear map $x\in \mathbb{F}_{q^m} \mapsto f(x) \in \mathbb{F}_{q^m}$. If $a_k \neq 0$ we will refer to $k$ as to the $\sigma$-degree of $f$, which will be denoted by $\deg_{\sigma}(f)$.
It is well known that $(\mathcal{L} / (x^{\sigma^m}-x),+,\circ,\cdot)$, where $+$ is the addition of maps, $\circ$ is the composition of maps modulo $x^{\sigma^m}-x$ and $\cdot$ is the scalar multiplication by elements of $\mathbb{F}_q$, is isomorphic to the algebra of $m\times m$ matrices over $\mathbb{F}_q,$ and hence to $\mathrm{End}_{\mathbb{F}_q}(\mathbb{F}_{q^m})$; i.e., the set of endomorphisms on $\mathbb{F}_{q^m}$ seen as an $\mathbb{F}_q$-algebra. In the following we will denote this algebra by $\mathcal{L}_{m,\sigma}$ (and by $\mathcal{L}_{m,q}$ if $\sigma$ coincides with $x\in \F_{q^m}\mapsto x^q\in \F_{q^m}$) and we will always identify the elements of $\mathcal{L}_{m,\sigma}$ with the endomorphisms of $\mathbb{F}_{q^m}$ they represent. 
Consequently, we will speak also of \emph{kernel} and \emph{rank} of a polynomial meaning by this the kernel and rank of the corresponding endomorphism.
Clearly, the kernel of $f\in \mathcal{L}_{m,\sigma}$ coincides with the set of the roots of $f$ over $\mathbb{F}_{q^m}$ and as usual $\dim_{\mathbb{F}_q} \mathrm{Im}(f)+\dim_{\mathbb{F}_q} \ker(f)=m$.

Consider the non-degenerate symmetric bilinear form of $\F_{q^m}$ over $\F_q$ defined by $\langle x, y \rangle := \T_{q^m/q}(xy)$, for every $x,y \in \F_{q^m}$. Then the \emph{adjoint} $\hat{f}$ of the $q$-polynomial $f(x)=\sum_{i=0}^{m-1}a_ix^{q^i} \in \mathcal{L}_{m,q}$ with respect to the bilinear form $\langle,\rangle$ is 

$$\hat{f}(x)=\sum_{i=0}^{m-1}a_i^{q^{m-i}}x^{q^{m-i}},$$
i.e.
$$\T_{q^m/q}(f(x)y)=\T_{q^m/q}(x\hat{f}(y)).$$

In \cite[Lemma 2.6]{BGMP}, the authors proved that
$$\left\{\frac{f(x)}{x}:x \in \F_{q^m}^*\right\}=\left\{\frac{\hat{f}(x)}{x}:x \in \F_{q^m}^*\right\}.$$ 
From this equality follows that
\begin{equation}\label{eq:dimkeradj}
    \dim_{\F_q}(\ker(f))=\dim_{\F_q}(\ker(\hat{f})).
\end{equation}

Therefore, the adjoint of a $\sigma$-linearized polynomial $f(x)=\sum_{i=0}^{m-1}a_ix^{\sigma^i}$ is $\hat{f}(x)=\sum_{i=0}^{m-1}a_i^{\sigma^{m-i}}x^{\sigma^{m-i}}$, and \eqref{eq:dimkeradj} still holds.

\begin{remark} \label{rem:adjoint}
Let $f(x)=\sum_{i=0}^{k}a_i x^{\sigma^i}\in \mathcal{L}_{m,\sigma}$ with $a_k \ne 0$. Then, by \eqref{eq:dimkeradj},
$g(x)=(\hat{f}(x))^{\sigma^k}=\sum_{i=0}^k (a_{k-i}x)^{\sigma^i}$ is a $\sigma$-polynomial and $\dim_{\fq}(\ker(g))=\dim_{\F_q}(\ker(f))$.
\end{remark}

\subsection{Rank-metric codes}

It is convenient to represent rank-metric codes as subsets of $\mathcal{L}_{m,\sigma}$, see e.g.\ \cite{sheekey_newest_preprint}.
Let recall a large class of rank-metric codes.
Let $f_1$ and $f_2$ be two additive functions of $\F_{q^m}$ such that $|\mathrm{Im}(f_1)\times\mathrm{Im}(f_2)|=q^m$ and let $k \leq m-1$.

Then the set 

\begin{small}
\begin{equation}\label{eq:codes}
\cH_{m,k,{\sigma}}(f_1,f_2)=\left\{f_1(a)x + \sum_{i=1}^{k-1}a_ix^{\sigma^i}+f_2(a)x^{\sigma^k} \colon a,a_i \in \F_{q^m}\right\}
\end{equation}
\end{small}
is an $\F_p$-linear rank-metric code in $\mathcal{L}_{m,\sigma}$ of size $q^{mk}$.
Furthermore, if $f_1$ and $f_2$ are such that $\N_{q^m/q}(f_1(a)) \neq (-1)^{mk}\N_{q^m/q}(f_2(a))$ for every $a\in \F_{q^m}^*$, by \cite[Proposition 1]{sheekey_newest_preprint}, $\cH_{m,k,{\sigma}}(f_1,f_2)$ defines an MRD code in $\mathcal{L}_{m,\sigma}$ with minimum distance $d=m-k+1$. For instance, if $f_1(a)=a$ and $f_2(a)=0$ for each $a \in \F_{q^m}$, then $\cH_{m,k,\sigma}(f_1,f_2)$ is a generalized Gabidulin code (commonly indicated with the symbol $\cG_{m,k,\sigma}$).

In the following table we summarize the examples of the known MRD-codes of $\mathcal{L}_{m,\sigma}$ that can be represented as in (\ref{eq:codes}).

\begin{table*}[ht]\label{table:ex}
\caption{Known examples of MRD-codes in $\mathcal{L}_{m,\sigma}$}
\centering
\begin{tabular}{c c c c c c c}
\hline\hline
\text{Symbol} & $\sigma$ & $f_1(a)$ & $f_2(a)$ & \text{Conditions} & \text{References} \\
\hline
$\cG_{m,k}$ & $q$      &  $a$           & 0             & --- & \cite{delsarte_bilinear_1978,gabidulin_MRD_1985} \\
$\cG_{m,k,\sigma}$ & $q^s$      &  $a$           & 0             & --- & \cite{kshevetskiy_new_2005} \\
$\cH_{m,k,\sigma}(\eta,h)$ &  $q^s$     &  $a$           & $\eta a^{q^h}$  & $N_{q^m/q}(\eta) \neq (-1)^{mk}$ & \cite{sheekey_new_2016,lunardon_generalized_2015} \\
$\overline{\cH}_{m,k,\sigma}(\eta,h)$ &  $q^s$     &  $a$           & $\eta a^{p^h}$  & $N_{q^m/p}(\eta) \neq (-1)^{mk}$ & \cite{OzbudakOtal} \\
$\cD_{m,k,\sigma}(\eta)$    &  $q^s$     &   $a+a^{q^{m/2}}$     &    $\eta (b+b^{q^{m/2}})$ & $m$ \text{even}, $N_{q^m/q}(\eta) \notin \square$, $a,b \in \F_{q^{m}}$ & \cite{trombetti_new} \\
\hline
\end{tabular}
\\ where $\N_{q^m/p}(\eta)=\eta^{1+p+\ldots+p^{m\ell-1}}$ ($q=p^\ell$) and $\square$ is the set of square elements in $\F_{q^m}$.
\end{table*}

Except for generalized Gabidulin codes, all of the examples above have {\it Gabidulin index} $k-1$ (see \cite{GZ}), that is they contain a subcode equivalent to $\cG_{m,k-1,\sigma}$.

Any rank-metric code of $\mathcal{L}_{m,\sigma}$ defines a rank-metric code in $\F_{q^m}^n$, in view of the following result.

\begin{lemma}\label{lemma:rankmetricvect}
Let $\mathcal{C}$ be a RM-code of $\mathcal{L}_{m,\sigma}$ with $\lvert \mathcal{C} \rvert =q^{mk}$ and $d(\mathcal{C})=m-h$, with $h$ a positive integer such that $h \geq k-1$. Let $n$ be a positive integer greater than or equal to $k$
and let $S=\{\alpha_1,\ldots,\alpha_n\}$ be a set of $n$ distinct $\F_q$-linearly
independent elements of $\F_{q^m}$. 
Then the rank-metric code
$$\mathrm{C}=\{ (g(\alpha_1),\ldots, g(\alpha_n)): g \in \mathcal{C}  \}\subseteq \F_{q^m}^n$$
is a code of $\F_{q^m}^n$ with $\lvert \mathrm{C} \rvert=q^{mk}$ and $n-h\leq d(\mathrm{C}) \leq n-k+1$. In particular, if $\mathcal{C}$ is an MRD-code then $\mathrm{C}$ is an MRD-code, that is $d(\mathrm{C})=n-k+1$
\end{lemma}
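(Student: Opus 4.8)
The plan is to transfer the code $\cC\subseteq\mathcal{L}_{m,\sigma}$ to the vector code $\mathrm{C}\subseteq\F_{q^m}^n$ through the evaluation map $\mathrm{ev}_S\colon g\mapsto(g(\alpha_1),\ldots,g(\alpha_n))$ and to control how this map behaves on cardinality and on rank distance. The core identity I would exploit is that for a $\sigma$-polynomial $g$ of $\sigma$-degree $k-1$ (i.e. an element of $\cC$), the values $g(\alpha_1),\ldots,g(\alpha_n)$ span, over $\F_q$, precisely the image of the $\F_q$-subspace $\langle\alpha_1,\ldots,\alpha_n\rangle_{\F_q}$ under the endomorphism $g$. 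Hence $\mathrm{rk}\,(g(\alpha_1),\ldots,g(\alpha_n))=\dim_{\F_q}g(\langle\alpha_1,\ldots,\alpha_n\rangle_{\F_q})$, which relates the vector rank weight directly to the kernel dimension of the endomorphism $g$ restricted to that $n$-dimensional subspace.

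\emph{First} I would establish injectivity of $\mathrm{ev}_S$, which gives $\lvert\mathrm{C}\rvert=\lvert\cC\rvert=q^{mk}$. Since every nonzero element of $\cC$ is a $\sigma$-polynomial of $\sigma$-degree at most $h\le m-1$, and $\cC$ has minimum distance $m-h$, any nonzero $g\in\cC$ has $\dim_{\F_q}\ker(g)\le h$, equivalently $\mathrm{rk}\,(g)\ge m-h$. If two codewords agreed on all of $S$, their difference $g$ would vanish on the $n$-dimensional space $\langle\alpha_1,\ldots,\alpha_n\rangle_{\F_q}$, so $\dim_{\F_q}\ker(g)\ge n\ge k$; I would need $n>h$ to force $g=0$ and conclude injectivity. (Here the hypotheses $n\ge k$ and $d(\cC)=m-h$ interact, so I would check carefully whether $n\ge k$ alone suffices or whether the argument really needs the rank bound $\mathrm{rk}(g)\ge m-h$ together with $\dim\ker g\ge n$ to exceed $m$.)

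\emph{Next}, for the distance bounds I would take a nonzero $g\in\cC$ and analyze $w:=\dim_{\F_q}g(V)$ where $V=\langle\alpha_1,\ldots,\alpha_n\rangle_{\F_q}$ has dimension $n$. By rank--nullity applied to $g|_V$, we have $w=n-\dim_{\F_q}(\ker(g)\cap V)$. For the \emph{upper} bound $d(\mathrm{C})\le n-k+1$: the Singleton bound \eqref{eq:Singleton} holds for $\mathrm{C}$ once $\lvert\mathrm{C}\rvert=q^{mk}$, giving $d(\mathrm{C})\le n-k+1$ immediately. For the \emph{lower} bound $d(\mathrm{C})\ge n-h$: since $\dim_{\F_q}\ker(g)\le h$ for every nonzero $g\in\cC$, a fortiori $\dim_{\F_q}(\ker(g)\cap V)\le h$, hence $w=n-\dim_{\F_q}(\ker(g)\cap V)\ge n-h$, which is the claimed minimum distance lower bound.

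\emph{Finally}, the MRD case: if $\cC$ is MRD with $d(\cC)=m-k+1$, then $h=k-1$, and the lower and upper bounds coincide at $n-k+1$, forcing $d(\mathrm{C})=n-k+1$, so $\mathrm{C}$ is MRD of the right size. The main obstacle I anticipate is not any single computation but rather the bookkeeping in the injectivity step---correctly combining the constraint $\dim_{\F_q}\ker(g)\le h$ with the vanishing condition $\dim_{\F_q}\ker(g)\ge n$ to derive a contradiction---and, relatedly, making sure the interplay of the hypotheses ($n\ge k$, $h\ge k-1$, $d(\cC)=m-h$) is used exactly where needed so that both distance inequalities are tight enough to yield the stated range $n-h\le d(\mathrm{C})\le n-k+1$.
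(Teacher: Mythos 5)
Your proposal follows the paper's own proof essentially step for step: the paper also introduces the evaluation map $\phi\colon f\mapsto(f(\alpha_1),\ldots,f(\alpha_n))$, asserts it is a bijection to get $\lvert\mathrm{C}\rvert=q^{mk}$, deduces the upper bound $d(\mathrm{C})\leq n-k+1$ from the Singleton bound, and obtains the lower bound from the identity $\mathrm{rk}(\phi(f))=n-\dim_{\F_q}(\ker(f)\cap U_S)\geq n-h$, where $U_S=\langle\alpha_1,\ldots,\alpha_n\rangle_{\F_q}$; the MRD case is, as you say, just the coincidence of the two bounds when $h=k-1$. (One small point common to both texts: since $\cC$ is not assumed linear, the bound $\dim_{\F_q}\ker(g)\leq h$ should be applied to differences $g=f_1-f_2$ of distinct codewords rather than to ``nonzero codewords''; for the additive codes actually used in the paper this is the same thing.)

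The place where you hesitate --- injectivity --- is precisely the step the paper glosses over with ``it is easy to see that $\phi$ is a bijection,'' and your worry is justified: the stated hypotheses $n\geq k$ and $h\geq k-1$ do \emph{not} suffice. Your intended argument needs $n>h$: if $\phi(f_1)=\phi(f_2)$ with $f_1\neq f_2$, then $f_1-f_2$ vanishes on the $n$-dimensional space $U_S$, so $\dim_{\F_q}\ker(f_1-f_2)\geq n$, which contradicts $\dim_{\F_q}\ker(f_1-f_2)\leq h$ only when $n>h$. When $n\leq h$ the lemma as stated can genuinely fail: take $k=1$, $m=6$, $\sigma=q$, and $\cC=\{a(x+x^{q^3})\colon a\in\F_{q^6}\}$, so that $d(\cC)=3=m-h$ with $h=3\geq k-1$; choosing $n=2\geq k$ and $\alpha_1,\alpha_2$ two independent elements of $\ker(x+x^{q^3})$ gives $\mathrm{C}=\{(0,0)\}$, so $\lvert\mathrm{C}\rvert=1\neq q^{6}$, and with it the Singleton-based upper bound collapses (the lower bound $n-h\leq d(\mathrm{C})$ is vacuous in this regime). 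In the paper's applications the issue is invisible, because the codes fed into the lemma consist of $\sigma$-polynomials of $\sigma$-degree less than $n$ (so differences have at most $q^{n-1}$ roots and $\phi$ is injective for degree reasons) or satisfy $n>h$ outright; but as a standalone statement the lemma needs such an extra hypothesis, and flagging this is a virtue of your write-up, not a defect of your argument.
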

\begin{proof}
Consider $U_S$ as the $\F_q$-subspace of $\F_{q^m}$ spanned by the elements of $S$.

Let $f \in \mathcal{C}$. Since $d(\C)=m-h$, then $m-h \leq \mathrm{rk}(f)=m-\dim_{\F_q}(\ker(f))$, so that $\dim_{\F_q} (\ker(f)) \leq h$. 
Now, consider
$$\phi: f \in \mathcal{C} \mapsto (f(\alpha_1),\ldots, f(\alpha_n)) \in \mathrm{C}.$$
It is easy to see that $\phi$ is a bijection, so that $\lvert \mathrm{C} \rvert =\lvert \mathcal{C} \rvert$. The Singleton bound then implies $d(\mathrm{C})\leq n-k+1$. Also,
\[ \mathrm{rk}(\phi(f))=n-\dim_{\F_q}(\ker(f)\cap U_S)\geq n-h, \]
that is $d(\mathrm{C}) \geq n-h$.
\end{proof}

\begin{remark}
If $f \in \mathcal{L}_{m,\sigma}$ and let $\alpha_1,\ldots,\alpha_n \in \F_{q^m}$ as in Lemma \ref{lemma:rankmetricvect}, then we denote by $c_f$ the evaluation vector $(f(\alpha_1),\ldots,f(\alpha_n))\in \F_{q^m}^n$.
Clearly,
\[ \mathrm{rk}(c_f)=n-\dim_{\F_q} (\ker\,(f)\cap \langle \alpha_1,\ldots,\alpha_n \rangle_{\F_q}). \]
\end{remark}

\section{Subspace polynomials}\label{sec:subpol}

Let $m\geq 2$ and $\sigma$ be a generator of $\mathrm{Gal}(\mathbb{F}_{q^m}\colon \mathbb{F}_q)$.

\smallskip

The number of roots of a $\sigma$-polynomial is bounded by its $\sigma$-degree, by \cite[Theorem 5]{GQ2009}.
Precisely, in \cite{teoremone} $\sigma$-polynomials over finite fields for which the dimension of the kernel coincides with their $\sigma$-degree are called $\sigma$-\emph{polynomials with maximum kernel}. Following \cite{TrombZullo} and extending the notion of subspace polynomial, we will call a \textbf{monic} $\sigma$-polynomial a $\sigma$-\emph{subspace polynomial} if it is a $\sigma$-polynomial with maximum kernel.
The reason of such a name relies on the following property, which is well known when $\sigma$ coincides with $x\in \F_{q^m}\mapsto x^q\in \F_{q^m}$.

\begin{proposition}
Let $r$ be a positive integer with $0 \leq r \leq m-1$. The set $\mathcal{P}_{\sigma,r}$ of $\sigma$-subspace polynomials having $\sigma$-degree $r$ is in one-to-one correspondence with the set of $r$-dimensional $\mathbb{F}_q$-subspaces of $\mathbb{F}_{q^m}$.
In particular, its size is ${m\brack r}_q$.
\end{proposition}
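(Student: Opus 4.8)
The plan is to exhibit the correspondence explicitly as the map sending a $\sigma$-subspace polynomial to its kernel, prove it is a bijection, and then read off the count, since $\qbin{m}{r}{q}$ is by definition the number of $r$-dimensional $\F_q$-subspaces of the $m$-dimensional $\F_q$-space $\F_{q^m}$. First I would define $\Phi \colon \mathcal{P}_{\sigma,r} \to \mathcal{G}_r$, where $\mathcal{G}_r$ denotes the set of $r$-dimensional $\F_q$-subspaces of $\F_{q^m}$, by $\Phi(f) = \ker(f)$. This is well defined: since $f$ is $\F_q$-linear its kernel is an $\F_q$-subspace, and by the defining maximum-kernel property we have $\dim_{\F_q}\ker(f) = \deg_\sigma(f) = r$. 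Injectivity is then immediate from the bound on the number of roots recalled above: if $f,g \in \mathcal{P}_{\sigma,r}$ share the same kernel $U$, then $f-g$ vanishes on the $r$-dimensional space $U$ while being a $\sigma$-polynomial of $\sigma$-degree at most $r-1$ (the monic leading terms cancel); a nonzero such polynomial has kernel of dimension at most $r-1$, forcing $f-g=0$.

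The substance of the argument is surjectivity: given $U \in \mathcal{G}_r$ I must produce a monic $\sigma$-polynomial of $\sigma$-degree exactly $r$ with kernel $U$. I would consider the $\F_{q^m}$-linear evaluation map $E$ sending a $\sigma$-polynomial of $\sigma$-degree at most $r$ to the tuple $(f(u_1),\ldots,f(u_r))$, where $u_1,\ldots,u_r$ is an $\F_q$-basis of $U$. The source is an $\F_{q^m}$-space of dimension $r+1$ (here the hypothesis $r \le m-1$ ensures the monomials $x^{\sigma^i}$, $0 \le i \le r$, are $\F_{q^m}$-linearly independent), while the target has dimension $r$, so $\ker E$ is nontrivial and contains some nonzero $f$ vanishing on all $u_i$, hence on $U$. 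Such an $f$ cannot have $\sigma$-degree at most $r-1$, since the root bound would then be violated on the $r$-dimensional $U$; therefore $\deg_\sigma(f)=r$, and after scaling by the inverse of its leading coefficient I obtain a monic $\sigma$-polynomial $f_U$ vanishing on $U$. As $\deg_\sigma(f_U)=r$ and $U \subseteq \ker(f_U)$ with $\dim_{\F_q} U = r$, the root bound gives $\dim_{\F_q}\ker(f_U)=r$, so $f_U$ has maximum kernel, $\ker(f_U)=U$, and $\Phi(f_U)=U$. Equivalently, one may phrase this through the nonsingularity of the generalized Moore matrix $(u_j^{\sigma^i})_{1\le j\le r,\,0\le i\le r-1}$, which holds precisely because the $u_j$ are $\F_q$-linearly independent.

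Finally, $\Phi$ being a bijection yields $|\mathcal{P}_{\sigma,r}| = |\mathcal{G}_r| = \qbin{m}{r}{q}$. I expect the only delicate point to be confirming that the polynomial produced in the surjectivity step has $\sigma$-degree exactly $r$ rather than strictly smaller; this is exactly where the maximum-root bound for $\sigma$-polynomials is indispensable, and it is also what underlies both uniqueness and injectivity. Everything else is linear algebra over $\F_{q^m}$.
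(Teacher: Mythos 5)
Your proof is correct and takes essentially the same approach as the paper: both establish the correspondence via the kernel map $f \mapsto \ker(f)$, with the Gow--Quinlan root bound controlling degrees. The only difference is in surjectivity, where you produce the annihilating polynomial nonconstructively from a dimension count on the evaluation map (equivalently, as you note, nonsingularity of the Moore matrix), while the paper writes it down explicitly as a quotient of Moore determinants; your root-bound arguments for injectivity and for the degree being exactly $r$ also make rigorous what the paper dispatches with ``clearly''.
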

\begin{proof}
The kernel of a $\sigma$-subspace polynomial having $\sigma$-degree $r$ is an $r$-dimensional subspace of $\mathbb{F}_{q^m}$.
Also, let $U$ be an $r$-dimensional $\mathbb{F}_q$-subspace and let $u_1,\ldots,u_r$ an its $\mathbb{F}_q$-basis.
Then
{\footnotesize
\[s(x)=(-1)^{r+1}\left| \begin{pmatrix}
u_1 & u_1^\sigma & \cdots & u_1^{\sigma^{r-1}} \\
\vdots \\
u_r & u_r^\sigma & \cdots & u_r^{\sigma^{r-1}}
\end{pmatrix} \right|^{-1}  \begin{pmatrix}
x & x^\sigma & \cdots & x^{\sigma^r} \\
u_1 & u_1^\sigma & \cdots & u_1^{\sigma^r} \\
\vdots \\
u_r & u_r^\sigma & \cdots & u_r^{\sigma^r}
\end{pmatrix},\]}
is a $\sigma$-subspace polynomial.
Clearly, two different $\sigma$-subspace polynomials have distinct kernels.
\end{proof}

\begin{remark}\label{new}
When $\F_{q^n} \subseteq \F_{q^m}$ and $\sigma$ is a generator of $\mathrm{Gal}(\F_{q^m}:\F_q)$, by \cite[Theorem 5]{GQ2009} a $\sigma$-subspace polynomial in $\mathcal{L}_{n,\sigma}$ is also a $\sigma$-subspace polynomial in $\mathcal{L}_{m,\sigma}$.
\end{remark}

A bound on family of $\sigma$-subspace polynomials agreeing on the last coefficients has been provided.

\begin{lemma}\cite[Lemma 5]{TrombZullo}\label{lemma:subspaceset}
Let $g,r,n$ and $m \in \mathbb{Z}^+$ be positive integers such that $g \leq r < n \leq m$.
Let $S$ be a subset of $\F_{q^m}$ of $n$ $\F_q$-linearly independent elements and let denote by $\tilde{\mathcal{P}}_{r,\sigma}$ the subset of $\mathcal{P}_{r,\sigma}$ whose polynomials have kernel contained in the $\fq$-subspace spanned by $S$.
There exists a subset $\mathcal{F} \subset \tilde{\mathcal{P}}_{r,\sigma}$ of $\sigma$-subspace polynomials coinciding on the last $g$ $\sigma$-coefficients, such that
\[ |\mathcal{F} | \geq \frac{{n \brack r}_q}{q^{m(g-1)}}. \]
\end{lemma}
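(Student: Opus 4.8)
The statement to prove is Lemma~\ref{lemma:subspaceset}: among the $\sigma$-subspace polynomials of $\sigma$-degree $r$ whose kernel lies in $\langle S\rangle_{\F_q}$, a large subfamily must agree on their last $g$ coefficients. The plan is to combine the bijection between $\sigma$-subspace polynomials and subspaces (from the Proposition) with a counting/pigeonhole argument. First I would recall that, by the Proposition together with Remark~\ref{new}, the set $\tilde{\mathcal{P}}_{r,\sigma}$ is in one-to-one correspondence with the $r$-dimensional $\F_q$-subspaces of the $n$-dimensional space $\langle S\rangle_{\F_q}$. Hence
\[ |\tilde{\mathcal{P}}_{r,\sigma}| = \qbin{n}{r}{q}. \]

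Next I would control the number of possible values of the last $g$ $\sigma$-coefficients. Writing a polynomial in $\tilde{\mathcal{P}}_{r,\sigma}$ as $s(x)=x^{\sigma^r}+\sum_{i=0}^{r-1} c_i x^{\sigma^i}$ (monic), the ``last $g$ coefficients'' are $(c_0,c_1,\dots,c_{g-1})\in \F_{q^m}^g$, so a priori there are $q^{mg}$ possible tuples. The key observation that sharpens this is that the constant coefficient $c_0$ is \emph{not} free: for a $\sigma$-subspace polynomial with kernel $U$, the value $s(x)/x$ ranges over a determined set, and in fact $c_0 = -\N_{\sigma}(U)$ is (up to sign/a norm-type expression) the product of the nonzero kernel elements' images, whose norm is forced. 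More concretely, $c_0$ equals $(-1)^{r}$ times the determinant ratio appearing in the Proposition's formula evaluated at $x=0$, and its $\N_{q^m/q}$-class is constrained so that $c_0$ takes at most $q^{m(g-1)}\cdot(\text{something bounded})$ — the cleanest route is to note the number of admissible values of $(c_0,\dots,c_{g-1})$ is at most $q^{m(g-1)}$ after accounting for the single algebraic constraint linking $c_0$ to the others (or to the norm of the kernel).

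Then I would apply pigeonhole: partitioning $\tilde{\mathcal{P}}_{r,\sigma}$ according to the value of the last $g$ coefficients, some class $\mathcal{F}$ satisfies
\[ |\mathcal{F}| \;\geq\; \frac{|\tilde{\mathcal{P}}_{r,\sigma}|}{(\text{number of value-classes})} \;\geq\; \frac{\qbin{n}{r}{q}}{q^{m(g-1)}}, \]
which is exactly the claimed bound. All polynomials in $\mathcal{F}$ then coincide on their last $g$ $\sigma$-coefficients by construction.

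The main obstacle is establishing that the number of distinct value-tuples for the last $g$ coefficients is at most $q^{m(g-1)}$ rather than the naive $q^{mg}$; this is where the monic normalization and the rigidity of subspace polynomials must be exploited. I expect the gain of one factor of $q^m$ to come from the constant term: the relation $\dim_{\F_q}\ker = r$ together with the adjoint/norm identity $\{s(x)/x : x\in\F_{q^m}^*\}=\{\hat{s}(x)/x\}$ from \eqref{eq:dimkeradj} forces $c_0$ to lie in a set of bounded size once $c_1,\dots,c_{g-1}$ are fixed, effectively removing one degree of freedom. Making this quantitative — verifying that fixing $c_1,\dots,c_{g-1}$ leaves at most a bounded (absorbed) number of choices for $c_0$, or equivalently that the map to $(c_0,\dots,c_{g-1})$ has image of size at most $q^{m(g-1)}$ — is the crux; the remaining pigeonhole step is routine.
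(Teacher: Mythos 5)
Your skeleton (identify $|\tilde{\mathcal{P}}_{r,\sigma}|=\qbin{n}{r}{q}$ via the subspace correspondence, then pigeonhole over the values of the distinguished coefficients) is exactly the right one, and matches the argument of \cite[Lemma 5]{TrombZullo}. But you have misread which coefficients are the ``last $g$'' ones, and this misreading creates the gap you yourself flag as the crux. The last $g$ $\sigma$-coefficients are the \emph{top} ones, i.e.\ the coefficients of $x^{\sigma^r},x^{\sigma^{r-1}},\dots,x^{\sigma^{r-g+1}}$, not the low-order tuple $(c_0,\dots,c_{g-1})$. With this reading the saved factor of $q^m$ is immediate and requires no rigidity argument at all: the polynomials in $\tilde{\mathcal{P}}_{r,\sigma}$ are monic, so the coefficient of $x^{\sigma^r}$ is identically $1$, and only the $g-1$ coefficients of $x^{\sigma^{r-1}},\dots,x^{\sigma^{r-g+1}}$ vary, each in $\F_{q^m}$. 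Hence there are at most $q^{m(g-1)}$ possible value-tuples, and pigeonhole over the $\qbin{n}{r}{q}$ polynomials of $\tilde{\mathcal{P}}_{r,\sigma}$ yields a class $\mathcal{F}$ with $|\mathcal{F}|\geq \qbin{n}{r}{q}/q^{m(g-1)}$. This reading is confirmed by how the lemma is used in the proof of Theorem \ref{thm:listdecGab}, where the class $\mathcal{F}$ is written explicitly with free low-order coefficients $(a_0,\dots,a_{h-1})$ and fixed top coefficients $b_h,\dots,b_{n-\tau-1}$ together with the leading $1$.

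Beyond the misreading, the rescue you propose for your version is genuinely false: it is not true that fixing $c_1,\dots,c_{g-1}$ leaves boundedly many admissible constant terms $c_0$. The family $\mathcal{N}$ of Proposition \ref{prop:binom} is a counterexample: the binomials $x^{\sigma^t}-a_0x$ with $\N_{q^n/q^t}(a_0)=1$ are $\frac{q^n-1}{q^t-1}$ distinct $\sigma$-subspace polynomials agreeing on \emph{all} coefficients except $c_0$, so the image of the map to $(c_0,\dots,c_{g-1})$ can be as large as roughly $q^{n-t}$ even with $c_1,\dots,c_{g-1}$ all pinned to $0$. (Your reading also fails already at $g=1$: the lemma with $g=1$ asserts all of $\tilde{\mathcal{P}}_{r,\sigma}$ agrees on one coefficient, which is true for the leading coefficient by monicity but false for $c_0$.) The norm/adjoint identity \eqref{eq:dimkeradj} constrains nothing of the required kind here. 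Replace the speculative constraint on $c_0$ by the observation that monicity fixes the leading coefficient, and your pigeonhole argument becomes a complete and correct proof.
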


The family of $\sigma$-subspace polynomials is closed under the adjoint operation (in the sense of Proposition \ref{prop:adjointnew}) and the composition (up to a scalar multiple) with the maps of shape $\tau_{\alpha}\colon x \in \F_{q^m}\mapsto \alpha x \in \F_{q^m}$, with $\alpha\in \F_{q^m}$.

\begin{proposition}\label{prop:adjointnew}
Let $f(x)=\sum_{i=0}^{k}a_i x^{\sigma^i} \in \mathcal{L}_{m,\sigma}$ be a $\sigma$-subspace polynomial. The $\sigma$-linearized polynomial $g(x)=(\hat{f}(x))^{\sigma^k}=\sum_{i=0}^k (a_{k-i}x)^{\sigma^i}$ is a $\sigma$-subspace polynomial.
\end{proposition}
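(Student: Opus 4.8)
The plan is to verify, for $g$, the three conditions that define a $\sigma$-subspace polynomial: that it is a $\sigma$-polynomial of the stated shape, that its $\sigma$-degree equals $k$, and that it has maximum kernel, i.e.\ $\dim_{\F_q}(\ker(g))=\deg_{\sigma}(g)$, together with the monic normalization. Most of the substance is already packaged in Remark \ref{rem:adjoint}, so the proposition is essentially an \emph{upgrade} of that remark: Remark \ref{rem:adjoint} only asserts that $g$ is a $\sigma$-polynomial with $\dim_{\F_q}(\ker(g))=\dim_{\F_q}(\ker(f))$, and what remains is to pin down $\deg_{\sigma}(g)$ and to address monicity.

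First I would invoke Remark \ref{rem:adjoint} to get both the closed form $g(x)=\sum_{i=0}^k (a_{k-i}x)^{\sigma^i}$ and the kernel identity $\dim_{\F_q}(\ker(g))=\dim_{\F_q}(\ker(f))$. For completeness I would re-derive the form by expanding $\hat f(x)=\sum_{i=0}^k a_i^{\sigma^{m-i}}x^{\sigma^{m-i}}$, applying $\sigma^k$ to every coefficient and exponent, and reducing the resulting automorphism exponents modulo $m$ using $\sigma^m=\mathrm{id}$: for $0\le i\le k\le m-1$ one has $\sigma^{m-i+k}=\sigma^{k-i}$, and reindexing by $k-i$ yields the stated expression. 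Two facts are then immediate from this shape: $\deg_{\sigma}(g)\le k$, and the coefficient of $x=x^{\sigma^0}$ (the $i=0$ term) equals $a_k=1$.

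Next I would determine the $\sigma$-degree, which is the key step. Since $f$ is a $\sigma$-subspace polynomial of $\sigma$-degree $k$ it has maximum kernel, so $\dim_{\F_q}(\ker(f))=k$, and hence $\dim_{\F_q}(\ker(g))=k$ by the previous paragraph. Because the number of roots of a $\sigma$-polynomial is bounded above by its $\sigma$-degree (\cite[Theorem 5]{GQ2009}), we get $\deg_{\sigma}(g)\ge \dim_{\F_q}(\ker(g))=k$; combined with $\deg_{\sigma}(g)\le k$ this forces $\deg_{\sigma}(g)=k$, and in particular the leading coefficient $a_0^{\sigma^k}$ is nonzero (equivalently, $a_0\neq 0$, as one also sees directly: if $a_0=0$ then $f$ factors through the bijection $x\mapsto x^{\sigma}$ as a $\sigma$-polynomial of $\sigma$-degree $k-1$, which would cap $\dim_{\F_q}(\ker(f))$ at $k-1$). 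Thus $\dim_{\F_q}(\ker(g))=\deg_{\sigma}(g)=k$, so $g$ has maximum kernel.

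The one point I would treat with care—and the place I expect a reader to stumble—is the monic normalization. The leading coefficient of $g$ is $a_0^{\sigma^k}$, which the degree computation guarantees is nonzero but which need not equal $1$; indeed the adjoint-and-twist operation swaps the roles of the top and bottom coefficients (note that the bottom coefficient of $g$ is $a_k=1$, not the top one). Consequently the genuinely monic representative is $(a_0^{\sigma^k})^{-1}g$, which has the same $\sigma$-degree and the same kernel as $g$, hence is the monic $\sigma$-subspace polynomial with $\ker(g)$ as its kernel. I would therefore phrase the conclusion as: up to this leading scalar, $g$ is a $\sigma$-subspace polynomial, the substantive content—preservation of the kernel dimension under the adjoint-and-twist map—being already delivered by \eqref{eq:dimkeradj} and Remark \ref{rem:adjoint}.
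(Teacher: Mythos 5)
Your proof is correct and takes essentially the same route as the paper, whose entire proof is the one-line observation that $\dim_{\F_q}(\ker(f))=k$ together with an appeal to Remark \ref{rem:adjoint}. Your two additions---deducing $a_0\neq 0$, and hence $\deg_{\sigma}(g)=k$, from the maximality of $\ker(f)$ via the bound of \cite[Theorem 5]{GQ2009}, and noting that $g$ has leading coefficient $a_0^{\sigma^k}$ and so is a $\sigma$-subspace polynomial only after the monic normalization $(a_0^{\sigma^k})^{-1}g$---address points the paper silently glosses over (its definition does require monicity), so your write-up is strictly more careful than the original without departing from its argument.
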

\begin{proof}
As, $\dim_{\fq}(\ker(f))=k$, the assertion follows by Remark \ref{rem:adjoint}.
\end{proof}

\begin{proposition} \label{polalphav}
Let $f(x) \in \mathcal{L}_{m,\sigma}$ be a $\sigma$-subspace polynomial having $\sigma$-degree $k$. For every $\alpha \in \F_{q^m}^*$, $f_\alpha(x)=\alpha^{\sigma^k}f(\alpha^{-1}x)$ is a $\sigma$-subspace polynomial.
In particular, the monomials that appear in $f_{\alpha}(x)$ with non-zero coefficient are the same as $f(x)$.
\end{proposition}

\begin{proof}
An element $y \in \F_{q^m}$ is in $\ker(f_{\alpha})$ if and only if $f(\alpha^{-1}y)=0$, that happens if and only if $y \in \alpha \ker(f)$.
Beause of $\deg_{\sigma}(f_{\alpha}(x))=\deg_{\sigma}(f(x))$, we get that $f_{\alpha}(x)$ is a $\sigma$-subspace polynomial.
\end{proof}

\subsection{Known examples of $\sigma$-subspace polynomials}\label{Sec:subpoly}

As seen before, it is quite easy to construct examples of $\sigma$-subspace polynomials, whereas it is hard (and more useful) to construct them by using relations on their coefficients.
By looking to the latter point of view, very few examples are known. This subsection is devoted to the description of the actually known families of subspace polynomials. From now we suppose that $\sigma$ is also a generator of $\mathrm{Gal}(\F_{q^m}:\F_q)$.

\begin{proposition} \label{prop:binom}
Let $t$ be a positive integer such that $1 \leq t \leq n-1$ and $t\mid n$, let $s$ be a positive integer coprime with $n$ and let $\sigma\colon x \in \fqn \mapsto x^{q^s}\in \fqn$.
The set
\[
\mathcal{N}=\{x^{\sigma^t}-a_0x \colon a_0 \in \F_{q^n} \mbox{and } \N_{q^n/q^t}(a_0)=1\} \subseteq \mathcal{L}_{n,\sigma}
\]
is a family of $\sigma$-subspace polynomials of $\sigma$-degree $t$ and
\[ |\mathcal{N}| = \frac{q^{n}-1}{q^t-1}.\] 
In particular, if $m$ is a multiple of $n$ and $\gcd(s,m)=1$, then $\mathcal{N}$ is a set of $\frac{q^n-1}{q^t-1}$ $\sigma$-subspace polynomials of $\mathcal{L}_{m,\sigma}$ whose elements have $\sigma$-degree $t$.
\end{proposition}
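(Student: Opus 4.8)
The plan is to fix an $a_0$ with $\N_{q^n/q^t}(a_0)=1$ and analyze the set of roots in $\F_{q^n}$ of $f(x)=x^{\sigma^t}-a_0x=x^{q^{st}}-a_0x$, showing it is a $t$-dimensional $\F_q$-subspace. First I would record that $f$ is monic of $\sigma$-degree $t$ and $\F_q$-linear (it is additive, and $\lambda^{q^{st}}=\lambda$ for $\lambda\in\F_q$), so its kernel is automatically an $\F_q$-subspace; by the degree bound of \cite{GQ2009} this kernel has at most $q^t$ elements, so to conclude that $f$ is a $\sigma$-subspace polynomial of $\sigma$-degree $t$ it suffices to exhibit exactly $q^t$ roots.

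The nonzero roots of $f$ are the solutions of $x^{q^{st}-1}=a_0$, so I would study the group homomorphism $\psi\colon x\in\F_{q^n}^*\mapsto x^{q^{st}}/x\in\F_{q^n}^*$. Its kernel is $\{x:x^{q^{st}}=x\}=\F_{q^{\gcd(st,n)}}^*$, and since $t\mid n$ and $\gcd(s,n)=1$ one computes $\gcd(st,n)=t$; hence $\ker\psi=\F_{q^t}^*$ has order $q^t-1$ and $\mathrm{Im}\,\psi$ has order $\frac{q^n-1}{q^t-1}$. Thus $f$ has a nonzero root precisely when $a_0\in\mathrm{Im}\,\psi$, in which case the fibre $\psi^{-1}(a_0)$ has $q^t-1$ elements and $f$ has exactly $q^t$ roots in $\F_{q^n}$, the maximum allowed by its $\sigma$-degree.

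It then remains to identify $\mathrm{Im}\,\psi$ with the norm-one set. Writing $\N_{q^n/q^t}(a_0)=a_0^{(q^n-1)/(q^t-1)}$, the set $\{a_0:\N_{q^n/q^t}(a_0)=1\}$ is the kernel of the surjective norm $\N_{q^n/q^t}\colon\F_{q^n}^*\to\F_{q^t}^*$, hence a subgroup of $\F_{q^n}^*$ of order $\frac{q^n-1}{q^t-1}$. Both $\mathrm{Im}\,\psi$ and this norm-one subgroup have order $\frac{q^n-1}{q^t-1}$, and since $\F_{q^n}^*$ is cyclic it has a unique subgroup of that order, so the two coincide. (Alternatively, from $t\mid st$ one has $q^t-1\mid q^{st}-1$, whence $\N_{q^n/q^t}(x^{q^{st}-1})=x^{(q^n-1)\cdot\frac{q^{st}-1}{q^t-1}}=1$, giving $\mathrm{Im}\,\psi\subseteq\ker\N_{q^n/q^t}$, and equality follows by the order count.) Consequently $a_0$ yields a $\sigma$-subspace polynomial of $\sigma$-degree $t$ exactly when $\N_{q^n/q^t}(a_0)=1$, and since distinct $a_0$ give distinct polynomials, $|\mathcal{N}|=\frac{q^n-1}{q^t-1}$.

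Finally, for the \emph{in particular} statement, when $m$ is a multiple of $n$ and $\gcd(s,m)=1$ the map $\sigma\colon x\mapsto x^{q^s}$ is a generator of $\mathrm{Gal}(\F_{q^m}:\F_q)$ and $\F_{q^n}\subseteq\F_{q^m}$, so Remark \ref{new} applies verbatim: each element of $\mathcal{N}$, being a $\sigma$-subspace polynomial of $\sigma$-degree $t$ in $\mathcal{L}_{n,\sigma}$, remains one in $\mathcal{L}_{m,\sigma}$. The only genuinely delicate point is the number-theoretic identity $\gcd(st,n)=t$, which pins down $\ker\psi$ and thereby the order of the image; everything else is the uniqueness of subgroups in a cyclic group together with the standard norm-kernel count.
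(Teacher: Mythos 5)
Your proof is correct, but it takes a genuinely different route from the paper's. The paper disposes of the main assertion in one line by citing the known characterization of $\sigma$-binomials with maximum kernel (\cite[Corollary 3.5]{teoremone}), and then, exactly as you do, deduces the ``in particular'' part from Remark \ref{new}; no argument for the binomial fact itself appears in the paper. You instead prove that black box from first principles: the identity $\gcd(st,n)=t$ (where both hypotheses $t\mid n$ and $\gcd(s,n)=1$ are used, via $\gcd(st,tn')=t\gcd(s,n')$) pins down $\ker\psi=\F_{q^t}^*$ for $\psi\colon x\in\F_{q^n}^*\mapsto x^{q^{st}-1}\in\F_{q^n}^*$, the first isomorphism theorem gives $\lvert\mathrm{Im}\,\psi\rvert=\frac{q^n-1}{q^t-1}$, and uniqueness of subgroups of each order in the cyclic group $\F_{q^n}^*$ (or your direct containment $\mathrm{Im}\,\psi\subseteq\ker\N_{q^n/q^t}$ plus the order count) identifies $\mathrm{Im}\,\psi$ with the norm-one set; the fibre count then yields exactly $q^t$ roots, which is the maximum permitted by the degree bound of \cite{GQ2009}, and $\F_q$-linearity makes the root set a $t$-dimensional $\F_q$-subspace. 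Your version buys self-containedness, makes explicit where each hypothesis enters, and in fact establishes slightly more than the proposition asks: $x^{\sigma^t}-a_0x$ has maximum kernel \emph{if and only if} $\N_{q^n/q^t}(a_0)=1$, i.e.\ the norm condition is necessary as well as sufficient. The paper's citation-based proof buys brevity and defers the case analysis to a reference that treats general $\sigma$ uniformly. The cardinality count of $\mathcal{N}$ and the lift from $\mathcal{L}_{n,\sigma}$ to $\mathcal{L}_{m,\sigma}$ via Remark \ref{new} agree in both treatments.
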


\begin{proof}
It is well-known that any binomials $x^{\sigma^t}-a_0x \in \mathcal{L}_{n,\sigma}$ with $\N_{q^n/q^t}(a_0)=1$ is a $\sigma$-subspace polynomial, see e.g.\ \cite[Corollary 3.5]{teoremone}. 
The second part follows from Remark \ref{new}.
\end{proof}



Similarly, in \cite{raviv_2016} and in \cite{TrombZullo} the following family of $\sigma$-subspace polynomials was introduced.

\begin{proposition} (\cite[Construction 2]{raviv_2016},\cite[Proposition III.3]{TrombZullo}) \label{lemma:trace}
Let $t$ be a positive integer such that $1 \leq t \leq n-1$ and $t\mid n$, let $s$ be a positive integer coprime with $n$ and let $\sigma\colon x \in \fqn \mapsto x^{q^s}\in \fqn$.
The set
\[ \mathcal{T}=\left\{ \sum_{i=0}^{\frac{n}t-1} \beta^{\sigma^{it}-\sigma^{n-t}} x^{\sigma^{it}} \colon \beta \in \F_{q^n}^* \right\}\subset \mathcal{L}_{n,\sigma} \]
is a set of $\sigma$-subspace polynomials whose elements have $\sigma$-degree $n-t$ and
\[ |\mathcal{T}| = \frac{q^{n}-1}{q^t-1}. \]
In particular, if $m$ is a multiple of $n$ and $\gcd(s,m)=1$, then $\mathcal{T}$ can be also seen as a set of $\frac{q^n-1}{q^t-1}$ $\sigma$-subspace polynomials of $\mathcal{L}_{m,\sigma}$ whose elements have $\sigma$-degree $n-t$.
\end{proposition}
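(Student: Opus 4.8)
The plan is to exhibit each element of $\mathcal{T}$ as a scalar twist of a single relative trace map, to invoke Proposition \ref{polalphav} for the subspace property, and then to count the fibres of the map $\beta \mapsto f_\beta$.

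First I would analyze the ``untwisted'' polynomial $T(x)=\sum_{i=0}^{n/t-1} x^{\sigma^{it}}$ obtained by formally setting the exponent of $\beta$ to zero. Since $\gcd(s,n)=1$ and $t\mid n$ force $\gcd(s,n/t)=1$, the map $i\mapsto si \bmod (n/t)$ permutes $\{0,\ldots,n/t-1\}$, so that $\{sti \bmod n \colon 0\le i\le n/t-1\}=\{0,t,2t,\ldots,n-t\}$; hence $T$ coincides, as an $\F_q$-linear map, with the relative trace $\T_{q^n/q^t}$. Because this trace is surjective onto $\F_{q^t}$, its kernel has $\F_q$-dimension $n-t$, while its $\sigma$-degree (the top term $x^{\sigma^{n-t}}$, attained at $i=n/t-1$) is also $n-t$; as $T$ is moreover monic, it is a $\sigma$-subspace polynomial of $\sigma$-degree $n-t$.

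Next I would recognize $f_\beta$ as a twist of $T$. A direct expansion gives $f_\beta(x)=\beta^{-\sigma^{n-t}}T(\beta x)=\sum_{i=0}^{n/t-1}\beta^{\sigma^{it}-\sigma^{n-t}}x^{\sigma^{it}}$, that is, $f_\beta=T_{\beta^{-1}}$ in the notation of Proposition \ref{polalphav} (taking $\alpha=\beta^{-1}$ and $k=n-t$). That proposition then immediately yields that each $f_\beta$ is a $\sigma$-subspace polynomial of $\sigma$-degree $n-t$, monicity and degree being also visible from the leading coefficient $\beta^{\sigma^{n-t}-\sigma^{n-t}}=1$. This settles every claim except the cardinality. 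For the count I would determine exactly when $f_\beta=f_{\beta'}$. Comparing the coefficients of $x=x^{\sigma^0}$ forces $(\beta/\beta')^{1-\sigma^{n-t}}=1$, i.e.\ $\gamma:=\beta/\beta'$ lies in the fixed field of $\sigma^{n-t}$; since $\gcd(s(n-t),n)=\gcd(n-t,n)=\gcd(t,n)=t$, this fixed field is $\F_{q^t}$, so $\gamma\in\F_{q^t}^*$. Conversely, any $\gamma\in\F_{q^t}^*$ is fixed by each $\sigma^{it}$ and by $\sigma^{n-t}$ (as $t\mid n-t$ and $\F_{q^t}$ is the fixed field of $\sigma^t$), whence $\gamma^{\sigma^{it}-\sigma^{n-t}}=1$ and all coefficients of $f_{\gamma\beta'}$ and $f_{\beta'}$ agree. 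Thus $f_\beta=f_{\beta'}$ if and only if $\beta\beta'^{-1}\in\F_{q^t}^*$, so $\beta\mapsto f_\beta$ induces a bijection from $\F_{q^n}^*/\F_{q^t}^*$ onto $\mathcal{T}$ and $|\mathcal{T}|=(q^n-1)/(q^t-1)$. Finally, the statement over $\mathcal{L}_{m,\sigma}$ follows verbatim from Remark \ref{new}, given $n\mid m$ and $\gcd(s,m)=1$.

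The step I expect to require the most care is the identification of $T$ with $\T_{q^n/q^t}$ and the accompanying fixed-field computations: one must track the interaction between the twist parameter $s$ and the divisor $t$ through the various greatest common divisors, and it is precisely there (rather than in any deep structural argument) that the hypotheses $t\mid n$ and $\gcd(s,n)=1$ are genuinely used.
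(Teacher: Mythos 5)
Your proof is correct, and note that the paper itself states this proposition without proof (it is imported from \cite{raviv_2016} and \cite{TrombZullo}); your argument --- realizing each element as the twist $\beta^{-\sigma^{n-t}}T(\beta x)$ of the relative trace, invoking Proposition \ref{polalphav}, and counting fibres via the fixed fields of $\sigma^{t}$ and $\sigma^{n-t}$ --- is exactly the technique the paper uses for its generalized family in Theorem \ref{teo:familyquasisubfield}, so it is essentially the same approach. All the delicate steps check out: $\gcd(st,n)=t\gcd(s,n/t)=t$ identifies $T$ with $\T_{q^n/q^t}$ and the fixed field of $\sigma^t$ with $\F_{q^t}$, $\gcd(s(n-t),n)=\gcd(t,n)=t$ handles the coefficient comparison, the fibres of $\beta\mapsto f_\beta$ are the cosets of $\F_{q^t}^*$ giving $|\mathcal{T}|=\frac{q^n-1}{q^t-1}$, and the passage to $\mathcal{L}_{m,\sigma}$ is correctly reduced to Remark \ref{new}.
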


Generalizing the results in \cite{McGuireMueller}, in \cite{SantZullo1} was presented a new class of $\sigma$-subspace polynomials.

\begin{theorem}(\cite[Theorem 1.1]{McGuireMueller},\cite[Theorem 1.3]{SantZullo1})\label{th:trin1}
Let $n=(t-1)t+1$ and $f(x)=x^{\sigma^t}-bx^\sigma-ax \in \mathcal{L}_{n,\sigma}$. If
\begin{itemize}
  \item $\N_{q^n/q}(a)=(-1)^{t-1}$;
  \item $b=-a^{\frac{\sigma^n-\sigma}{\sigma^t-1}}$;
  \item $t-1$ is a power of the characteristic of $\F_{q^n}$,
\end{itemize}
then $f(x)$ is a $\sigma$-subspace polynomial.
\end{theorem}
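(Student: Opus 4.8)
The plan is to use the matrix reformulation of the maximum-kernel property. Recall that a monic $\sigma$-polynomial $f(x)=x^{\sigma^t}+\sum_{i=0}^{t-1}c_i x^{\sigma^i}\in\mathcal{L}_{n,\sigma}$ has maximum kernel (that is, $\dim_{\F_q}\ker f=t$) if and only if the $\sigma$-twisted norm of its companion matrix is the identity,
\[
M\,M^{\sigma}\,M^{\sigma^2}\cdots M^{\sigma^{n-1}}=I_t ,
\]
where $M^{\sigma^i}$ denotes the entrywise action of $\sigma^i$; this is the companion-matrix form of the maximum-kernel criterion behind \cite{teoremone}, and it is equivalent to saying that $f$ right-divides $x^{\sigma^n}-x$ in $\mathcal{L}_{n,\sigma}$. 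For our trinomial $c_0=-a$, $c_1=-b$ and $c_2=\cdots=c_{t-1}=0$, so
\[
M=\begin{pmatrix} 0 & 0 & \cdots & 0 & a\\ 1 & 0 & \cdots & 0 & b\\ 0 & 1 & \cdots & 0 & 0\\ \vdots & & \ddots & & \vdots\\ 0 & 0 & \cdots & 1 & 0\end{pmatrix}
\]
is a subdiagonal shift perturbed only in the last column, which carries $a$ in position $(1,t)$ and $b$ in position $(2,t)$. Once the identity $M M^{\sigma}\cdots M^{\sigma^{n-1}}=I_t$ is verified under the three hypotheses, Remark \ref{new} upgrades the conclusion from $\mathcal{L}_{n,\sigma}$ to $\mathcal{L}_{m,\sigma}$ whenever $\F_{q^n}\subseteq\F_{q^m}$.

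First I would exploit the very sparse shape of $M$. Because $M$ is a subdiagonal shift perturbed only in the last column, the partial products $M M^{\sigma}\cdots M^{\sigma^{j-1}}$ can be tracked entry by entry, and the choice $n=t(t-1)+1$, for which $n\equiv 1\pmod t$, is exactly what makes the indices close up after $n$ steps. Concretely, I expect each multiplication by $M^{\sigma^i}$ to move weight along the subdiagonal while folding in one factor among the conjugates $a^{\sigma^i},b^{\sigma^i}$ carried by the last column, so that after $n$ steps every entry of the product is a monomial in these conjugates. The off-diagonal entries should vanish for indexing reasons coming from $n\equiv 1\pmod t$, while the diagonal entries reduce to a single scalar.

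Then I would substitute the two algebraic hypotheses. The relation $b=-a^{(\sigma^n-\sigma)/(\sigma^t-1)}=-\bigl(a^{\,1+\sigma^{t}+\cdots+\sigma^{(t-2)t}}\bigr)^{\sigma}$ is precisely the identity that rewrites the accumulated $b$-factors as products of conjugates of $a$, collapsing the two kinds of contribution into a single product; and $\N_{q^n/q}(a)=a^{\,1+\sigma+\cdots+\sigma^{n-1}}=(-1)^{t-1}$ is what normalises the resulting scalar, sign included, to $1$. The hypothesis that $t-1$ is a power of $p$ is what I expect to be \textbf{the main obstacle and the crux of the argument}: in characteristic $p$ it forces $(u+v)^{t-1}=u^{t-1}+v^{t-1}$ (equivalently $\binom{t-1}{j}\equiv 0\pmod p$ for $0<j<t-1$), and this additivity is exactly what is needed for the many cross-terms produced in the matrix product to cancel in pairs rather than accumulate; without it the off-diagonal entries would not vanish.

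Assembling these computations gives $M M^{\sigma}\cdots M^{\sigma^{n-1}}=I_t$, hence $\dim_{\F_q}\ker f=t$ and $f$ is a $\sigma$-subspace polynomial. An alternative route, avoiding matrices, is to perform the right Euclidean division of $x^{\sigma^n}-x$ by $f$ in the skew-polynomial ring $\mathcal{L}_{n,\sigma}$ and to show directly that the remainder is $0$; the same two algebraic hypotheses, together with the additivity coming from $t-1=p^e$, are what force the successive remainders to telescope to zero. Either way, the bookkeeping of the conjugates $a^{\sigma^i},b^{\sigma^i}$ across roughly $t^2$ reduction steps is the technical heart of the proof.
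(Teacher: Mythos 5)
You should first note that the paper itself contains no proof of Theorem \ref{th:trin1}: it is imported wholesale from \cite[Theorem 1.1]{McGuireMueller} and \cite[Theorem 1.3]{SantZullo1}, so there is no internal argument to match step by step. Your choice of machinery is nevertheless the right one, and it is the same machinery underlying the cited sources and used elsewhere in this paper (see the proof of Proposition \ref{prop:subspacehuang}): the criterion of \cite[Theorem 1.2]{teoremone} and \cite[Theorem 7]{McGuireSheekey}, which says that a monic $\sigma$-polynomial of $\sigma$-degree $t$ has maximum kernel if and only if $C_f\, C_f^{\sigma}\cdots C_f^{\sigma^{n-1}}=I_t$, equivalently that $f$ right-divides $x^{\sigma^n}-x$; and the final appeal to Remark \ref{new} to pass from $\mathcal{L}_{n,\sigma}$ to $\mathcal{L}_{m,\sigma}$ is correct.

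The genuine gap is that the central identity $M M^{\sigma}\cdots M^{\sigma^{n-1}}=I_t$ is never verified: everything past the setup is phrased as ``I expect'', ``should vanish'', ``is what I expect to be the crux'', and the theorem's entire content lives in exactly that computation, which you yourself call the technical heart. Worse, one structural claim you lean on is false as stated: because $M$ carries \emph{two} nonzero entries in its last column ($a$ in row $1$ and $b$ in row $2$), each entry of a partial product $M M^{\sigma}\cdots M^{\sigma^{j-1}}$ is a sum over paths, hence a genuine polynomial in the conjugates $a^{\sigma^i}, b^{\sigma^i}$, not a monomial; the off-diagonal entries of the full product cannot vanish ``for indexing reasons coming from $n\equiv 1 \pmod t$'' alone. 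They collapse only after substituting $b=-a^{\frac{\sigma^n-\sigma}{\sigma^t-1}}$ and exploiting $(u+v)^{t-1}=u^{t-1}+v^{t-1}$ for $t-1$ a power of $p$, together with the norm condition $\N_{q^n/q}(a)=(-1)^{t-1}$ to fix the diagonal, and it is precisely this roughly $t^2$-step bookkeeping (or the equivalent right Euclidean division of $x^{\sigma^n}-x$ by $f$, which you mention but likewise do not perform) that constitutes the proofs in \cite{McGuireMueller} and \cite{SantZullo1}. As it stands, your text is a plausible plan consistent with how those papers proceed, but not a proof.
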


Therefore, we can derive the following set of $\sigma$-subspace polynomials.

\begin{corollary}\label{coro:trin}
Let $t$ and $n \in \mathbb{Z}^+$ be positive integers such that $n=(t-1)t+1$ and $t-1$ is a power of the characteristic of $\F_{q^n}$.
Let $s$ be a positive integer coprime with $n$ and let $\sigma\colon x \in \fqn \mapsto x^{q^s}\in \fqn$.
The set
\[ \mathrm{Tri}_1=\left\{ x^{\sigma^t}-bx^\sigma-ax \colon a,b \in \F_{q^n}, \,\, \N_{q^n/q}(a)=(-1)^{t-1} \right.\]
\[\left. \text{and}\,\, b=-a^{\frac{\sigma^n-\sigma}{\sigma^t-1}} \right\} \subset {\mathcal L}_{n,q}  \]
is a set of $\frac{q^n-1}{q-1}$ $\sigma$-subspace polynomials whose elements have $\sigma$-degree $t$. In particular, if $m$ is a multiple of $n$ and $\gcd(s,m)=1$, then $\mathrm{Tri}_1$ is a set of $\frac{q^n-1}{q-1}$ $\sigma$-subspace polynomials of $\mathcal{L}_{m,\sigma}$ whose elements have $\sigma$-degree $t$.
\end{corollary}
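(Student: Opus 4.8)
The plan is to read the corollary off Theorem~\ref{th:trin1} by checking that its three hypotheses hold for every element of $\mathrm{Tri}_1$, and then to carry out a counting argument and a transfer to $\mathcal{L}_{m,\sigma}$. Since $n=(t-1)t+1$ and $t-1$ is assumed to be a power of the characteristic of $\F_{q^n}$, the first and third bullet conditions of Theorem~\ref{th:trin1} are satisfied, while the two defining relations of $\mathrm{Tri}_1$, namely $\N_{q^n/q}(a)=(-1)^{t-1}$ and $b=-a^{(\sigma^n-\sigma)/(\sigma^t-1)}$, are exactly the remaining hypotheses. Hence each $f(x)=x^{\sigma^t}-bx^\sigma-ax$ in $\mathrm{Tri}_1$ is a $\sigma$-subspace polynomial, and since its leading term $x^{\sigma^t}$ has coefficient $1$, its $\sigma$-degree is $t$.

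Before counting I would verify that $b$ is well defined, i.e.\ that the exponent $(\sigma^n-\sigma)/(\sigma^t-1)$ makes sense as a $\sigma$-power. Writing $n-1=t(t-1)$ shows $t\mid n-1$, so $\sigma^t-1$ divides $\sigma^{n-1}-1$ and therefore $\sigma^n-\sigma=\sigma(\sigma^{n-1}-1)$; concretely $\frac{\sigma^n-\sigma}{\sigma^t-1}=\sum_{j=0}^{t-2}\sigma^{jt+1}$, so that $a^{(\sigma^n-\sigma)/(\sigma^t-1)}=\prod_{j=0}^{t-2}a^{\sigma^{jt+1}}$ is a genuine element of $\F_{q^n}$ depending only on $a$. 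Consequently each admissible $a$ determines a unique $b$, hence a unique polynomial; and since $-a$ is the coefficient of $x$, distinct values of $a$ yield distinct polynomials. Thus $\mathrm{Tri}_1$ is in bijection with the set $\{a\in\F_{q^n}\colon \N_{q^n/q}(a)=(-1)^{t-1}\}$.

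To finish the count I would invoke that the norm map $\N_{q^n/q}\colon \F_{q^n}^*\to\F_q^*$ is a surjective group homomorphism whose kernel has order $\frac{q^n-1}{q-1}$; since $(-1)^{t-1}\in\F_q^*$, its fibre over $(-1)^{t-1}$ has exactly this size, giving $|\mathrm{Tri}_1|=\frac{q^n-1}{q-1}$. Finally, for the \emph{in particular} statement, the hypotheses $n\mid m$ and $\gcd(s,m)=1$ guarantee that $\F_{q^n}\subseteq\F_{q^m}$ and that $\sigma$ generates $\mathrm{Gal}(\F_{q^m}:\F_q)$, so Remark~\ref{new} applies and each polynomial of $\mathrm{Tri}_1$ remains a $\sigma$-subspace polynomial in $\mathcal{L}_{m,\sigma}$, keeping both its $\sigma$-degree $t$ and the cardinality $\frac{q^n-1}{q-1}$. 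The only genuinely nontrivial input is the cited Theorem~\ref{th:trin1}; within the corollary itself, the main point to handle carefully is the exactness of the division defining $b$ (guaranteed by $t\mid n-1$), the rest being the routine norm-fibre count and the transfer via Remark~\ref{new}.
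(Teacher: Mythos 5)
Your proposal is correct and follows essentially the same route as the paper: invoke Theorem~\ref{th:trin1} for the subspace-polynomial property, count the norm fibre over $(-1)^{t-1}$ to get $\frac{q^n-1}{q-1}$, and transfer to $\mathcal{L}_{m,\sigma}$ via Remark~\ref{new}. Your additional verification that the exponent $\frac{\sigma^n-\sigma}{\sigma^t-1}=\sum_{j=0}^{t-2}\sigma^{jt+1}$ is well defined (since $t\mid n-1$) is a sound detail the paper leaves implicit.
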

\begin{proof}
By Theorem \ref{th:trin1}, it follows that the polynomials in $\mathrm{Tri}_1$ are $\sigma$-subspace polynomials in $\mathcal{L}_{n,\sigma}$ and the cardinality of $\mathrm{Tri}_1$ exactly coincides with the number of elements of $\F_{q^n}$ with norm over $\fq$ equals to $(-1)^{t-1}$, that is $\frac{q^n-1}{q-1}$.  
The second part follows from Remark \ref{new}.
\end{proof}

In the same hypothesis of Corollary \ref{coro:trin}, by Remark \ref{rem:adjoint}, we obtain a further family of $\sigma$-subspace trinomials.

\begin{corollary} \label{cor:trinadj}
Let $t$ and $n \in \mathbb{Z}^+$ be positive integers such that $n=(t-1)t+1$ and $t-1$ is a power of the characteristic of $\F_{q^n}$.
Let $s$ be a positive integer coprime with $n$ and let $\sigma\colon x \in \fqn \mapsto x^{q^s}\in \fqn$.
The set
\[ \widehat{\mathrm{Tri}_1}=\left\{ x+(-b)^{\sigma^{t-1}}x^{\sigma^{t-1}}+(-a)^{\sigma^t}x^{\sigma^t} \colon a,b \in \F_{q^n}, \,\, \N_{q^n/q}(a)=(-1)^{t-1} \right.\]
\[\left. \text{and}\,\, b=-a^{\frac{\sigma^n-\sigma}{\sigma^t-1}} \right\} \subset {\mathcal L}_{n,\sigma} \]
is a set of $\frac{q^n-1}{q-1}$ $\sigma$-subspace polynomials whose elements have $\sigma$-degree $t$. In particular, if $m$ is a multiple of $n$ and $\gcd(s,m)=1$, then $\widehat{\mathrm{Tri}_1}$ is a set of $\frac{q^n-1}{q-1}$ $\sigma$-subspace polynomials of $\mathcal{L}_{m,\sigma}$ whose elements have $\sigma$-degree $t$.
\end{corollary}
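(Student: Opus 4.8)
The plan is to realize $\widehat{\mathrm{Tri}_1}$ as the image of $\mathrm{Tri}_1$ under the coefficient-reversing adjoint transformation of Remark \ref{rem:adjoint}, which by Proposition \ref{prop:adjointnew} (applied over $\F_{q^n}$, where $\sigma$ generates $\mathrm{Gal}(\F_{q^n}:\F_q)$ by hypothesis) carries $\sigma$-subspace polynomials to $\sigma$-subspace polynomials. Concretely, I would fix $a,b\in\F_{q^n}$ with $\N_{q^n/q}(a)=(-1)^{t-1}$ and $b=-a^{\frac{\sigma^n-\sigma}{\sigma^t-1}}$, so that by Corollary \ref{coro:trin} the trinomial $f(x)=x^{\sigma^t}-bx^\sigma-ax\in\mathcal{L}_{n,\sigma}$ is a $\sigma$-subspace polynomial of $\sigma$-degree $t$. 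Writing $f(x)=\sum_{i=0}^{t}a_i x^{\sigma^i}$, its only nonzero coefficients are $a_0=-a$, $a_1=-b$ and $a_t=1$, and I would then apply the map $f\mapsto g(x)=\sum_{i=0}^{t}(a_{t-i}x)^{\sigma^i}$.

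Since the nonzero $a_j$ occur only at $j=0,1,t$, the surviving summands of $g$ are those with $t-i\in\{0,1,t\}$, i.e.\ $i\in\{t,t-1,0\}$. A direct substitution then gives $g(x)=a_t x+(a_1 x)^{\sigma^{t-1}}+(a_0 x)^{\sigma^t}=x+(-b)^{\sigma^{t-1}}x^{\sigma^{t-1}}+(-a)^{\sigma^t}x^{\sigma^t}$, which is precisely the generic element of $\widehat{\mathrm{Tri}_1}$. By Remark \ref{rem:adjoint} together with \eqref{eq:dimkeradj} we have $\dim_{\F_q}(\ker g)=\dim_{\F_q}(\ker f)=t$, so $g$ is a $\sigma$-subspace polynomial; and because $\N_{q^n/q}(a)=(-1)^{t-1}\neq 0$ forces $a\neq 0$, the leading coefficient $(-a)^{\sigma^t}$ is nonzero, whence $\deg_{\sigma}(g)=t$, as claimed.

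For the cardinality, I would observe that $f$, and therefore $g$, is completely determined by $a$ alone, since $b$ is an explicit function of $a$; thus $a\mapsto g$ is a bijection from the admissible values of $a$ onto $\widehat{\mathrm{Tri}_1}$, giving $|\widehat{\mathrm{Tri}_1}|=|\mathrm{Tri}_1|=\frac{q^n-1}{q-1}$, the number of $a\in\F_{q^n}$ with $\N_{q^n/q}(a)=(-1)^{t-1}$ (the norm $\N_{q^n/q}\colon\F_{q^n}^*\to\F_q^*$ being surjective with fibres of size $\frac{q^n-1}{q-1}$). Finally, the statement over $\mathcal{L}_{m,\sigma}$ follows verbatim from Remark \ref{new}: when $n\mid m$ and $\gcd(s,m)=1$, $\sigma$ generates $\mathrm{Gal}(\F_{q^m}:\F_q)$ and each $\sigma$-subspace polynomial of $\mathcal{L}_{n,\sigma}$ stays a $\sigma$-subspace polynomial of $\mathcal{L}_{m,\sigma}$.

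There is no substantive obstacle here, as the whole argument is an instance of the adjoint construction already validated in Proposition \ref{prop:adjointnew}. The one place deserving care is the index bookkeeping in passing from $f$ to $g$ — in particular checking that the constant term of $f$ produces the top term $x^{\sigma^t}$ of $g$, while the coefficient of $x^\sigma$ lands on the exponent $\sigma^{t-1}$ rather than $\sigma$ — and confirming that the parameterization by $a$ is not collapsed by the transformation, so that no count is lost in computing $|\widehat{\mathrm{Tri}_1}|$.
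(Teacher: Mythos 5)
Your proposal is correct and follows exactly the route the paper intends: the paper gives no separate proof for Corollary \ref{cor:trinadj}, merely stating that it follows from Corollary \ref{coro:trin} via the adjoint construction of Remark \ref{rem:adjoint} (i.e.\ Proposition \ref{prop:adjointnew}), together with Remark \ref{new} for the extension to $\mathcal{L}_{m,\sigma}$. Your explicit coefficient computation $g(x)=x+(-b)^{\sigma^{t-1}}x^{\sigma^{t-1}}+(-a)^{\sigma^t}x^{\sigma^t}$, the kernel-dimension and degree checks, and the injectivity of $a\mapsto g$ for the count $\frac{q^n-1}{q-1}$ simply flesh out the details the paper leaves implicit.
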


A further family of $\sigma$-subspace polynomials has been found in \cite{SantZullo1}.

\begin{theorem}\cite[Theorem 3.1]{SantZullo1}\label{th:trin2}
Let $n=t^2-1$ and $f(x)=x^{\sigma^t}-bx^\sigma-ax \in \mathcal{L}_{n,q}$, with $q$ power of $2$. If
\begin{itemize}
  \item $\N_{q^n/q}(a)=1$;
  \item $b=a^{-\frac{\sigma^{t^2}-\sigma^t}{\sigma^t-1}}$;
  \item $t$ is a power of $2$,
\end{itemize}
then $f(x)$ is a $\sigma$-subspace polynomial of $\mathcal{L}_{n,\sigma}$.
\end{theorem}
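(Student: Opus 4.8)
The plan is to show that $f$ has maximum kernel, i.e.\ that $\dim_{\F_q}\ker f=t=\deg_\sigma f$, by reducing to the companion-matrix criterion that already governs the binomial case (\cite[Corollary 3.5]{teoremone}) and Theorem~\ref{th:trin1}. Writing the root recurrence $x^{\sigma^t}=bx^\sigma+ax$ in companion form
$$C=\begin{pmatrix} 0 & 1 & 0 & \cdots & 0 \\ 0 & 0 & 1 & \cdots & 0 \\ \vdots & \vdots & & \ddots & \vdots \\ 0 & 0 & 0 & \cdots & 1 \\ a & b & 0 & \cdots & 0 \end{pmatrix}\in \fqn^{t\times t},$$
the polynomial $f$ has maximum kernel if and only if the $\sigma$-norm $C\,C^{\sigma}\cdots C^{\sigma^{n-1}}$ equals $I_t$ (equivalently, $f$ right-divides $x^{\sigma^n}-x$ in $\mathcal{L}_{n,\sigma}$). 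Thus the entire statement collapses to verifying the single identity $C\,C^\sigma\cdots C^{\sigma^{n-1}}=I_t$ from the three hypotheses, with $n=t^2-1$.

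First I would exploit the arithmetic of $n=t^2-1$ to simplify the ambient setting. Since $t^2\equiv 1 \pmod{n}$ and $\gcd(st,n)=1$, the map $\rho:=\sigma^t$ is again a generator of $\mathrm{Gal}(\fqn:\F_q)$ and satisfies $\rho^t=\sigma$; rewriting $f$ through $\rho$ puts the two far-apart terms $x$ and $x^{\sigma^t}$ into adjacent $\rho$-positions, which should expose a telescoping structure. In parallel, I would record the coefficient data: expanding the exponent gives $b=\bigl(a^{\sigma^t+\sigma^{2t}+\cdots+\sigma^{(t-1)t}}\bigr)^{-1}$, a partial norm of $a$ along the subgroup generated by $\rho$, and a short computation (using $\sigma^{t^2}=\sigma$) turns this into the one-step recurrence $b^{\sigma^t-1}=a^{\sigma^t-\sigma}$, while $\N_{q^n/q}(a)=1$ controls the global norm. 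The role of $t$ being a power of $p=2$ (with $q$ even) is that $z\mapsto z^t$ is additive, so raising the recurrence — or the partial products below — to the $t$-th power commutes with all $\F_q$-linear manipulations; this is exactly what lets the $t$-step partial norm defining $b$ interact cleanly with the iteration.

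Concretely, I would track the partial products $P_j:=C\,C^{\sigma}\cdots C^{\sigma^{j-1}}$ and search for a closed form at the milestones $j=t,2t,\ldots$, since $n=t^2-1=(t-1)(t+1)$ decomposes naturally into blocks of $t$ steps. Because only two entries of the last row of $C$ are nonzero, each factor $C^{\sigma^j}$ merely shifts the coordinate window by one and injects $a^{\sigma^j},b^{\sigma^j}$, so the entries of $P_j$ are short sums of products of consecutive $\sigma$-conjugates of $a$ and $b$, which I would identify with partial norms of $a$. Substituting the prescribed value of $b$ and using $b^{\sigma^t-1}=a^{\sigma^t-\sigma}$ should make these partial norms collapse telescopically from one block of $t$ steps to the next, yielding an inductive identity for $P_{kt}$ in terms of $P_{(k-1)t}$; the constraint $\N_{q^n/q}(a)=1$ should then force the terminal product $P_n=I_t$.

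The main obstacle is precisely this bookkeeping: controlling the $t\times t$ product of $n=t^2-1$ twisted companion matrices and proving that, under the exact form of $b$ together with $\N_{q^n/q}(a)=1$, every off-diagonal entry of $C\,C^{\sigma}\cdots C^{\sigma^{n-1}}$ vanishes while each diagonal entry becomes $1$. I expect the delicate point to be closing the block induction at $k=t-1$, where the factorization $n=(t-1)(t+1)$ and the norm hypothesis must conspire to produce exactly the identity rather than a residual scalar. As a fallback, should the matrix recursion prove unwieldy, I would instead carry out the divisibility formulation directly, performing the right-division of $x^{\sigma^n}-x$ by $f$ in $\mathcal{L}_{n,\sigma}$ and showing the remainder vanishes under the hypotheses, again using the additivity of $z\mapsto z^t$ to keep the intermediate coefficients tractable.
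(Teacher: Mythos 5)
First, a point of reference: this paper does not prove Theorem \ref{th:trin2} at all --- it is quoted from \cite[Theorem 3.1]{SantZullo1}, so the benchmark is the cited source, whose burden is precisely the verification you defer. Your setup is sound as far as it goes: the equivalence between maximum kernel, right-divisibility of $x^{\sigma^n}-x$ by $f$, and the norm condition $C\,C^{\sigma}\cdots C^{\sigma^{n-1}}=I_t$ is exactly the criterion of \cite{teoremone} and \cite{McGuireSheekey} (applicable here since $\gcd(s,n)=1$, so $\sigma$ generates the Galois group; the same criterion is used in this paper's proof of Proposition \ref{prop:subspacehuang}), and your coefficient identities are correct: $b^{-1}=a^{\sigma^{t}+\sigma^{2t}+\cdots+\sigma^{(t-1)t}}$, and since $\sigma^{t^2}=\sigma^{n+1}=\sigma$ on $\F_{q^n}$, indeed $b^{\sigma^t-1}=a^{\sigma^t-\sigma}$. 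The observation that $\rho=\sigma^t$ is again a generator with $\rho^t=\sigma$ is also correct and is a reasonable structural idea.

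However, the proposal stops where the theorem begins, and this is a genuine gap rather than routine bookkeeping. Everything after the reduction is conjectural: you assert that the entries of the partial products $P_{kt}$ ``should'' collapse telescopically, but no closed form for $P_{kt}$ is exhibited, no induction is set up, and the terminal step is explicitly left open (your own ``delicate point \dots\ rather than a residual scalar'' is exactly the content of the theorem, since $\N_{q^n/q}(a)=1$ must be shown to kill precisely that scalar and nothing else must survive off the diagonal). More seriously, the hypotheses that make the statement true are never actually deployed: the additivity of $z\mapsto z^t$ (i.e., $t$ a power of $2$ in characteristic $2$) is invoked only as a slogan and is never applied to a single identity, yet the theorem is asserted only under these hypotheses and a correct argument must show concretely where they enter --- a generic trinomial $x^{\sigma^t}-bx^{\sigma}-ax$ with $\N_{q^n/q}(a)=1$ and the prescribed $b$ does not have maximum kernel without them. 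The fallback (direct right-division of $x^{\sigma^n}-x$ by $f$) suffers from the same defect: it is named, not performed. In short, the reduction and the coefficient algebra are correct and consistent with the method of the cited source, but the central computation --- the only nontrivial part --- is missing, so the proposal cannot be credited as a proof.
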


As a consequence, we get the following two sets of $\sigma$-subspace polynomials, obtained also using Remark \ref{rem:adjoint}.

\begin{corollary}\label{coro:trin2}
Let $t$ and $n \in \mathbb{Z}^+$ be positive integers such that $n=t^2-1$ and let $t$ and $q$ be both a power of $2$.
Let $s$ be a positive integer coprime with $n$ and let $\sigma\colon x \in \fqn \mapsto x^{q^s}\in \fqn$.
The set
\[ \mathrm{Tri}_2=\left\{ x^{\sigma^t}-bx^\sigma-ax \colon a,b \in \F_{q^n}, \,\, \N_{q^n/q}(a)=1 \right.\]
\[\left. \text{and}\,\, b=a^{-\frac{\sigma^{t^2}-\sigma^t}{\sigma^t-1}} \right\} \subset {\mathcal L}_{n,q} \]
and the set
\[ \widehat{\mathrm{Tri}_2}=\left\{ x+(-b)^{\sigma^{t-1}}x^{\sigma^{t-1}}+(-a)^{\sigma^t}x^{\sigma^t} \colon a,b \in \F_{q^n}, \,\, \N_{q^n/q}(a)=1 \right.\]
\[\left. \text{and}\,\, b=a^{-\frac{\sigma^{t^2}-\sigma^t}{\sigma^t-1}} \right\} \subset {\mathcal L}_{n,q}  \]
are sets of $\frac{q^n-1}{q-1}$ $\sigma$-subspace polynomials whose elements have $\sigma$-degree $t$. In particular, if $m$ is a multiple of $n$ and $\gcd(s,m)=1$, then $\mathrm{Tri}_2$ and $\widehat{\mathrm{Tri}_2}$ can be also seen as sets of $\frac{q^n-1}{q-1}$ $\sigma$-subspace polynomials of $\mathcal{L}_{m,\sigma}$ whose elements have $\sigma$-degree $t$.
\end{corollary}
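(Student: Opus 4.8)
The plan is to handle the two families separately, reducing each to an already-established result together with a short cardinality count, exactly along the lines of Corollaries \ref{coro:trin} and \ref{cor:trinadj} for the $\mathrm{Tri}_1$ family.

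First, for $\mathrm{Tri}_2$: every element has the shape $f(x)=x^{\sigma^t}-bx^\sigma-ax$ with $\N_{q^n/q}(a)=1$, $b=a^{-\frac{\sigma^{t^2}-\sigma^t}{\sigma^t-1}}$, and $t,q$ powers of $2$, so Theorem \ref{th:trin2} immediately gives that each such $f$ is a $\sigma$-subspace polynomial of $\mathcal{L}_{n,\sigma}$; its monic leading term $x^{\sigma^t}$ forces $\deg_\sigma(f)=t$. For the count I would observe that each admissible $a$ determines $b$ uniquely, so the polynomials are parametrized by $\{a\in\F_{q^n}^*:\N_{q^n/q}(a)=1\}$. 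Since $\N_{q^n/q}\colon\F_{q^n}^*\to\F_q^*$ is surjective with kernel of size $\frac{q^n-1}{q-1}$, there are exactly $\frac{q^n-1}{q-1}$ such $a$, and distinct values of $a$ give distinct polynomials (they already differ in the coefficient of $x$); hence $|\mathrm{Tri}_2|=\frac{q^n-1}{q-1}$.

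Second, for $\widehat{\mathrm{Tri}_2}$, I would apply Proposition \ref{prop:adjointnew} (equivalently Remark \ref{rem:adjoint}) to each $f\in\mathrm{Tri}_2$. Writing $f=\sum_{i=0}^t a_i x^{\sigma^i}$ with $a_0=-a$, $a_1=-b$, $a_t=1$ and $a_i=0$ otherwise, the associated polynomial $g(x)=(\hat f(x))^{\sigma^t}=\sum_{i=0}^t (a_{t-i}x)^{\sigma^i}$ collapses, once the vanishing coefficients indexed $2\le t-i\le t-1$ are discarded, to $g(x)=x+(-b)^{\sigma^{t-1}}x^{\sigma^{t-1}}+(-a)^{\sigma^t}x^{\sigma^t}$, which is precisely the generic element of $\widehat{\mathrm{Tri}_2}$. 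By Proposition \ref{prop:adjointnew}, $g$ is a $\sigma$-subspace polynomial, and since $\N_{q^n/q}(a)=1$ forces $a\neq0$, its leading term is $(-a)^{\sigma^t}x^{\sigma^t}$, so $\deg_\sigma(g)=t$. The assignment $f\mapsto g$ is a bijection (adjunction followed by composition with the invertible $x\mapsto x^{\sigma^t}$), whence $|\widehat{\mathrm{Tri}_2}|=|\mathrm{Tri}_2|=\frac{q^n-1}{q-1}$.

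Finally, for the "in particular" statement I would invoke Remark \ref{new}: when $n\mid m$ and $\gcd(s,m)=1$, the map $\sigma\colon x\mapsto x^{q^s}$ generates $\mathrm{Gal}(\F_{q^m}:\F_q)$ and $\F_{q^n}\subseteq\F_{q^m}$, so each $\sigma$-subspace polynomial of $\mathcal{L}_{n,\sigma}$ remains one in $\mathcal{L}_{m,\sigma}$, with the same $\sigma$-degree $t$ and unchanged cardinality. I do not expect any genuine obstacle: the only points needing care are the bookkeeping in the adjoint computation (checking that the intermediate coefficients really do vanish, including the degenerate case $t=2$ where the middle range is empty) and the injectivity/surjectivity behind the count $\frac{q^n-1}{q-1}$, both of which are routine and identical to the $\mathrm{Tri}_1$ case.
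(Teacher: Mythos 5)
Your proposal is correct and matches the paper's (implicit) argument exactly: the paper derives $\mathrm{Tri}_2$ directly from Theorem \ref{th:trin2} with the same norm-one count $\frac{q^n-1}{q-1}$ as in Corollary \ref{coro:trin}, obtains $\widehat{\mathrm{Tri}_2}$ via the adjoint construction of Remark \ref{rem:adjoint} (your coefficient bookkeeping for $g(x)=(\hat f(x))^{\sigma^t}$ is the intended computation), and handles the passage to $\mathcal{L}_{m,\sigma}$ by Remark \ref{new}. Your extra care about the degenerate case $t=2$ and the injectivity of $a\mapsto f$ is sound but routine, so there is nothing to add.
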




\subsection{Generalizing the trace function}\label{sec:gentrace}

Huang et al. in \cite{HKP}, introduce the following family of $\sigma$-subspace polynomials. 

\begin{theorem}\cite[Lemma 4.3]{HKP} \label{teo:subspacehuang} Let $p$ be a prime, $q$ be a power of $p$ and $k$ be a positive integer. Consider $q'=p^r$, where $r$ is a non-negative integer, and let $p_i=1+q'+q'^2\ldots+q'^i$, with $i \in \{0,\ldots,k\}$. Then the $q$-polynomial
$$f(x)=x+\sum_{i=0}^{k-1} x^{q^{p_i}}$$
is a subspace polynomial of $\mathcal{L}_{n,q}$, with $n=p_k$.
\end{theorem}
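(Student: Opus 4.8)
The plan is to pass from the $q$-polynomial $f$ to an ordinary polynomial, reduce the statement to a divisibility, and then settle that divisibility by a separability argument together with a direct computation on the roots. \emph{Reduction.} Since every coefficient of $f$ lies in the prime field, the assignment $x^{q^j}\mapsto Y^j$ identifies the ring of $q$-polynomials with coefficients in $\F_q$, under addition and composition, with the commutative ring $\F_q[Y]$, sending composition to multiplication; under it $f$ becomes $F(Y)=1+\sum_{i=0}^{k-1}Y^{p_i}$, of degree $p_{k-1}$ (the $q$-degree of $f$), while $x^{q^{p_k}}-x$ becomes $Y^{p_k}-1$. A monic $q$-polynomial is a subspace polynomial of $\mathcal{L}_{n,q}$ with $n=p_k$ exactly when it divides $x^{q^{p_k}}-x$ compositionally, i.e.\ when $F(Y)\mid Y^{p_k}-1$ in $\F_q[Y]$; so it suffices to prove this divisibility. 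As $x^{q^{p_k}}-x$ is separable (equivalently $Y^{p_k}-1$ is, since $p_k\equiv 1\pmod p$), the kernel of $f$ then automatically lands in $\F_{q^{p_k}}=\F_{q^n}$ and has $\F_q$-dimension equal to its $q$-degree $p_{k-1}$, which is exactly the maximum-kernel property.

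\emph{A recursion and separability.} Writing $F_k:=F$ and using $p_i=1+q'p_{i-1}$ together with the fact that $(\cdot)^{q'}$ is the additive Frobenius in characteristic $p$ fixing the $\F_p$-coefficients, one obtains the clean recursion $F_k(Y)=1+Y\,F_{k-1}(Y^{q'})=1+Y\,F_{k-1}(Y)^{q'}$, with $F_0=1$. Differentiating and using $q'\equiv 0\pmod p$ gives $F_k'=F_{k-1}(Y^{q'})=F_{k-1}^{q'}$; a common irreducible factor of $F_k$ and $F_k'$ would divide $F_{k-1}$, and then the recursion would force $F_k\equiv 1$ modulo it, which is absurd. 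Hence $\gcd(F_k,F_k')=1$ and $F_k$ is squarefree.

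\emph{The roots are $p_k$-th roots of unity.} Let $\theta$ be a root of $F_k$; then $\theta\neq 0$ since $F_k(0)=1$. Set $v_j=\theta^{(q')^{j}}$ and $u_i=\prod_{j=0}^{i}v_j=\theta^{p_i}$ (with $u_{-1}=1$), so that $F_k(\theta)=0$ reads $\sum_{i=0}^{k-1}u_i=-1$. Raising this to the $q'$-th power (Frobenius, additive and fixing $-1$) and using the shift identity $u_i^{q'}=u_{i+1}/\theta$ turns it into $\sum_{i=1}^{k}u_i=-\theta=-u_0$, whence $\sum_{i=0}^{k}u_i=0$; subtracting the original relation $\sum_{i=0}^{k-1}u_i=-1$ leaves $u_k=\theta^{p_k}=1$. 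Thus every root of $F_k$ is a root of the separable polynomial $Y^{p_k}-1$, and as $F_k$ is squarefree we conclude $F_k\mid Y^{p_k}-1$. Translating back, $f$ divides $x^{q^{p_k}}-x$, so its kernel is an $\F_q$-subspace of $\F_{q^n}$ of dimension $p_{k-1}$, i.e.\ $f$ has maximum kernel and is a subspace polynomial of $\mathcal{L}_{n,q}$.

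The main obstacle is this last step. A direct induction trying to factor $Y^{p_k}-1=F_kD_k$ stalls: matching the recursions for $F_k$ and for $Y^{p_k}-1$ shows that $F_k\mid Y^{p_k}-1$ is equivalent to a congruence $D_{k-1}^{q'}\equiv Y-1\pmod{F_k}$ on the cofactor, which is itself equivalent to the conclusion one is after, so the induction is circular. The Frobenius-shift computation on the individual roots is precisely what breaks this circularity, and the separability of $F_k$ (read off from $F_k'=F_{k-1}^{q'}$) is what upgrades mere \emph{containment of roots} into genuine \emph{divisibility}.
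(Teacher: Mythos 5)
The paper offers no proof of this statement at all: it is imported verbatim from \cite{HKP}, so there is no internal argument to compare against. The closest the paper comes is the proof of Proposition \ref{prop:extensionnew}, which takes the present theorem as a black box and manipulates it through exactly the dictionary you set up (pass to conventional polynomials via \cite[Theorem 3.62]{lidl_finite_1997} and test divisibility of $Y^{p_k}-1$). Your proposal in effect supplies the missing proof of the imported lemma, and its core is correct and fits the paper's framework: the recursion $F_k(Y)=1+Y\,F_{k-1}(Y)^{q'}$ is right (from $p_i=1+q'p_{i-1}$ and Frobenius fixing the $\F_p$-coefficients), the shift identity $u_i^{q'}=u_{i+1}/\theta$ is right (from $p_{i+1}=q'p_i+1$), and the chain $\sum_{i=0}^{k-1}u_i=-1 \Rightarrow \sum_{i=1}^{k}u_i=-u_0 \Rightarrow u_k=\theta^{p_k}=1$ is airtight, so every root of $F_k$ is a $p_k$-th root of unity; combined with $\gcd(F_k,F_k')=\gcd(F_k,F_{k-1}^{q'})=1$, this gives $F_k\mid Y^{p_k}-1$ and hence the maximum-kernel property.

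One boundary case escapes your argument as written: $r=0$, i.e.\ $q'=1$, which the statement explicitly allows. There, two of your claims fail: the derivative formula $F_k'=F_{k-1}^{q'}$ uses $q'\equiv 0\pmod p$, which needs $r\geq 1$; and squarefreeness of $F_k$ is simply false when $p\mid k+1$ --- for instance $p=2$, $k=3$ gives $F_3=1+Y+Y^2+Y^3=(1+Y)^3$ --- while your parenthetical ``$p_k\equiv 1\pmod p$'' also breaks, since $p_k=k+1$ there. So for $r=0$ containment of roots no longer upgrades to divisibility and your mechanism stalls. The repair is one line: for $q'=1$ one has $F_k=1+Y+\cdots+Y^k$ and the identity $Y^{k+1}-1=(Y-1)F_k$, i.e.\ the classical trace case that the paper itself points out immediately after the theorem ($f=\mathrm{Tr}_{q^n/q}$ when $r=0$). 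With that caveat inserted, your proof covers the full statement, and for the genuinely new range $r\geq 1$ the separability bootstrap $F_k'=F_{k-1}^{q'}$ together with the Frobenius-shift computation on roots is a clean, self-contained route; your closing remark correctly diagnoses why a naive induction on the cofactor of $Y^{p_k}-1$ would be circular.
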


Note that those polynomials generalize the trace function, in the sense that when $r=0$ it follows that $f(x)=\mathrm{Tr}_{q^n/q}(x)$. In the following proposition we will make use of \cite[Theorem 3.62]{lidl_finite_1997} which states that for any $f(x)=\sum_{i=0}^k a_ix^{q^i}, g(x)=\sum_{i=0}^h a_ix^{q^i} \in \mathcal{L}_{n,q}$ then 
\[ f(x) \mid g(x) \Leftrightarrow \sum_{i=0}^k a_ix^i \mid \sum_{j=0}^h b_j x^j. \]

\begin{proposition}\label{prop:extensionnew}
Let $p$ be a prime, $q$ be a power of $p$ and $k,t$ be positive integers. Consider $q'=p^r$, where $r$ is a non-negative integer, and let $p_i=1+q'+q'^2\ldots+q'^i$, with $i \in \{0,\ldots,k\}$. Then the $q$-polynomial
$$f(x)=x+\sum_{i=0}^{k-1} x^{q^{tp_i}}$$
is a subspace polynomial of $\mathcal{L}_{n,q}$, with $n=tp_k$.
\end{proposition}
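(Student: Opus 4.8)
The plan is to reduce the statement to the special case $t=1$ established in Theorem~\ref{teo:subspacehuang}, by translating everything into the language of conventional $q$-associates. Recall that a monic $q$-polynomial is a subspace polynomial of $\mathcal{L}_{n,q}$ exactly when its kernel, computed inside $\F_{q^n}$, has dimension equal to its $q$-degree. Now $f(x)=x+\sum_{i=0}^{k-1}x^{q^{tp_i}}$ is monic of $q$-degree $tp_{k-1}$, and its formal derivative equals the constant $1$ (the coefficient of $x=x^{q^0}$), since every other exponent $q^{tp_i}$ is a positive power of $p$. Hence $f$ is separable and attains the maximal number $q^{tp_{k-1}}$ of distinct roots over $\overline{\F_q}$. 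The whole content of the statement is therefore that these roots lie in $\F_{q^n}$, with $n=tp_k$; equivalently, that $f$ divides $x^{q^n}-x$ as $q$-polynomials, which by \cite[Theorem 3.62]{lidl_finite_1997} amounts to the ordinary divisibility $\tilde f(X)\mid X^{n}-1$, where $\tilde f(X)=1+\sum_{i=0}^{k-1}X^{tp_i}$ denotes the $q$-associate of $f$.

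The key observation, around which I would organise the argument, is the identity $\tilde f(X)=\tilde g(X^{t})$, where $\tilde g(X)=1+\sum_{i=0}^{k-1}X^{p_i}$ is precisely the $q$-associate of the Huang et al.\ polynomial $x+\sum_{i=0}^{k-1}x^{q^{p_i}}$ of Theorem~\ref{teo:subspacehuang}. That theorem asserts that this polynomial is a subspace polynomial of $\mathcal{L}_{p_k,q}$, so its kernel is contained in $\F_{q^{p_k}}$; feeding this once more through \cite[Theorem 3.62]{lidl_finite_1997} yields $\tilde g(X)\mid X^{p_k}-1$. Since divisibility of univariate polynomials is preserved under the substitution $X\mapsto X^{t}$, I then obtain $\tilde g(X^{t})\mid X^{tp_k}-1$, that is, $\tilde f(X)\mid X^{n}-1$.

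To finish I would run the correspondence backwards: from $\tilde f(X)\mid X^{n}-1$ and \cite[Theorem 3.62]{lidl_finite_1997} it follows that $f(x)$ divides $x^{q^n}-x$ as $q$-polynomials, so $\ker f\subseteq\F_{q^n}$. Together with the separability recorded above, the kernel of $f$ viewed inside $\F_{q^n}$ has dimension exactly $tp_{k-1}=\deg_q f$; thus $f$ has maximum kernel, and being monic it is a subspace polynomial of $\mathcal{L}_{n,q}$.

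I do not anticipate a real obstacle once the substitution identity $\tilde f(X)=\tilde g(X^{t})$ is noticed: from there the proof is a mechanical application of the $q$-associate dictionary of \cite[Theorem 3.62]{lidl_finite_1997} combined with Theorem~\ref{teo:subspacehuang}. The only step deserving attention is making the maximum-kernel conclusion explicit, namely pointing out that the coefficient of $x$ in $f$ is nonzero so that separability, and hence the full root count $q^{tp_{k-1}}$, is guaranteed rather than merely the inclusion $\ker f\subseteq\F_{q^n}$.
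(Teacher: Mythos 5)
Your proof is correct and follows essentially the same route as the paper's: reduce to the $t=1$ case of Theorem~\ref{teo:subspacehuang}, pass to conventional $q$-associates via \cite[Theorem 3.62]{lidl_finite_1997}, and note that divisibility of the associates is preserved under the substitution $X\mapsto X^t$. Your explicit separability check via the formal derivative is a harmless addition making explicit what the paper leaves implicit (it also follows immediately from $f(x)\mid x^{q^n}-x$, since the latter polynomial is separable).
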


\begin{proof}
By Theorem \ref{teo:subspacehuang}, the $q$-polynomial 
$$f(x)=x+\sum_{i=0}^{k-1} x^{q^{p_i}}$$
is a subspace polynomial of $\mathcal{L}_{h,q}$, with $h=p_{k}$. Then $f(x) \mid x^{q^h}-x$. By \cite[Theorem 3.62]{lidl_finite_1997},
$1+\sum_{i=0}^{k-1} x^{{p_i}} \mid x^h-1$ if and only if $1+\sum_{i=0}^{k-1} x^{t{p_i}} \mid x^{th}-1$, which turns out to be equivalent to  $$x+\sum_{i=0}^{k-1} x^{q^{t{p_i}}} \mid x^{q^{th}}-x.$$ Then $f(x)=x+\sum_{i=0}^{k-1} x^{q^{t{p_i}}}$ is a subspace polynomial of $\mathcal{L}_{n,q}$, with $n=tp_k$.
\end{proof}

We note that when $r=0$, then $f(x)=\mathrm{Tr}_{q^n/q^t}(x)$.

\begin{proposition} \label{prop:subspacehuang}
Let $p$ be a prime, $q$ be a power of $p$ and $k,t$ be positive integers. Consider $q'=p^r$, where $r$ is a non-negative integer, let $p_i=1+q'+q'^2\ldots+q'^i$, with $i \geq 0$, and let $n=tp_k$. 
Let $s$ be a positive integer coprime with $n$ and let $\sigma\colon x \in \fqn \mapsto x^{q^s}\in \fqn$.
Then the $\sigma$-polynomial
$$f(x)=x+\sum_{i=0}^{k-1} x^{\sigma^{tp_i}}$$
is a $\sigma$-subspace polynomial of $\mathcal{L}_{n,\sigma}$.
\end{proposition}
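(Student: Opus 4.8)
The plan is to exploit the fact that \emph{all} coefficients of $f$ lie in the prime field (they are all equal to $1$), hence are fixed by $\sigma$, and to package this into a single ring homomorphism that is completely insensitive to $s$. Concretely, I would introduce the map
$\psi\colon \F_q[y]\to (\mathcal{L}_{n,\sigma},+,\circ)$ sending $y\mapsto \sigma$, i.e. $\sum_i a_i y^i \mapsto \sum_i a_i x^{\sigma^i}$. Because every coefficient $b_j\in\F_q$ satisfies $\sigma^i\circ(b_j\,\cdot)=b_j^{\sigma^i}\,\sigma^i=b_j\,\sigma^i$, a direct check gives $\psi(PQ)=\psi(P)\circ\psi(Q)$; thus $\psi$ is a homomorphism from the commutative ring $\F_q[y]$ into the (noncommutative) composition algebra, and its image is a commutative subalgebra in which composition mirrors ordinary multiplication. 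The crucial point is that this correspondence does not depend on $s$ at all. With this notation $f=\psi(P)$, where $P(y)=1+\sum_{i=0}^{k-1}y^{tp_i}$ is monic of degree $tp_{k-1}=\deg_\sigma f$.

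Next I would isolate the only genuinely combinatorial input, namely the divisibility $P(y)\mid y^n-1$ in $\F_q[y]$. This is exactly what is proved inside Proposition \ref{prop:extensionnew}: there one shows, via the conventional-associate criterion \cite[Theorem 3.62]{lidl_finite_1997}, that $1+\sum_{i=0}^{k-1}y^{tp_i}$ divides $y^{th}-1=y^n-1$ (with $h=p_k$, so that $n=tp_k=th$). I want to stress that this is a statement purely in $\F_q[y]$, so it is inherited verbatim in the present more general setting. I would then write $y^n-1=P(y)R(y)$ with $R\in\F_q[y]$ of degree $n-tp_{k-1}=tq'^{\,k}>0$.

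Finally I would transfer the divisibility through $\psi$ to bound the kernel dimension from below. Since $\gcd(s,n)=1$, $\sigma$ generates $\mathrm{Gal}(\fqn\colon\F_q)$ and has order exactly $n$, whence $\psi(y^n-1)=\sigma^n-\mathrm{id}=0$. Applying $\psi$ to $y^n-1=P(y)R(y)$ gives $f\circ\psi(R)=\psi(PR)=0$, so that $\mathrm{Im}\,\psi(R)\subseteq\ker f$. Therefore
\[
\dim_{\F_q}\ker f\ \ge\ \mathrm{rk}\,\psi(R)\ =\ n-\dim_{\F_q}\ker\psi(R)\ \ge\ n-\deg_\sigma\psi(R)\ \ge\ n-\deg R\ =\ tp_{k-1},
\]
where the middle inequality is the root bound \cite[Theorem 5]{GQ2009} applied to $\psi(R)$. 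On the other hand the same bound gives $\dim_{\F_q}\ker f\le \deg_\sigma f=tp_{k-1}$, so equality holds and $f$ has maximum kernel; being monic, $f$ is a $\sigma$-subspace polynomial of $\mathcal{L}_{n,\sigma}$, as desired.

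I expect the main obstacle to be conceptual rather than computational: one must resist the temptation to reuse Proposition \ref{prop:extensionnew} directly, since for $s\neq 1$ the polynomial $f$ defines a genuinely different $\F_q$-linear map on $\fqn$ and its kernel dimension is not a priori the same. The key realization is that the \emph{only} fact actually needed from the Frobenius case is the divisibility $P\mid y^n-1$ among conventional associates, which lives in $\F_q[y]$ and is thus $s$-independent; the homomorphism $\psi$ together with the degree sandwich above is precisely the device that turns that $s$-free divisibility into the maximum-kernel conclusion for the chosen generator $\sigma$.
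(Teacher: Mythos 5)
Your proof is correct, and it takes a genuinely different route from the paper's. The paper's own argument regards $f$ as a $\overline{q}$-polynomial with $\overline{q}=q^s$, applies Proposition \ref{prop:extensionnew} to conclude that $f$ has maximum kernel in $\mathcal{L}_{n,\overline{q}}$, and then transfers the conclusion to $\mathcal{L}_{n,\sigma}$ via the companion-matrix characterization of maximum-kernel polynomials from \cite{teoremone} and \cite{McGuireSheekey}: since the entries of $C_f$ are Frobenius-invariant, the identity $C_f\cdot C_f^{\overline{q}}\cdots C_f^{\overline{q}^{n-1}}=I$ coincides with $C_f\cdot C_f^{\sigma}\cdots C_f^{\sigma^{n-1}}$, and the criterion read in the reverse direction yields the claim. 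You bypass that machinery entirely: you extract from the proof of Proposition \ref{prop:extensionnew} only the $s$-independent divisibility $P(y)\mid y^n-1$ in $\F_q[y]$ (itself resting on \cite[Theorem 3.62]{lidl_finite_1997}), push it through the associate homomorphism $\psi$ --- which is multiplicative precisely because the coefficients of $f$ lie in the field fixed by $\sigma$ --- and sandwich $\dim_{\F_q}\ker f$ between $n-\deg R=tp_{k-1}$ and $\deg_\sigma f=tp_{k-1}$ using \cite[Theorem 5]{GQ2009}; the hypothesis $\gcd(s,n)=1$ is exactly what licenses both applications of that root bound (and guarantees $\psi(R)\neq 0$, being monic of $\sigma$-degree $tq'^k<n$). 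Both proofs ultimately exploit the same structural feature, namely the Frobenius-invariance of the coefficients of $f$ --- the paper through the invariance of $C_f$, you through the multiplicativity of $\psi$. What the paper's route buys is brevity, by outsourcing the transfer between the $\overline{q}$- and $\sigma$-settings to a known characterization theorem; what yours buys is a more elementary, self-contained argument that does not invoke \cite{teoremone} or \cite{McGuireSheekey} at all, and that isolates a clean reusable principle: any monic $\sigma$-polynomial with coefficients in $\F_q$ whose conventional associate divides $y^n-1$ has maximum kernel whenever $\sigma$ generates $\mathrm{Gal}(\F_{q^n}\colon\F_q)$.
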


\begin{proof}
Let $\overline{q}=q^s$. Clearly, $f(x)$ can be seen as a $\overline{q}$-polynomial of $\mathcal{L}_{n,\overline{q}}$.
By Proposition \ref{prop:extensionnew} it follows that $\dim_{\F_{\overline{q}}}(\ker(f(x)))= tp_{k-1}$, that is $f(x)$ has maximum kernel as element of $\mathcal{L}_{n,\overline{q}}$.
So that, by \cite[Theorem 1.2]{teoremone} and \cite[Theorem 7]{McGuireSheekey} we have
  \[ A_n=C_f\cdot C_f^{\overline{q}}\cdot\ldots\cdot C_f^{\overline{q}^{n-1}}=I_{tp_{k-1}}, \]
  where $C_f$ is the companion matrix associated to $f(x)$.
  As $C_f\in \F_{q^n}^{tp_{k-1}\times tp_{k-1}}$, $A_n$ coincides with $C_f \cdot C_f^{\sigma} \cdot \ldots \cdot C_f^{\sigma^{n-1}}$ and from \cite[Theorem 1.2]{teoremone} and \cite[Theorem 7]{McGuireSheekey} it follows that the $\sigma$-polynomial $f(x)$ has maximum kernel, that is $f(x)$ is a $\sigma$-subspace polynomial in $\mathcal{L}_{n,\sigma}$.
\end{proof}

In the next proposition we consider the adjoint of polynomials of the previous result, giving new $\sigma$-subspace polynomials.

\begin{proposition} \label{cor:adjoinsubfield}
Let $p$ be a prime, $q$ be a power of $p$ and $k,t$ be positive integers. Consider $q'=p^r$, where $r$ is a non-negative integer, let $p_i=1+q'+q'^2\ldots+q'^i$, with $i \in \{0,\ldots,k\}$, and let $n=tp_k$. 
Let $s$ be a positive integer coprime with $n$ and let $\sigma\colon x \in \fqn \mapsto x^{q^s}\in \fqn$.
Then the $\sigma$-polynomial
$$g(x)=x^{\sigma^{tp_{k-1}}}+\sum_{i=1}^{k-1} x^{\sigma^{t(p_{k-1}-p_i)}}$$
is a $\sigma$-subspace polynomial of $\mathcal{L}_{n,\sigma}$.
\end{proposition}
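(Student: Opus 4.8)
The plan is to recognize the polynomial $g$ as the image, under the adjoint-type operation of Remark~\ref{rem:adjoint}, of the $\sigma$-subspace polynomial
\[ f(x)=x+\sum_{i=0}^{k-1} x^{\sigma^{tp_i}} \]
produced in Proposition~\ref{prop:subspacehuang}, and then to invoke Proposition~\ref{prop:adjointnew}, which asserts precisely that $f\mapsto (\hat f)^{\sigma^{\deg_\sigma f}}$ sends $\sigma$-subspace polynomials to $\sigma$-subspace polynomials. Thus the whole argument reduces to an explicit coefficient computation together with one citation.

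First I would record two features of $f$. Since the $p_i$ are strictly increasing, the $\sigma$-degrees occurring in $f$ are $0,tp_0,\dots,tp_{k-1}$ with largest one $tp_{k-1}$, so $\deg_\sigma(f)=K:=tp_{k-1}$; and every nonzero coefficient of $f$ equals $1\in\F_q$. Writing $f(x)=\sum_{j=0}^{K}a_jx^{\sigma^j}$ with $a_j=1$ for $j\in\{0\}\cup\{tp_\ell:0\le \ell\le k-1\}$ and $a_j=0$ otherwise, Remark~\ref{rem:adjoint} gives the associated polynomial
\[ (\hat f(x))^{\sigma^{K}}=\sum_{i=0}^{K}(a_{K-i}x)^{\sigma^i}=\sum_{i=0}^{K}a_{K-i}^{\sigma^i}x^{\sigma^i}. \]
Because each $a_{K-i}\in\{0,1\}\subseteq\F_q$ is fixed by $\sigma$, the coefficients are unchanged, so a nonzero term survives exactly at the reflected degrees $i=K-j$ as $j$ runs over $\{0\}\cup\{tp_\ell\}$; that is, at $i=tp_{k-1}$ (from $j=0$) and at $i=t(p_{k-1}-p_\ell)$ for $\ell=0,\dots,k-1$ (from $j=tp_\ell$). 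Collecting these terms yields the claimed polynomial $g$, and the leading term $x^{\sigma^{tp_{k-1}}}$ (coming from $a_0=1$) shows $\deg_\sigma(g)=tp_{k-1}$.

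Finally, since $f$ is a $\sigma$-subspace polynomial of $\sigma$-degree $K$, Proposition~\ref{prop:adjointnew} applies and certifies that $g=(\hat f)^{\sigma^{K}}$ is a $\sigma$-subspace polynomial of $\mathcal{L}_{n,\sigma}$, with $\dim_{\F_q}\ker(g)=\dim_{\F_q}\ker(f)=tp_{k-1}$ by Remark~\ref{rem:adjoint}. The only genuinely delicate point is the exponent-reversal bookkeeping: one must track carefully which $\sigma$-degrees $K-j$ survive and, in particular, pin down the exact range of the summation index (the term arising from $p_0$, namely $t(p_{k-1}-p_0)$, is the natural place for an off-by-one to creep in). I would therefore double-check the small case $k=2$ by hand against the literal adjoint of $f$ to confirm the stated index range before declaring the identity.
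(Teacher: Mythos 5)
Your proof is correct and is precisely the paper's argument: the paper's entire proof of this proposition is the single sentence that applying Remark \ref{rem:adjoint} to the $\sigma$-subspace polynomials of Proposition \ref{prop:subspacehuang} plus a straightforward computation gives the assertion, and you have simply carried out that computation explicitly. The off-by-one you flagged is real, but it sits in the paper's printed statement rather than in your derivation: reflecting the $k+1$ nonzero terms of $f$ also produces the term $x^{\sigma^{t(p_{k-1}-p_0)}}$, so the sum defining $g$ should start at $i=0$ as in your formula; indeed, for $k=2$ the printed formula reduces to the binomial $x^{\sigma^{t(1+q')}}+x$, whose kernel has $\F_q$-dimension at most $t\gcd(p_1,p_2)=t<tp_1$, so it cannot be a $\sigma$-subspace polynomial of $\sigma$-degree $tp_{1}$, whereas your $g(x)=x^{\sigma^{t(1+q')}}+x^{\sigma^{tq'}}+x$ is the correct adjoint. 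Your corrected index range is also the one the paper implicitly uses later: Theorem \ref{teo:usesubfield2} assumes $(\mathcal{G}_{n,t(p_{k-1}-p_0)+1,\sigma})^{\sigma^j}\subseteq \mathcal{C}$, which presupposes that the second-highest $\sigma$-degree occurring in $g$ is $t(p_{k-1}-p_0)$, i.e.\ that the $i=0$ term is present.
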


\begin{proof}
By applying Remark \ref{rem:adjoint} to the $\sigma$-subspace polynomials of Proposition \ref{prop:subspacehuang}, straightforward computation show the assertion.
\end{proof}

Now, we detect a new family of $\sigma$-subspace polynomials having the same nonzero coefficients.

\begin{theorem} \label{teo:familyquasisubfield}
Let $p$ be a prime, $q$ be a power of $p$ and $k,t$ be positive integers. Consider $q'=p^r$, where $r$ is a non-negative integer, let $p_i=1+q'+q'^2\ldots+q'^i$, with $i \in \{0,\ldots,k\}$, and let $n=tp_k$. 
Let $s$ be a positive integer coprime with $n$ and let $\sigma\colon x \in \fqn \mapsto x^{q^s}\in \fqn$.
Let $f(x)$ be as in Proposition \ref{prop:subspacehuang}, and $g(x)$ as in Proposition \ref{cor:adjoinsubfield}. Then the sets
\begin{equation} \label{eq:familyhuang}
\mathcal{Q}=\{ \beta^{-\sigma^{tp_{k-1}}} f(\beta x): \beta \in \F_{q^n}^* \},
\end{equation}
and 
\begin{equation} \label{eq:familyadjointsubfield}
    \mathcal{Q}'=\{ \beta^{-\sigma^{tp_{k-1}}} g(\beta x): \beta \in \F_{q^n}^* \}
\end{equation}
are families of $\sigma$-subspace polynomials of $\mathcal{L}_{n,\sigma}$ having size $\frac{q^n-1}{q^t-1}$.
\end{theorem}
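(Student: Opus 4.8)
The plan is to establish the two assertions separately: that every element of $\mathcal{Q}$ and of $\mathcal{Q}'$ is a $\sigma$-subspace polynomial, and that each set has exactly $(q^n-1)/(q^t-1)$ elements. The first assertion I would deduce directly from Proposition \ref{polalphav}. Both $f$ and $g$ are $\sigma$-subspace polynomials of $\sigma$-degree $tp_{k-1}$, by Propositions \ref{prop:subspacehuang} and \ref{cor:adjoinsubfield} respectively. Writing $\alpha=\beta^{-1}$, the typical element $\beta^{-\sigma^{tp_{k-1}}}f(\beta x)$ of $\mathcal{Q}$ is precisely $\alpha^{\sigma^{tp_{k-1}}}f(\alpha^{-1}x)=f_{\alpha}(x)$ in the notation of Proposition \ref{polalphav}, and likewise for $g$; hence all elements of $\mathcal{Q}$ and $\mathcal{Q}'$ are $\sigma$-subspace polynomials, each sharing the same set of nonzero monomials as $f$ (respectively $g$).

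For the cardinality I would parametrize by $\beta\in\F_{q^n}^*$ and compute the fibers of the map $\beta\mapsto\beta^{-\sigma^{tp_{k-1}}}f(\beta x)$. Expanding, the coefficient of $x^{\sigma^e}$ is $\beta^{\sigma^e-\sigma^{tp_{k-1}}}$ for each exponent $e$ occurring in $f$, namely $e\in\{0,tp_0,\ldots,tp_{k-1}\}$; crucially every such $e$ is a multiple of $t$, and the same holds for the exponents $tp_{k-1}$ and $t(p_{k-1}-p_i)$ of $g$. Thus two parameters $\beta_1=\gamma\beta_2$ give the same polynomial if and only if $\gamma^{\sigma^e-\sigma^{tp_{k-1}}}=1$ for every such $e$, and I claim this is equivalent to $\gamma\in\F_{q^t}^*$.

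Sufficiency is immediate: if $\gamma\in\F_{q^t}$ then $\sigma^{tj}(\gamma)=\gamma$ for all $j$, so every exponent (being a multiple of $t$) fixes $\gamma$ and each coefficient ratio is $1$. For necessity it suffices to read off the constant term, which is present in both $f$ and $g$ (for $g$ it is the $i=k-1$ summand, where $p_{k-1}-p_{k-1}=0$): the equation for this coefficient is $\gamma=\sigma^{tp_{k-1}}(\gamma)$, that is $\gamma\in\mathrm{Fix}(\sigma^{tp_{k-1}})$. The decisive arithmetic input here is the identity $p_k=q'p_{k-1}+1$, which yields $\gcd(p_{k-1},p_k)=1$ and hence $\gcd(tp_{k-1},n)=t\gcd(p_{k-1},p_k)=t$; combined with $\gcd(s,n)=1$ this gives $\mathrm{Fix}(\sigma^{tp_{k-1}})=\F_{q^{\gcd(tp_{k-1},n)}}=\F_{q^t}$.

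Consequently the fibers of both parametrizing maps are exactly the cosets of $\F_{q^t}^*$ inside $\F_{q^n}^*$, so $|\mathcal{Q}|=|\mathcal{Q}'|=|\F_{q^n}^*|/|\F_{q^t}^*|=(q^n-1)/(q^t-1)$. The only genuinely delicate step I anticipate is the coprimality $\gcd(p_{k-1},p_k)=1$ together with the correct identification of the fixed field of $\sigma^{tp_{k-1}}$; the coefficient bookkeeping is otherwise routine.
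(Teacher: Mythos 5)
Your proof is correct and follows essentially the same route as the paper: Proposition \ref{polalphav} gives the subspace property, and the fiber of the parametrization is computed by comparing constant coefficients, yielding $\alpha/\beta\in\F_{q^t}$. The only (harmless) variation is in the arithmetic step: the paper raises the constant-coefficient relation to the $\sigma^{tq'^k}$-th power and uses $\gcd(tq'^k,n)=t$, whereas you read off $\gamma\in\mathrm{Fix}(\sigma^{tp_{k-1}})$ directly and use $p_k=q'p_{k-1}+1$ to get $\gcd(tp_{k-1},n)=t$ --- an equivalent, slightly more direct computation.
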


\begin{proof}
By Proposition \ref{polalphav}, $$\beta^{-\sigma^{tp_{k-1}}}f(\beta x)=\beta^{-\sigma^{tp_{k-1}}+1}x+\sum_{i=0}^{k-1} \beta^{-\sigma^{tp_{k-1}}+\sigma^{tp_i}}x^{\sigma^{tp_i}}$$ 
is a $\sigma$-subspace polynomial of $\mathcal{L}_{n,\sigma}$. Now, suppose that
\begin{equation} \label{eq:equalityfamilyquasisub}
\alpha^{-\sigma^{tp_{k-1}}} f(\alpha x)=\beta^{-\sigma^{tp_{k-1}}} f(\beta x).
\end{equation}
Then comparing the coefficients of $\sigma$-degree $0$ implies $\alpha^{-\sigma^{tp_{k-1}}+1}=\beta^{-\sigma^{tp_{k-1}}+1}$. Raising to $\sigma^{tq'^k}$-th power, we obtain $\left( \frac{\alpha}{\beta} \right)^{\sigma^{tq'^k}-1}=1$, which implies $$\frac{\alpha}{\beta} \in \F_{q^{tq'^k}} \cap \F_{q^n}.$$ Then $\frac{\alpha}{\beta} \in \F_{q^t}$, since $\gcd(tq'k,tp_k)=t\gcd(q'^k,p_{k-1})=t$. Clearly if $\alpha = \gamma \beta$, with $\gamma \in \F_{q^t}$, then \eqref{eq:equalityfamilyquasisub} holds. Then $\alpha$ and $\beta$ define the same $\sigma$-subspace polynomial if and only if $\alpha/\beta \in \F_{q^t}$, so that $\vert \mathcal{Q}\vert = \frac{q^{n}-1}{q^t-1}$.
Similar arguments can be applied to the family $\mathcal{Q}'$.
\end{proof}

\section{Bounds on the list size of rank-metric codes containing Gabidulin codes}\label{sec:gen}

Our aim is to investigate rank-metric codes that contains generalized Gabidulin codes. First we recall the following result from \cite{wachter-zhe_2013}. 

\begin{theorem}\cite[Theorem 1]{wachter-zhe_2013}\label{thm:listdecGab}
Let $k, n$ and $m \in \mathbb{Z}^+$ such that $k \leq n \leq m$. Let $\mathrm{G}_{n,k,\sigma}$ be a generalized Gabidulin code with minimum distance $d=n-k+1$ obtained by evaluating $\mathcal{G}_{m,k,\sigma}\subseteq \mathcal{L}_{m,\sigma}$ over $n$ $\F_q$-linearly independent elements in $\F_{q^m}$. Let $\tau$ be an integer such that $\left\lfloor\frac{d-1}{2}\right\rfloor+1 \leq \tau \leq d-1$. Then, there exists a word $\mathbf{w} \in \F_{q^m}^n \setminus \mathrm{G}_{n,k,\sigma}$ such that
\[|\mathrm{G}_{n,k,\sigma} \cap B_{\tau}(\mathbf{w})| \geq \frac{{n \brack n-\tau}_q}{q^{m(n-\tau-k)}}.\]
\end{theorem}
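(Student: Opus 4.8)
The plan is to produce the word $\mathbf{w}$ as (minus) the evaluation of the common high-degree part of a large family of $\sigma$-subspace polynomials of $\sigma$-degree $n-\tau$, and to read the codewords lying in $B_\tau(\mathbf{w})$ off the varying low-degree parts. Write $U_S=\langle \alpha_1,\ldots,\alpha_n\rangle_{\F_q}$ and set $r=n-\tau$ and $g=n-\tau-k+1$. First I would check the admissibility hypotheses of Lemma \ref{lemma:subspaceset}: since $\tau\geq 1$ we have $r=n-\tau<n\leq m$; since $\tau\leq d-1=n-k$ we have $g\geq 1$ and $r\geq k$; and $g\leq r$ amounts to $k\geq 1$. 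Applying Lemma \ref{lemma:subspaceset} with these parameters then yields a family $\mathcal{F}$ of $\sigma$-subspace polynomials of $\sigma$-degree $r$, each with kernel contained in $U_S$, all coinciding on their top $g$ coefficients (those of $x^{\sigma^k},\ldots,x^{\sigma^{n-\tau}}$), with
\[ |\mathcal{F}|\geq \frac{\qbin{n}{n-\tau}{q}}{q^{m(n-\tau-k)}}. \]

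Next I would separate the common part from the variable part. Since every $s\in\mathcal{F}$ agrees on the coefficients of $x^{\sigma^k},\ldots,x^{\sigma^{n-\tau}}$, I can write $s=P+g_s$, where $P(x)=\sum_{i=k}^{n-\tau}c_ix^{\sigma^i}$ is the \emph{same} polynomial for every $s\in\mathcal{F}$ (of $\sigma$-degree exactly $n-\tau\geq k$, being monic) and $g_s(x)=\sum_{i=0}^{k-1}c_{i,s}x^{\sigma^i}\in\mathcal{G}_{m,k,\sigma}$. Then I set $\mathbf{w}:=-c_P=(-P(\alpha_1),\ldots,-P(\alpha_n))$. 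For each $s\in\mathcal{F}$ the evaluation $c_{g_s}=c_s-c_P$ is a genuine codeword of $\mathrm{G}_{n,k,\sigma}$, and using $c_{g_s}+c_P=c_s$ together with the rank formula $\mathrm{rk}(c_f)=n-\dim_{\F_q}(\ker(f)\cap U_S)$ I get
\[ d(c_{g_s},\mathbf{w})=\mathrm{rk}(c_{g_s}+c_P)=\mathrm{rk}(c_s)=n-\dim_{\F_q}(\ker(s)\cap U_S)=n-(n-\tau)=\tau, \]
where the last equality uses that $\ker(s)\subseteq U_S$ has dimension $r=n-\tau$. Hence every $c_{g_s}$ lies in $B_\tau(\mathbf{w})$.

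It then remains to settle two bookkeeping facts. First, $\mathbf{w}\notin\mathrm{G}_{n,k,\sigma}$: if $-c_P=c_h$ for some $h\in\mathcal{G}_{m,k,\sigma}$, then $P+h$ would vanish on the $n$-dimensional space $U_S$, forcing $\dim_{\F_q}\ker(P+h)\geq n$; but $P+h$ has $\sigma$-degree exactly $n-\tau<n$ (its leading term survives since $\deg_{\sigma}h\leq k-1<n-\tau$), so $\dim_{\F_q}\ker(P+h)\leq n-\tau$, a contradiction. Second, distinct $s$ yield distinct codewords $c_{g_s}$: if $c_{g_s}=c_{g_{s'}}$ then $g_s-g_{s'}$ vanishes on $U_S$, and $\deg_{\sigma}(g_s-g_{s'})\leq k-1<n$ forces $g_s=g_{s'}$, hence $s=s'$. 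Combining these, $B_\tau(\mathbf{w})$ contains at least $|\mathcal{F}|$ distinct codewords, which gives the claimed lower bound.

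I expect the only genuinely delicate point to be the degree bookkeeping guaranteeing $n-\tau\geq k$ (equivalently $\tau\leq d-1$): this is exactly what forces the common part $P$ to have $\sigma$-degree $\geq k$, keeping it outside $\mathcal{G}_{m,k,\sigma}$ and making $\mathbf{w}$ a non-codeword, and it is also what lets Lemma \ref{lemma:subspaceset} apply with $g\geq 1$. Everything else is a direct application of Lemma \ref{lemma:subspaceset} and of the rank formula $\mathrm{rk}(c_f)=n-\dim_{\F_q}(\ker(f)\cap U_S)$.
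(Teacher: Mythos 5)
Your proof is correct and is essentially the paper's own argument: the statement is quoted from Wachter-Zeh, but the proof the paper gives for its generalization to codes containing $(\mathcal{G}_{m,h,\sigma})^{\sigma^j}$ (which specializes to this statement at $j=0$, $h=k$) likewise applies Lemma \ref{lemma:subspaceset} with $r=n-\tau$ and $g=n-\tau-k+1$ and uses the rank formula $\mathrm{rk}(c_f)=n-\dim_{\F_q}(\ker(f)\cap U_S)$ for exactly the same distance, non-membership, and injectivity checks. The only difference is cosmetic: the paper centers the ball at $\mathbf{w}=c_R$ for a full member $R$ of the family and lists the codewords $c_{R-P}$ with $P$ ranging over the family, whereas you center at $\mathbf{w}=-c_P$ (minus the common high-degree part) and list the evaluations $c_{g_s}$ of the variable low-degree parts; the two configurations are translates of one another by a codeword, so the arguments coincide.
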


A first (natural) bound on the list size of rank-metric codes containing generalized Gabidulin codes arises from the previous result.

\begin{theorem}\label{thm:listdecGab}
Let $h,k, n$ and $m \in \mathbb{Z}^+$ such that $h \leq k \leq n \leq m$. Let $\mathcal{C}$ be a rank-metric code of $\mathcal{L}_{m,\sigma}$ and let $\mathrm{C}$ be the evaluation code over $n$ arbitrary $\F_q$-linearly independent elements $\alpha_1,\dots,\alpha_n\in \F_{q^m}$ having minimum distance $d$. Suppose that $\mathcal{C}$ contains $(\mathcal{G}_{m,h,\sigma})^{\sigma^j}$, for some $j\leq m-h$ \emph{(\footnote{Here by $(\mathcal{G}_{m,k,\sigma})^{\sigma^j}$ we mean the set $\{f(x)^{\sigma^j} \colon f(x) \in \mathcal{G}_{m,k,\sigma} \}$.})}. Let $\tau$ be an integer such that $\left\lfloor\frac{d-1}{2}\right\rfloor+1 \leq \tau \leq d-1$. Suppose that $j < \tau$. Then, there exists a word $\mathbf{w} \in \F_{q^m}^n \setminus \mathrm{C}$ such that
\[|\mathrm{C} \cap B_{\tau}(\mathbf{w})| \geq \frac{{n \brack n-\tau}_q}{q^{m(n-\tau-h)}}.\]
\end{theorem}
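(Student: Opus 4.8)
The plan is to reproduce, for the shifted Gabidulin subcode, the mechanism used in the Gabidulin case: to exhibit one word $\mathbf{w}$ together with a family of codewords lying in $(\mathcal{G}_{m,h,\sigma})^{\sigma^j}$ that are all at rank distance exactly $\tau$ from $\mathbf{w}$. Write $r=n-\tau$ and $U_S=\la\alpha_1,\ldots,\alpha_n\ra_{\F_q}$, and recall that $\mathrm{rk}(c_f)=n-\dim_{\F_q}(\ker(f)\cap U_S)$. The first observation I would make is that applying $\sigma^j$ to $\sum_{i=0}^{h-1}a_ix^{\sigma^i}$ yields $\sum_{i=0}^{h-1}a_i^{\sigma^j}x^{\sigma^{i+j}}$, so $(\mathcal{G}_{m,h,\sigma})^{\sigma^j}$ is exactly the set of $\sigma$-polynomials whose nonzero coefficients all sit in $\sigma$-degrees $j,\ldots,j+h-1$. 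Equivalently, on the level of evaluation vectors, the componentwise map $\mathbf{v}\mapsto\mathbf{v}^{\sigma^j}$ is a rank-preserving bijection of $\F_{q^m}^n$ carrying the generalized Gabidulin code $\mathrm{G}_{n,h,\sigma}$ into $\mathrm{C}$. This reduces the problem to producing the desired configuration for the unshifted code $\mathcal{G}_{m,h,\sigma}$, whose support sits at the bottom (degrees $0,\ldots,h-1$), and then conjugating everything by $\sigma^j$.

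For the unshifted problem I would use the supply of $\sigma$-subspace polynomials. Inside $\tilde{\mathcal{P}}_{r,\sigma}$, the ${n\brack r}_q$ $\sigma$-subspace polynomials of $\sigma$-degree $r$ with kernel contained in $U_S$, Lemma~\ref{lemma:subspaceset} (taken with $g=r-h+1$) produces a subfamily $\mathcal{V}$ of size at least ${n\brack r}_q/q^{m(r-h)}$ all of whose members coincide on the top $r-h+1$ coefficients, those of degrees $h,\ldots,r$. Any two members of $\mathcal{V}$ then differ only in degrees $0,\ldots,h-1$, so their difference lies in $\mathcal{G}_{m,h,\sigma}$. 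Here the hypotheses enter: $\tau\le d-1$ together with the bound $d\le n-h+1$ forced by the contained Gabidulin subcode gives $r=n-\tau\ge h$, which is precisely what makes the block $\{0,\ldots,h-1\}$ fit strictly below the leading degree $r$ and makes the exponent come out to $m(r-h)=m(n-\tau-h)$.

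I would then fix a member $V^\ast\in\mathcal{V}$ and set $\mathbf{w}=c_{(V^\ast)^{\sigma^j}}$. For each $V\in\mathcal{V}$ the polynomial $f:=V^\ast-V$ lies in $\mathcal{G}_{m,h,\sigma}$, hence $f^{\sigma^j}\in(\mathcal{G}_{m,h,\sigma})^{\sigma^j}\subseteq\mathcal{C}$ and $c_{f^{\sigma^j}}\in\mathrm{C}$; moreover $\mathbf{w}-c_{f^{\sigma^j}}=c_{(V^\ast-f)^{\sigma^j}}=c_{V^{\sigma^j}}$, and since $\ker(V^{\sigma^j})=\ker(V)$ has dimension $r$ and lies in $U_S$, this vector has rank exactly $n-r=\tau$. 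Thus the at least $|\mathcal{V}|$ distinct codewords $c_{f^{\sigma^j}}$ all lie in $B_\tau(\mathbf{w})\cap\mathrm{C}$, distinctness following from the evaluation map being a bijection on $\mathcal{C}$ (Lemma~\ref{lemma:rankmetricvect}). Finally $\mathbf{w}\notin\mathrm{C}$, since otherwise $\mathbf{w}$ and any $c_{f^{\sigma^j}}$ would be two distinct elements of $\mathrm{C}$ at distance $\tau<d$, contradicting the minimum distance; the condition $j<\tau$ serves to keep all relevant $\sigma$-degrees (in particular the shifted support block $\{j,\ldots,j+h-1\}$, which then lies below $r+j<n$) in the valid range, so that no reduction modulo $x^{\sigma^m}-x$ interferes with the bookkeeping.

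The crux, and the step I expect to be most delicate, is the family-construction of the second paragraph: one must pin exactly the right coefficients so that pairwise differences fall into the (shifted) Gabidulin support while retaining a family of size ${n\brack r}_q/q^{m(r-h)}$. The conjugation by $\sigma^j$ is what makes this clean, since it turns the interior support block $\{j,\ldots,j+h-1\}$ of the shifted code into the bottom block $\{0,\ldots,h-1\}$, to which Lemma~\ref{lemma:subspaceset} applies verbatim. Once the family is in hand, the rank computation, the distinctness of the codewords, and the verification that the center lies outside $\mathrm{C}$ are all routine.
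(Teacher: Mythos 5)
Your proposal is correct and takes essentially the same route as the paper's own proof: apply Lemma~\ref{lemma:subspaceset} with $g=n-\tau-h+1$ to obtain a family of $\sigma$-subspace polynomials of $\sigma$-degree $n-\tau$, with kernels inside $\langle \alpha_1,\dots,\alpha_n\rangle_{\F_q}$, agreeing on the top coefficients; twist the family by $\sigma^j$; take $\mathbf{w}$ to be the evaluation of one twisted member; and realize the differences as codewords of $(\mathcal{G}_{m,h,\sigma})^{\sigma^j}$ at distance exactly $\tau$ from $\mathbf{w}$. If anything, your write-up is slightly cleaner than the paper's: by applying $\sigma^j$ on the outside, i.e.\ $f(x)\mapsto f(x)^{\sigma^j}$ as in the footnote (rather than the literal composition $f\circ x^{\sigma^j}$ written in the paper, whose kernel is $\sigma^{-j}(\ker f)$ and need not lie in $U_S$), the kernels stay inside $U_S$ so the rank computations go through verbatim, and you make explicit the checks the paper leaves implicit ($n-\tau\geq h$ from $d\leq n-h+1$, the minimum-distance argument for $\mathbf{w}\notin\mathrm{C}$, and the degree bound ensuring distinct evaluations).
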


\begin{proof}
Let $\mathrm{G}$ be the evaluation code of $(\mathcal{G}_{m,k,\sigma})^{\sigma^j}$ over $\alpha_1,\ldots,\alpha_n$.
Let $\tilde{\mathcal{P}}_{n-\tau,\sigma} \subset \mathcal{L}_{m,\sigma}$ be the set of $\sigma$-subspace polynomials of $\sigma$-degree $n-\tau$ whose kernels are $(n-\tau)$-dimensional $\fq$-subspaces of $\F_{q^m}$ contained in the $\fq$-subspace $\langle \alpha_1,\dots,\alpha_n \rangle_{\F_q}$.
By Lemma \ref{lemma:subspaceset}, there exists a subset ${\cal F}$ of $\tilde{\mathcal{P}}_{n-\tau,\sigma}$ whose elements coincide on the last $n-h-\tau+1$ coefficients, with size at least
\[ \frac{{n \brack n-\tau}_q}{q^{m(n-h-\tau)}}. \]
Precisely,
\begin{small}
\[{\cal F}=\left\{ \sum _{i=0}^{h-1} a_i x^{\sigma^i}+b_{h}x^{\sigma^{h}}+\cdots+b_{n-\tau-1}x^{\sigma^{n-\tau-1}}+x^{\sigma^{n-\tau}} \colon \right.\]
\[\left. (a_0,a_1,\ldots,a_{h-1}) \in \mathcal{A}  \right\}, \]
\end{small}
where $\mathcal{A}$ is a subset of $\F_{q^m}^{h}$ such that $|\mathcal{A}| \geq \frac{{n \brack n-\tau}_q}{q^{m(n-h-\tau)}}$, and the $b_j$ are fixed elements of $\F_{q^m}$.

Let $\mathcal{F}'=\{ f \circ x^{\sigma^j} \colon f \in \mathcal{F}\}$ and note that the $\sigma$-polynomials of ${\cal F}'$ are not $\sigma$-subspace polynomials, but they still have $q^{n-\tau}$ roots over $\F_{q^m}$ and are of form
\[ a_0x^{\sigma^j}+a_1x^{\sigma^{j+1}}+\ldots+a_{h-1}x^{\sigma^{h-1+j}}+b_{h}x^{\sigma^{h+j}}+\ldots+b_{n-\tau-1}x^{\sigma^{n-\tau+j-1}}+x^{\sigma^{n-\tau+j}}. \]

Let $R$ be a polynomial in $\mathcal{F}'$. First, we observe that $c_R \notin \mathrm{C}$, since $$\mathrm{rk}(c_{R})=n-\dim_{\F_q}(\ker(R)) = \tau < d.$$
Moreover, we have that $c_{R-P} \in \mathrm{C}$, for each $P \in \mathcal{F}'$, indeed $$j \leq \deg_{\sigma}({R-P}) \leq h-1+j$$ and so $c_{R-P} \in \mathrm{G}\subseteq \mathrm{C}$.

Now we observe that $c_{R-P} \in B_{\tau}(c_R)$ for every $P \in \mathcal{F}'$ because $\mathrm{rk}(c_R-c_{R-P})=\mathrm{rk}(c_P)=\tau$. 
Finally, if $P\neq P'$, with $P,P' \in \mathcal{F}'$, we have that $c_{R-P}\neq c_{R-P'}$.
So, choosing $\mathbf{w}=c_R$,
we have that
\[
|C \cap  B_{\tau}(\mathbf{w})| \geq |\mathcal{F}'|=|\mathcal{F}| \geq  \frac{{n \brack n-\tau}_q}{q^{m(n-h-\tau)}},
\]
and the assertion follows.
\end{proof}

As a particular case, if $C=\mathrm{H}_{n,k,\sigma}(f_1,f_2)$, \cite[Theorem 9]{TrombZullo} follows.

Under the hypothesis of Theorem \ref{thm:listdecGab}, it is straightforward to show that $\mathrm{C}$ cannot be list decoded efficiently at the radius $\tau$ if
\begin{equation}\label{eq:Johnson} 
\tau \geq \frac{m+n}{2}-\sqrt{\frac{(m+n)^2}{4} -m(n-h+1-\epsilon)},
\end{equation}
where $0\leq\epsilon<1$.

\begin{remark}
Note that the above result has been obtained by adapting the proof of \cite[Theorem 1]{wachter-zhe_2013}, but it is not an its direct consequence. 
Indeed, it is not known whether or not, once \cite[Theorem 1]{wachter-zhe_2013} is applied to the subcode, the word $\mathbf{w}$ is in the entire code containing the (generalized) Gabidulin code.
\end{remark}

In the following we give a bound on the list size of rank-metric codes containing generalized Gabidulin codes relying on the existence of certain families of $\sigma$-subspace polynomials. As a consequence we get negative results on the list decodability of these codes.

Let $l$ and $h$ be two positive integers such that $h\leq l \leq n-1 $.
Denote
\begin{equation} \label{eq:insiemepol}
 \mathrm{Pol}_{l,h}=\{x^{\sigma^l}+a_{h-1}x^{\sigma^{h-1}}+\cdots+a_1x^\sigma+a_0x:a_i \in \F_{q^n}\}\subseteq \mathcal{L}_{n,\sigma}.
\end{equation}

\begin{theorem} \label{teo:main}
Let $n, m \in \Z$ be positive integers such that
$n \mid m$. Let $\mathcal{C}$ be a rank-metric code of $\mathcal{L}_{m,\sigma}$ and let $\mathrm{C}$ be
the associated evaluation code over an $\fq$-basis of $\beta \F_{q^n}$, for some $\beta \in \F_{q^m}^*$, with minimum distance $d$. Let $l$ and $h$ be positive integers such that $n-d+1\leq l\leq n-\lfloor \frac{d-1}{2} \rfloor-1$ and $l \geq h$.
Suppose that
\begin{itemize}
    \item $(\mathcal{G}_{n,h,\sigma})^{\sigma^j} \subseteq \mathcal{C}$, with $j < n-l$.
    \item There exists a subset $Sub \subseteq \mathrm{Pol}_{l,h}$ of $\sigma$-subspace polynomials of size $g$.
\end{itemize}
Then there exists a word $\mathbf{w}\in \F_{q^m}^n \setminus \mathrm{C}$ such that 
$$
\lvert \mathrm{C} \cap B_{n-l}(\mathbf{w}) \rvert \geq g.
$$
\end{theorem}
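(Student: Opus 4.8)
The plan is to imitate the proof of Theorem \ref{thm:listdecGab}, with the family $Sub$ playing the role of the coefficient-restricted family $\mathcal{F}$ used there. The first observation is that the elements of $\mathrm{Pol}_{l,h}$ already ``agree on their top part'': each has leading coefficient $1$ in front of $x^{\sigma^l}$ and zero coefficients in $\sigma$-degrees $h,\dots,l-1$, so the only freedom is in degrees $0,\dots,h-1$. Hence any two $\sigma$-subspace polynomials $s,s'\in Sub$ satisfy $\deg_\sigma(s-s')\le h-1$, which is exactly the cancellation that drives Theorem \ref{thm:listdecGab}.

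First I would relocate the kernels into the evaluation space. Write $\beta\F_{q^n}=\la\alpha_1,\dots,\alpha_n\ra_{\fq}$. For each $s\in Sub$ I apply Proposition \ref{polalphav} with $\alpha=\beta$ to obtain a $\sigma$-subspace polynomial $s_\beta$ with the same monomial support as $s$ and with $\ker(s_\beta)=\beta\ker(s)\subseteq\beta\F_{q^n}$ of dimension $l$; I then compose on the outside with $x^{\sigma^j}$, which preserves the kernel and shifts the support into $\{j,\dots,h-1+j\}\cup\{l+j\}$. Calling the resulting family $\mathcal{F}'=\{T_s:s\in Sub\}$, each $T_s$ is monic of $\sigma$-degree $l+j$, vanishes on the $l$-dimensional subspace $\beta\ker(s)\subseteq\beta\F_{q^n}$, and therefore $\mathrm{rk}(c_{T_s})=n-\dim_{\F_q}(\ker(T_s)\cap\beta\F_{q^n})=n-l$.

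Next I fix one $R\in\mathcal{F}'$ and set $\mathbf{w}=c_R$. Since $l\ge n-d+1$ gives $n-l\le d-1<d$ while $\mathbf{w}\ne 0$ and $0\in\mathrm{C}$, the minimum distance forces $\mathbf{w}\notin\mathrm{C}$. For every $T\in\mathcal{F}'$ the difference $R-T$ has $\sigma$-degree in the window $[j,\,h-1+j]$ (the leading $x^{\sigma^{l+j}}$ terms cancel and the middle degrees are absent), so I want to conclude that $c_{R-T}$ lies in the evaluation $\mathrm{G}$ of $(\mathcal{G}_{n,h,\sigma})^{\sigma^j}$, whence $c_{R-T}\in\mathrm{C}$; moreover $\mathrm{rk}(c_R-c_{R-T})=\mathrm{rk}(c_T)=n-l$, so $c_{R-T}\in B_{n-l}(\mathbf{w})$. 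As $R-T$ has $\sigma$-degree $<n$ and the $\alpha_i$ are $\fq$-linearly independent, the map $T\mapsto c_{R-T}$ is injective on $\mathcal{F}'$, and since $|\mathcal{F}'|=|Sub|=g$ this would give $|\mathrm{C}\cap B_{n-l}(\mathbf{w})|\ge g$.

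The step I expect to be the crux is precisely the claim $c_{R-T}\in\mathrm{G}$. The scaling $x\mapsto\beta x$ used to move the kernels into $\beta\F_{q^n}$ twists the coefficients of $T_s$ out of $\F_{q^n}$, so $R-T$ is no longer literally a $\sigma$-polynomial with coefficients in $\F_{q^n}$; what must be verified is that, \emph{as a function on} $\beta\F_{q^n}$, it still coincides with a genuine element of the subfield code $(\mathcal{G}_{n,h,\sigma})^{\sigma^j}$. Equivalently, via $x\mapsto\beta^{-1}x$ one may reduce to evaluation over $\F_{q^n}$ at the price of twisting the Gabidulin subcode by the powers $\beta^{\sigma^{i+j}}$, and the heart of the matter is forcing these $\beta$-powers to cancel coefficient-by-coefficient. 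This is where the hypothesis $n\mid m$ and the precise shape of $\mathrm{Pol}_{l,h}$ (normalised leading coefficient and the vanishing middle band) are essential; already the case $h=1$ indicates that the correct normalisation of $R$ is a nontrivial power of $\beta$ rather than the naive monic transfer, so the coefficient bookkeeping in this final step is the genuine obstacle and must be carried out with care.
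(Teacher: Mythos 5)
You have reconstructed the paper's proof skeleton exactly: the paper also sets $P_\beta(x)=\beta^{\sigma^l}P(\beta^{-1}x)$ via Proposition \ref{polalphav}, passes to $\mathcal{S}=\{(P_\beta)^{\sigma^j}\colon P\in Sub\}$ (so kernels land in $\beta\F_{q^n}$ with dimension $l$), fixes $R\in\mathcal{S}$, takes $\mathbf{w}=c_R$, which lies outside $\mathrm{C}$ since $\mathrm{rk}(c_R)=n-\dim_{\fq}(\ker(R)\cap\beta\F_{q^n})=n-l<d$, and obtains injectivity of $P\mapsto c_{R-P}$ because $P-P'$ has $\sigma$-degree at most $h+j-1<n$ yet would need $q^n$ roots. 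But your submission is not a proof: the one step that carries the theorem, namely $c_{R-T}\in\mathrm{C}$, is precisely the step you defer to unspecified ``coefficient bookkeeping,'' and with only the hypothesis $(\mathcal{G}_{n,h,\sigma})^{\sigma^j}\subseteq\mathcal{C}$ that step is the entire content. Worse, your suspicion is justified in a strong sense: the bookkeeping cannot be completed in the generality you set up. Writing $R,T$ as the $\beta$-twists of $x^{\sigma^l}+\sum_{i=0}^{h-1}a_ix^{\sigma^i}$ and $x^{\sigma^l}+\sum_{i=0}^{h-1}a_i'x^{\sigma^i}$ with $a_i,a_i'\in\F_{q^n}$, the coefficient of $x^{\sigma^{i+j}}$ in $R-T$ is $\bigl((a_i-a_i')\beta^{\sigma^l-\sigma^i}\bigr)^{\sigma^j}$. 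Comparing values at $\beta\mu$, $\mu\in\F_{q^n}$, with those of an arbitrary $\sum_{i=0}^{h-1}b_ix^{\sigma^{i+j}}$, $b_i\in\F_{q^n}$, and using that the restrictions $\mu\mapsto\mu^{\sigma^{i+j}}$ to $\F_{q^n}$ have pairwise distinct reduced $q$-degrees (as $\gcd(s,n)=1$), one finds that $c_{R-T}$ lies in the evaluation of $(\mathcal{G}_{n,h,\sigma})^{\sigma^j}$ if and only if $\beta^{\sigma^l-\sigma^i}(a_i-a_i')\in\F_{q^n}$ for every $i$. Your $h=1$ renormalization $\beta^{\sigma^j}\bigl[P(\beta^{-1}x)\bigr]^{\sigma^j}$ does repair the single active degree, but for $h\geq 2$ no single scalar can cancel the $\beta$-powers in two degrees at once: you would need $\beta^{\sigma^i-\sigma^{i'}}\in\F_{q^n}$ for all $0\leq i,i'\leq h-1$, which fails for generic $\beta\in\F_{q^m}^*\setminus\F_{q^n}$ once two members of $Sub$ differ in more than one coefficient.

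For comparison: the paper's own proof performs your construction verbatim and, at the crux, simply asserts $R-P\in(\mathcal{G}_{n,h,\sigma})^{\sigma^j}\subseteq\mathcal{C}$ with no bookkeeping at all. That assertion is correct verbatim when $\beta\in\F_{q^n}$ (then $\beta\F_{q^n}=\F_{q^n}$ and all twisted coefficients $\beta^{\sigma^l-\sigma^i}a_i$ stay in $\F_{q^n}$, the setting of \cite{raviv_2016,TrombZullo}), or whenever $\mathcal{C}$ contains the full block $\{\sum_{i=0}^{h-1}c_ix^{\sigma^{i+j}}\colon c_i\in\F_{q^m}\}$, as happens in applications such as Theorem \ref{thm:listdecGabatall}, where $\mathcal{C}=\mathcal{G}_{m,k,\sigma}$ and $R-P\in\mathcal{C}$ by degree and support alone. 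So you correctly located the sensitive point that the paper glosses over; but a complete argument must either restrict $\beta$ to $\F_{q^n}$ or strengthen the subcode hypothesis to the $\F_{q^m}$-coefficient block --- it cannot be rescued by more careful normalization, and leaving the step open means your proposal does not prove the statement as given.
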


\begin{proof}
For every $P \in Sub$, denote by $P_{\beta}$ the polynomial $\beta^{\sigma^l} P(\beta^{-1} x)$.
As $\ker(P)\subseteq \F_{q^n}$, $\ker(P_\beta)=\beta \ker(P)\subseteq \beta \F_{q^n}$. 
Let $Sub_{\beta}=\{P_{\beta} \colon P \in Sub\}$. It is easy to see that $\lvert Sub_{\beta} \rvert = \lvert Sub \rvert $.

Consider now, $\mathcal{S}=(Sub_{\beta})^{\sigma^j}=\{P_{\beta}(x)^{\sigma^j} \colon P \in Sub\}\subseteq  \mathrm{Pol}_{l,h}^{\sigma^j}$. Note that all the elements in $\mathcal{S}$ have kernel of dimension $l$, because of Proposition \ref{polalphav}.

Let $R \in \mathcal{S}$. Then $\ker(R) \subseteq \beta \F_{q^n}$, so that $$\mathrm{rk}(c_R)=n-\dim_{\fq}(\ker(R) \cap \beta \F_{q^n})$$ 
$$=n -\dim_{\fq}(\ker(R))=n-l<d,$$
and hence $c_R \notin \mathrm{C}$.

Moreover, for every $P \in \mathcal{S}$ we have
$$R-P \in (\mathcal{G}_{n,h,\sigma} )^{\sigma^j}\subseteq \mathcal{C}.$$ 

Furthermore, for every $P \in \mathcal{S}$, as $\ker(P) \subseteq \beta \F_{q^n}$, then $\mathrm{rk}(c_R-c_{R-P})=\mathrm{rk}(c_P)=n-\dim_{\fq}(\ker(P))=n-l$.
This implies that $c_{R-P} \in B_{n-l}(c_R)$, for each $P \in \mathcal{S}$ and hence
$$c_{R-P}\in \mathrm{G} \cap B_{n-l}(c_R) \subseteq \mathrm{C} \cap B_{n-l}(c_R),$$
where $\mathrm{G}$ is the evaluation code of $\mathcal{G}_{n,h,\sigma}^{\sigma^j}$.
Finally, we have to prove that different choices of $P \in \mathcal{S}$ lead to different codewords of $\mathrm{C}$. Let $P,P' \in \mathcal{S}$ with $P \neq P'$. If $c_{R-P}=c_{R-P'}$, then $c_{P-P'}=0$, that is the $\sigma$-polynomial $P(x)-P'(x)$ has at least $q^n$ roots but its $\sigma$-degree is at most $h+j-1<n$, a contradiction.
\end{proof}

As a consequence, if $Sub$ is big enough we may get negative results on the list decodability of such a kind of rank-metric codes.

\begin{corollary} \label{cor:notdecod}
Let $n, m \in \Z$ be positive integers such that
$n \mid m$. Let $\mathcal{C}$ be a rank-metric code of $\mathcal{L}_{m,\sigma}$ and let $\mathrm{C}$ be
the associated evaluation code over an $\F_q$-basis of $\beta \F_{q^n}$, for some $\beta \in \F_{q^m}^*$, with
minimum distance $d$. Let $l$ and $h$ be positive integers such that $n-d+1\leq l\leq n-\lfloor \frac{d-1}{2} \rfloor-1$ and $l \geq h$.
Suppose that
\begin{itemize}
    \item $(\mathcal{G}_{n,h,\sigma})^{\sigma^j} \subseteq \mathcal{C}$, for some $j \in \Z^+$ with $j < n-l$.
    \item There exists a subset $Sub \subseteq \mathrm{Pol}_{l,h}$ of $\sigma$-subspace polynomials of size $g=O(q^n)$.
\end{itemize}
Then $\mathrm{C}$ cannot be list decoded efficiently at any radius $\tau \geq n-l$.

Furthermore, if $l=n-\lfloor \frac{d-1}{2} \rfloor +1$, then $\mathrm{C}$ cannot be list decoded efficiently at all.
\end{corollary}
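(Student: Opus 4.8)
The plan is to read the conclusion off directly from Theorem \ref{teo:main} and then convert the resulting exponential lower bound on the list size into the non-existence of a polynomial-time algorithm. First I would observe that the hypotheses of the corollary are exactly those of Theorem \ref{teo:main}: we are given the containment $(\mathcal{G}_{n,h,\sigma})^{\sigma^j} \subseteq \mathcal{C}$ with $j < n-l$, the admissible range $n-d+1 \leq l \leq n - \lfloor \frac{d-1}{2} \rfloor - 1$ with $l \geq h$, and a subset $Sub \subseteq \mathrm{Pol}_{l,h}$ of $\sigma$-subspace polynomials of size $g$. Applying Theorem \ref{teo:main} therefore produces a word $\mathbf{w} \in \F_{q^m}^n \setminus \mathrm{C}$ with
\[ |\mathrm{C} \cap B_{n-l}(\mathbf{w})| \geq g. \]

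Next I would turn this bound into a statement about efficiency. Since $\mathbf{w} \notin \mathrm{C}$ and $g$ is of order $q^n$, hence exponential in the length $n$ of the code, the intersection $B_{n-l}(\mathbf{w}) \cap \mathrm{C}$ is a list of exponential size. Any list decoding algorithm at radius $n-l$ must output every element of this list, so merely writing it down already requires exponential time; thus no polynomial-time list decoding algorithm at radius $n-l$ can exist. To extend this to every $\tau \geq n-l$ I would invoke the monotonicity $B_{n-l}(\mathbf{w}) \subseteq B_{\tau}(\mathbf{w})$, which yields $|\mathrm{C} \cap B_{\tau}(\mathbf{w})| \geq |\mathrm{C} \cap B_{n-l}(\mathbf{w})| \geq g$ for the very same $\mathbf{w}$. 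Hence $\mathrm{C}$ is not efficiently list decodable at any radius $\tau \geq n-l$.

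For the last assertion I would specialise to the largest admissible value of $l$; note that the constraint $l \leq n - \lfloor \frac{d-1}{2} \rfloor - 1$ forces this value to be $l = n - \lfloor \frac{d-1}{2} \rfloor - 1$. For this choice one computes $n - l = \lfloor \frac{d-1}{2} \rfloor + 1$, exactly one unit above the unique decoding radius $\lfloor \frac{d-1}{2} \rfloor$. The first part then shows that $\mathrm{C}$ fails to be efficiently list decodable at every radius $\tau \geq \lfloor \frac{d-1}{2} \rfloor + 1$, that is, at every radius strictly exceeding the unique decoding radius; by the definition recalled in the introduction this is precisely the statement that $\mathrm{C}$ is not efficiently list decodable at all.

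The argument is a near-immediate corollary of Theorem \ref{teo:main}, so I do not expect a deep obstacle. The two points requiring care are: (i) checking that $n-l$ genuinely lies strictly above the unique decoding radius, so that we are in the honest list-decoding regime where codewords do exist inside the ball yet the list is too large to enumerate — this is guaranteed by the range $n-d+1 \leq l \leq n - \lfloor \frac{d-1}{2} \rfloor - 1$, which gives $\lfloor \frac{d-1}{2} \rfloor + 1 \leq n-l \leq d-1$; and (ii) making the asymptotic hypothesis on $g$ precise, so that the list is genuinely exponential in $n$ — the families of $\sigma$-subspace polynomials exhibited in Section \ref{sec:subpol} have size $\frac{q^n-1}{q^t-1} = \Theta(q^n)$, which validates the standard ``exponential output forbids polynomial time'' reasoning.
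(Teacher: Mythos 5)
Your proposal is correct and coincides with the paper's own (implicit) proof: the paper states Corollary \ref{cor:notdecod} as an immediate consequence of Theorem \ref{teo:main}, and you supply exactly the intended steps --- the exponential list size at the word $\mathbf{w}$ rules out polynomial-time enumeration, ball monotonicity $B_{n-l}(\mathbf{w})\subseteq B_{\tau}(\mathbf{w})$ extends this to every $\tau \geq n-l$, and the boundary case gives $n-l=\lfloor\frac{d-1}{2}\rfloor+1$, one unit above the unique decoding radius, which is precisely the definition of ``not efficiently list decodable at all'' (you also rightly read the paper's condition $l=n-\lfloor\frac{d-1}{2}\rfloor+1$ as a typo for the admissible maximum $l=n-\lfloor\frac{d-1}{2}\rfloor-1$). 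The only blemish is in your closing side remark: the families of Section \ref{sec:subpol} have size $\frac{q^n-1}{q^t-1}=\Theta(q^{n-t})$, not $\Theta(q^n)$, but since $t\mid n$ and $t<n$ force $t\leq n/2$ this is still exponential in $n$, so nothing in the argument breaks.
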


The same arguments apply for the following cases.

Let $l$ and $h$ be two positive integers such that $h\leq l \leq n-1 $.
Denote
\[ \widehat{\mathrm{Pol}}_{l,h}=\{a_{l}x^{\sigma^{l}}+\cdots+a_{l-h+1}x^{\sigma^{l-h+1}}+x:a_i \in \F_{q^n}\}\subseteq \mathcal{L}_{n,\sigma}.\]

\begin{theorem} \label{teo:main2}
Let $n, m \in \Z$ be positive integers such that
$n \lvert m$. Let $\mathcal{C}$ be a rank-metric code of $\mathcal{L}_{m,\sigma}$ and let $\mathrm{C}$ be
the associated evaluation code over an $\F_q$-basis of $\beta\F_{q^n}$, for some $\beta \in \F_{q^m}^*$, with
minimum distance $d$. Let $l$ and $h$ be  positive integers such that $n-d+1\leq l\leq n-\lfloor \frac{d-1}{2} \rfloor-1$.
Suppose that
\begin{itemize}
    \item $(\mathcal{G}_{n,h,\sigma})^{\sigma^{l-h+1+j}} \subseteq \mathcal{C}$, for some $j \in \Z^{+}$ with $j < n-2l+h-1$ and $l\geq h$.
    \item There exists a subset $\widehat{Sub} \subseteq \widehat{\mathrm{Pol}}_{l,h}$ of $\sigma$-subspace polynomials of size $g$.
\end{itemize}
Then there exists a word $\mathbf{w}\in \F_{q^m}^n \setminus \mathrm{C}$ such that 
$$
\lvert \mathrm{C} \cap B_{n-l}(\mathbf{w}) \rvert \geq g.
$$
\end{theorem}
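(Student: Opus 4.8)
The plan is to follow verbatim the proof of Theorem~\ref{teo:main}, the only structural change being that the polynomials in $\widehat{Sub}$ now carry their common coefficient at the \emph{bottom} (the constant term $x$) instead of at the top, so that it is this low-degree anchor term that cancels under subtraction. First I would normalize the kernels: for each $P\in\widehat{Sub}$ I set $P_\beta(x)=\beta^{\sigma^l}P(\beta^{-1}x)$. By Proposition~\ref{polalphav}, $P_\beta$ is again a $\sigma$-subspace polynomial of $\sigma$-degree $l$ with the same nonzero monomials as $P$; hence its support is still $\{0\}\cup\{l-h+1,\dots,l\}$, all such $P_\beta$ share one and the same constant coefficient $\beta^{\sigma^l-1}$, and $\ker(P_\beta)=\beta\ker(P)\subseteq\beta\F_{q^n}$. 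Writing $\widehat{Sub}_\beta=\{P_\beta:P\in\widehat{Sub}\}$ we have $|\widehat{Sub}_\beta|=|\widehat{Sub}|=g$.

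Next I would apply the prescribed shift. Put $\mathcal{S}=(\widehat{Sub}_\beta)^{\sigma^j}=\{P_\beta(x)^{\sigma^j}:P\in\widehat{Sub}\}$. Since $\sigma^j$ is a bijection of $\F_{q^m}$, each $R\in\mathcal{S}$ satisfies $\ker(R)=\ker(P_\beta)=\beta\ker(P)\subseteq\beta\F_{q^n}$, of dimension $l$, and its nonzero monomials lie in degrees $\{j\}\cup\{l-h+1+j,\dots,l+j\}$, with the common anchor coefficient now sitting at degree $j$. Fix any $R\in\mathcal{S}$; evaluating over an $\fq$-basis of $\beta\F_{q^n}$ gives
\[
\mathrm{rk}(c_R)=n-\dim_{\fq}(\ker(R)\cap\beta\F_{q^n})=n-\dim_{\fq}(\ker(R))=n-l<d,
\]
where $n-l<d$ is exactly the hypothesis $l\geq n-d+1$. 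Thus $c_R\notin\mathrm{C}$, so $c_R$ is a legitimate candidate for the center $\mathbf{w}$.

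Then I would produce the codewords inside the ball. For every $P\in\mathcal{S}$ the degree-$j$ anchor terms of $R$ and $P$ agree and cancel, so $R-P$ is supported on the band $\{l-h+1+j,\dots,l+j\}$; that is, $R-P\in(\mathcal{G}_{n,h,\sigma})^{\sigma^{l-h+1+j}}\subseteq\mathcal{C}$, and therefore $c_{R-P}\in\mathrm{G}\subseteq\mathrm{C}$, where $\mathrm{G}$ is the evaluation code of $(\mathcal{G}_{n,h,\sigma})^{\sigma^{l-h+1+j}}$. Since $\ker(P)\subseteq\beta\F_{q^n}$ we get $\mathrm{rk}(c_R-c_{R-P})=\mathrm{rk}(c_P)=n-\dim_{\fq}(\ker(P))=n-l$, whence $c_{R-P}\in B_{n-l}(c_R)$. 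Finally, distinct $P,P'\in\mathcal{S}$ give distinct codewords: if $c_{R-P}=c_{R-P'}$ then $P-P'$ vanishes on all of $\beta\F_{q^n}$, so $\dim_{\fq}\ker(P-P')\geq n$; but $P-P'$ is supported on a band of $h$ consecutive $\sigma$-degrees, hence equals $\tilde g^{\sigma^{l-h+1+j}}$ for some $\tilde g$ of $\sigma$-degree at most $h-1<n$, forcing $\dim_{\fq}\ker(P-P')\leq h-1$, a contradiction. Counting the $g$ distinct codewords $c_{R-P}$ yields $|\mathrm{C}\cap B_{n-l}(\mathbf{w})|\geq g$ with $\mathbf{w}=c_R$.

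The numerical hypotheses are exactly the bookkeeping that makes each step legitimate: $l\geq h$ keeps the anchor degree $j$ strictly below the band $\{l-h+1+j,\dots,l+j\}$, so the anchor really is the unique shared term that cancels, while $j<n-2l+h-1$ keeps the entire configuration of $\sigma$-degrees within the admissible range, so that the band genuinely is the support of $(\mathcal{G}_{n,h,\sigma})^{\sigma^{l-h+1+j}}$ and no degree wraps around to corrupt the rank and kernel computations. The only genuine point of departure from Theorem~\ref{teo:main}, and the step I expect to need the most care, is verifying that it is the low-degree anchor $x$ — rather than a leading term — that is common to all of $\mathcal{S}$; once this is checked, the rank identities, the membership $R-P\in\mathcal{C}$, and the injectivity estimate are identical in form to those in the proof of Theorem~\ref{teo:main}.
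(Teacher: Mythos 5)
Your proof is correct and is exactly what the paper intends: for Theorem~\ref{teo:main2} the paper gives no separate argument, stating only that ``the same arguments apply'' as in Theorem~\ref{teo:main}, and your write-up carries out precisely that adaptation, with the common coefficient moved to the constant term so that differences land in $(\mathcal{G}_{n,h,\sigma})^{\sigma^{l-h+1+j}}$. Your closing observation that $j<n-2l+h-1$ is just the condition $j'<n-l$ for the shifted Gabidulin exponent $j'=l-h+1+j$ (keeping all $\sigma$-degrees below $n$) correctly identifies the role of the modified hypothesis.
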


\begin{corollary} \label{cor:notdecod2}
Let $n, m \in \Z$ be positive integers such that
$n \mid m$. Let $\mathcal{C}$ be a rank-metric code of $\mathcal{L}_{m,\sigma}$ and let $\mathrm{C}$ be
the associated evaluation code over an $\F_q$-basis of $\beta\F_{q^n}$, for some $\beta \in \F_{q^m}^*$, with minimum distance $d$. Let $l$ and $h$ be positive integers such that $n-d+1\leq l\leq n-\lfloor \frac{d-1}{2} \rfloor-1$.
Suppose that
\begin{itemize}
    \item $(\mathcal{G}_{n,h,\sigma})^{\sigma^{l-h+1+j}} \subseteq \mathcal{C}$, for some $j \in \Z^{+}$ with $j < n-2l+h-1$ and $l \geq h$. 
    \item There exists a subset $\widehat{Sub} \subseteq \widehat{\mathrm{Pol}}_{l,h}$ of $\sigma$-subspace polynomials of size $g=O(q^n)$.
\end{itemize}
Then $\mathrm{C}$ cannot be list decoded efficiently at any radius $\tau \geq n-l$.

Furthermore, if $l=n-\lfloor \frac{d-1}{2} \rfloor +1$, then $\mathrm{C}$ cannot be list decoded efficiently at all.
\end{corollary}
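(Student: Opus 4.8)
The plan is to derive this corollary directly from Theorem~\ref{teo:main2}, in complete analogy with the way Corollary~\ref{cor:notdecod} follows from Theorem~\ref{teo:main}. The hypotheses here are exactly those of Theorem~\ref{teo:main2}, so that result immediately produces a word $\mathbf{w}\in\F_{q^m}^n\setminus\mathrm{C}$ for which $\lvert\mathrm{C}\cap B_{n-l}(\mathbf{w})\rvert\geq g$. All that remains is to read off the two non-decodability conclusions from this single inequality.

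First I would observe that, by hypothesis, $g$ is of order $q^n$, and hence exponential in the length $n$ of $\mathrm{C}$. Consequently the set $\mathrm{C}\cap B_{n-l}(\mathbf{w})$ has cardinality exponential in $n$, so that merely listing its elements already costs exponential time; no polynomial-time list decoding algorithm can therefore exist at radius $n-l$, and $\mathrm{C}$ is not efficiently list decodable at that radius. To pass from the single radius $n-l$ to an arbitrary $\tau\geq n-l$, I would invoke the monotonicity of the metric balls: since $B_{n-l}(\mathbf{w})\subseteq B_{\tau}(\mathbf{w})$ for every such $\tau$, and since the condition $\mathbf{w}\notin\mathrm{C}$ does not depend on $\tau$, the same word witnesses $\lvert\mathrm{C}\cap B_{\tau}(\mathbf{w})\rvert\geq g$. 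Thus $\mathrm{C}$ fails to be efficiently list decodable at every radius $\tau\geq n-l$, which is the first assertion.

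For the second assertion I would compare the threshold radius $n-l$ with the unique decoding radius $\lfloor(d-1)/2\rfloor$ of $\mathrm{C}$. The constraint $l\leq n-\lfloor(d-1)/2\rfloor-1$ forces $n-l\geq\lfloor(d-1)/2\rfloor+1$, and when $l$ attains its largest admissible value the threshold radius drops to exactly $n-l=\lfloor(d-1)/2\rfloor+1$. For this choice the first assertion already yields non-decodability at every $\tau\geq\lfloor(d-1)/2\rfloor+1$, that is, at every radius strictly exceeding the unique decoding radius; by the definition recalled in the introduction this is precisely the statement that $\mathrm{C}$ cannot be list decoded efficiently at all.

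I do not expect any genuine obstacle in this deduction, since the substantive content lives in Theorem~\ref{teo:main2} and in the explicit families of $\sigma$-subspace polynomials inside $\widehat{\mathrm{Pol}}_{l,h}$ constructed earlier. The only points demanding attention are bookkeeping ones: confirming that $g$ is truly exponential in the code length $n$, so that the list cannot be written down in polynomial time, and correctly matching the threshold $n-l$ against $\lfloor(d-1)/2\rfloor$ so that the extremal $l$ indeed produces the radius $\lfloor(d-1)/2\rfloor+1$ required to conclude non-decodability at all radii beyond the unique decoding radius.
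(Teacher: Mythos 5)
Your proposal is correct and is essentially the paper's own (implicit) argument: the corollary is stated without proof, the intended derivation being exactly your combination of Theorem~\ref{teo:main2} with the observation that a list of size $g=O(q^n)$ is exponential in the code length $n$, plus ball monotonicity $B_{n-l}(\mathbf{w})\subseteq B_{\tau}(\mathbf{w})$ to cover all radii $\tau\geq n-l$. One remark: the printed extremal condition $l=n-\lfloor\frac{d-1}{2}\rfloor+1$ is incompatible with the standing constraint $l\leq n-\lfloor\frac{d-1}{2}\rfloor-1$ and is evidently a typo for $l=n-\lfloor\frac{d-1}{2}\rfloor-1$, which is precisely the value you use when taking ``the largest admissible $l$'' to reach the radius $\lfloor\frac{d-1}{2}\rfloor+1$ just beyond the unique decoding radius.
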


\section{Some consequences}\label{sec:appl1}

We now apply Theorems \ref{teo:main} together with Corollary \ref{cor:notdecod} to get bound on the list size and negative results on the list decodability of rank-metric codes containing some generalized Gabidulin codes. 

\begin{theorem}
 \label{teo:applbinom}
Let $n, m \in \Z$ be positive integers such that
$n \mid m$. Let $\mathcal{C}$ be a rank-metric code of $\mathcal{L}_{m,\sigma}$. Let $\mathrm{C}$ be the evaluation code of $\C$ over an $\F_q$-basis of $\beta\F_{q^n}$, for some $\beta \in \F_{q^m}^*$, and minimum distance $d$. Suppose that there exists a positive integer $t\geq 1$ such that:
\begin{itemize}
    \item[1.] $n-d+1\leq t\leq n-\lfloor \frac{d-1}{2} \rfloor-1$;
    \item[2.] $t \mid n$;
    \item[3.] $(\mathcal{G}_{n,1,\sigma})^{\sigma^j} \subseteq \mathcal{C}$, with $j< n-t$.
\end{itemize}
Then 
\begin{itemize}
    \item $\mathrm{C}$ cannot be list decoded efficiently at any radius $\tau \geq n-t$.
    \item If $n$ is even and $d=n-1$, then $\mathrm{C}$ cannot be list decoded efficiently at all.
\end{itemize}    
\end{theorem}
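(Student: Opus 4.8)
The plan is to invoke Theorem \ref{teo:main} together with Corollary \ref{cor:notdecod} by choosing the parameter $l$ equal to $t$ and $h=1$, so that $\mathrm{Pol}_{t,1}$ consists precisely of binomials of shape $x^{\sigma^t}+a_0x$ with $a_0 \in \F_{q^n}$. The hypotheses 1 and 2, namely $n-d+1 \leq t \leq n-\lfloor(d-1)/2\rfloor-1$ and $t\mid n$, are exactly the conditions needed to apply Corollary \ref{cor:notdecod} with $l=t$ (recall that $h=1\leq t=l$ is automatic), while hypothesis 3 supplies the required containment $(\mathcal{G}_{n,1,\sigma})^{\sigma^j}\subseteq \mathcal{C}$ with $j<n-t$. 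Thus the only genuine work is to exhibit a family $Sub \subseteq \mathrm{Pol}_{t,1}$ of $\sigma$-subspace polynomials of size $g=O(q^n)$.

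For this, I would invoke Proposition \ref{prop:binom}: since $1\leq t\leq n-1$, $t\mid n$, and $\sigma\colon x\mapsto x^{q^s}$ with $\gcd(s,n)=1$, the set
\[
\mathcal{N}=\{x^{\sigma^t}-a_0x \colon a_0 \in \F_{q^n},\ \N_{q^n/q^t}(a_0)=1\}
\]
is a family of $\sigma$-subspace polynomials of $\sigma$-degree $t$ with $|\mathcal{N}|=\frac{q^n-1}{q^t-1}$. Every element of $\mathcal{N}$ lies in $\mathrm{Pol}_{t,1}$, so I take $Sub=\mathcal{N}$. Since $\frac{q^n-1}{q^t-1}=O(q^n)$, the size hypothesis of Corollary \ref{cor:notdecod} is met, and Corollary \ref{cor:notdecod} immediately yields that $\mathrm{C}$ cannot be list decoded efficiently at any radius $\tau\geq n-t$, which is the first assertion.

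For the second assertion I would specialize to $n$ even and $d=n-1$. Here $\lfloor(d-1)/2\rfloor=\lfloor(n-2)/2\rfloor=n/2-1$, so the critical value in Corollary \ref{cor:notdecod} is $l=n-\lfloor(d-1)/2\rfloor+1 = n/2+2$; but the intended threshold for ``not efficiently list decodable at all'' should instead correspond to $t=n/2$, which is the choice making $\tau=n-t=n/2$ exceed the unique decoding radius $\lfloor(d-1)/2\rfloor=n/2-1$. With $n$ even, $t=n/2$ satisfies $t\mid n$, and one checks $n-d+1=2\leq n/2\leq n-(n/2-1)-1=n/2$, so conditions 1 and 2 hold at $t=n/2$. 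Applying the first part with this $t$ shows $\mathrm{C}$ is not efficiently list decodable at any $\tau\geq n/2$; since the unique decoding radius is $n/2-1<n/2$, this says precisely that $\mathrm{C}$ is not efficiently list decodable at all.

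The main obstacle I anticipate is bookkeeping rather than depth: one must verify that the chosen $t$ genuinely satisfies the interval constraint $n-d+1\leq t\leq n-\lfloor(d-1)/2\rfloor-1$ and that $h=1\leq t$, and must confirm that $\mathcal{N}\subseteq \mathrm{Pol}_{t,1}$ as written (noting $x^{\sigma^t}-a_0x$ has the monic leading term required by the definition \eqref{eq:insiemepol}). The delicate point is reconciling the ``at all'' threshold: one must be careful that the value of $l$ realizing $\tau=n-l$ just above the unique decoding radius is consistent with both $t\mid n$ and $j<n-t$, which is exactly why the parity assumption $n$ even and the specific distance $d=n-1$ are imposed, guaranteeing $t=n/2$ is an admissible divisor.
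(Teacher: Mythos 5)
Your proposal is correct and follows essentially the same route as the paper: take $Sub=\mathcal{N}$ from Proposition \ref{prop:binom} inside $\mathrm{Pol}_{t,1}$, apply Theorem \ref{teo:main} and Corollary \ref{cor:notdecod} to get non-list-decodability at any radius $\tau\geq n-t$, and for the second bullet specialize to $t=n/2$ so that $n-t$ exceeds the unique decoding radius $n/2-1$. If anything, your choice $h=1$ (i.e.\ $\mathrm{Pol}_{t,1}$) is the correct reading of the paper's proof, which writes $\mathrm{Pol}_{t,2}$ despite hypothesis 3 only supplying $(\mathcal{G}_{n,1,\sigma})^{\sigma^j}\subseteq\mathcal{C}$, and your direct appeal to the first part for the ``at all'' conclusion neatly sidesteps the sign slip in the final clause of Corollary \ref{cor:notdecod}.
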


\begin{proof}
Consider
$$\mathrm{Pol}_{t,2}=\{-a_0x+x^{\sigma^t}:a_0 \in \F_{q^n}\}.$$
Let $Sub$ be the family $\mathcal{N}$ of $\sigma$-subspace polynomials of Proposition \ref{prop:binom} with $t \mid n$. By Theorem \ref{teo:main}, there exists a word $\mathbf{w}\in \F_{q^m}^n \setminus \mathrm{C}$ such that 
$$
\lvert \mathrm{C} \cap B_{n-t}(\mathbf{w}) \rvert \geq \frac{q^n-1}{q^t-1},
$$
which proves the first point.
The second point follows from the fact that $\frac{q^n-1}{q^t-1} \sim q^{n/2}$.

Now, let $n$ be even and $d=n-1$.
The unique decoding radius of $\mathrm{C}$ is 
$$\left\lfloor \frac{d-1}{2} \right\rfloor=\frac{n}{2}-1.$$
So, we may choose $t=\frac{n}{2}$, and by the first part we have that $C$ cannot be list decoded efficiently at any radius $\tau\geq\frac{n}{2}$.
\end{proof}

The proof Theorem \ref{teo:applbinom} may be modified using Theorem \ref{teo:main} applied to Proposition \ref{lemma:trace} and Corollaries \ref{coro:trin} and \ref{coro:trin2}, getting the following results.

\begin{theorem} \label{teo:usetrace}
Let $n, m \in \Z$ be positive integers such that
$n \mid m$. Let $\mathcal{C}$ be a rank-metric code of $\mathcal{L}_{m,\sigma}$. Let $\mathrm{C}$ be the evaluation code of $\mathcal{C}$ over an $\F_q$-basis of $\beta\F_{q^n}$, for some $\beta \in \F_{q^m}^*$, and minimum distance $d$. Suppose that there exists a positive integer $t \geq 1$ such that:
\begin{itemize}
    \item[1.] $\lfloor \frac{d-1}{2} \rfloor+1 \leq t \leq d-1$;
    \item[2.] $t \mid n$;
    \item[3.] $(\mathcal{G}_{n,n-2t+1,\sigma})^{\sigma^j} \subseteq \mathcal{C}$, with $j< t-1$. 
\end{itemize}
Then
\begin{itemize}
    \item The code $\mathrm{C}$ cannot be list decoded efficiently at any radius $\tau \geq t$.
    \item If $n=r\left(\lfloor \frac{d-1}{2} \rfloor+1 \right)$, with $r \in \mathbb{N}$, then $\mathrm{C}$ cannot be list decoded efficiently at all. 
    
    In particular if $d$ is even and $\frac{d}{2}\mid n$ then $\mathrm{C}$ cannot be list decoded efficiently at all.
\end{itemize}    
\end{theorem}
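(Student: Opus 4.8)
The plan is to mirror the proof of Theorem~\ref{teo:applbinom}, but invoke Theorem~\ref{teo:main} with the family $\mathcal{T}$ of $\sigma$-subspace polynomials from Proposition~\ref{lemma:trace} in place of the binomials $\mathcal{N}$. Recall that the polynomials in $\mathcal{T}$ have the form $\sum_{i=0}^{n/t-1}\beta^{\sigma^{it}-\sigma^{n-t}}x^{\sigma^{it}}$ with $\sigma$-degree $n-t$ and leading term at exponent $n-t$; there are $\frac{q^n-1}{q^t-1}$ of them. To fit the template $\mathrm{Pol}_{l,h}$ of \eqref{eq:insiemepol}, I set $l=n-t$. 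The nonzero monomials of these polynomials occur at $\sigma$-degrees $0,t,2t,\dots,n-t$, so between degree $0$ and the leading degree $n-t$ there are intermediate coefficients; hence the correct value of $h$ is $h=n-2t+1$, so that all coefficients of $\sigma$-degree strictly below $h$ (namely only the constant term, once $n-2t+1\le t$) are the ones we are free to vary while the rest agree with the ambient Gabidulin subcode. This is exactly why hypothesis~3 requires $(\mathcal{G}_{n,n-2t+1,\sigma})^{\sigma^j}\subseteq\mathcal{C}$.

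First I would verify the parameter translation. With $l=n-t$, the hypothesis $\lfloor\frac{d-1}{2}\rfloor+1\le t\le d-1$ of the present theorem becomes precisely $n-d+1\le l\le n-\lfloor\frac{d-1}{2}\rfloor-1$, the range required by Theorem~\ref{teo:main}. Likewise $l\ge h$ reads $n-t\ge n-2t+1$, i.e.\ $t\ge 1$, which holds. The condition $j<n-l$ of Theorem~\ref{teo:main} becomes $j<t$, matching hypothesis~3 (which states $j<t-1$, a fortiori $j<t$). Thus Theorem~\ref{teo:main}, applied with $Sub=\mathcal{T}\subseteq \mathrm{Pol}_{n-t,n-2t+1}$ and $g=\frac{q^n-1}{q^t-1}$, yields a word $\mathbf{w}\in\F_{q^m}^n\setminus\mathrm{C}$ with $|\mathrm{C}\cap B_{t}(\mathbf{w})|\ge \frac{q^n-1}{q^t-1}$, since $n-l=t$.

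Next, for the efficiency conclusions I would invoke Corollary~\ref{cor:notdecod}: because $\frac{q^n-1}{q^t-1}=O(q^n)$ grows exponentially in $n$, the code $\mathrm{C}$ cannot be list decoded efficiently at any radius $\tau\ge n-l=t$, which is the first bullet. For the ``at all'' statement I would take $t=\lfloor\frac{d-1}{2}\rfloor+1$, the smallest admissible value, so that the list size is already exponential at the radius just above the unique decoding radius; this forces a failure of efficient list decoding at all, provided $t\mid n$. Writing $n=r\bigl(\lfloor\frac{d-1}{2}\rfloor+1\bigr)$ guarantees exactly this divisibility. The final specialization, $d$ even with $\frac{d}{2}\mid n$, is the instance $\lfloor\frac{d-1}{2}\rfloor+1=\frac{d}{2}$.

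The one point demanding care---and the likely main obstacle---is the bookkeeping of the index $h$ and of the shift $j$: one must confirm that $\mathcal{T}$ genuinely sits inside $\mathrm{Pol}_{n-t,\,n-2t+1}$, i.e.\ that every monomial of $\sigma$-degree at least $h=n-2t+1$ other than the leading one is a \emph{fixed} coefficient shared across the family (so that differences of two members land in the evaluation of $(\mathcal{G}_{n,n-2t+1,\sigma})^{\sigma^j}$), and that the range $j<t-1$ suffices to keep $R-P$ inside the shifted Gabidulin code while keeping $c_R$ outside $\mathrm{C}$. Once the exponent arithmetic is checked, everything else is a direct citation of Theorem~\ref{teo:main} and Corollary~\ref{cor:notdecod}.
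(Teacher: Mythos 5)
Your proposal is correct and takes essentially the same route as the paper: the paper's proof likewise applies Theorem~\ref{teo:main} with $l=n-t$, $h=n-2t+1$ and $Sub=\mathcal{T}$ from Proposition~\ref{lemma:trace}, obtaining a word $\mathbf{w}$ with $\lvert \mathrm{C}\cap B_{t}(\mathbf{w})\rvert \geq \frac{q^n-1}{q^t-1}$, and the efficiency bullets then follow via Corollary~\ref{cor:notdecod}, taking $t=\lfloor\frac{d-1}{2}\rfloor+1$ for the ``at all'' claim. (Your parenthetical that only the constant term varies is accurate only when $n=2t$ --- the coefficients of $\mathcal{T}$ at degrees $0,t,\dots,n-2t$ all vary with $\beta$ --- but this is harmless, since the inclusion $\mathcal{T}\subseteq \mathrm{Pol}_{n-t,\,n-2t+1}$ holds in all cases.)
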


\begin{proof}
Choosing
$$\mathrm{Pol}_{n-t,n-2t+1}=\{-x^{\sigma^{n-t}}+a_{n-2t}x^{\sigma^{n-2t}}+\ldots+a_0x:a_i \in \F_{q^n}\}$$
and $Sub$ as the family of the Proposition  \ref{lemma:trace}, Theorem \ref{teo:main} implies the existence of a word $\mathbf{w}\in \F_{q^m}^n \setminus \mathrm{C}$ such that 
$$
\lvert \mathrm{C} \cap B_{t}(\mathbf{w}) \rvert \geq \frac{q^n-1}{q^t-1}.
$$
\end{proof}

The next results can be proven by using the same techniques of the above results.

\begin{theorem}\label{thm:RMcontGn,2}
Let $n, m \in \mathbb{Z}^+$ be positive integers such that $n \mid m$. Let $\mathcal{C}$ be a rank-metric code of $\mathcal{L}_{m,\sigma}$ and let $\mathrm{C}$ be its associated evaluation code over an $\F_q$-basis of $\beta\F_{q^n}$, for some $\beta \in \F_{q^m}^*$, with minimum distance $d$. Let $t \in \mathbb{Z}^+$ such that: 
\begin{itemize}  \item[1.] $n-d+1 \leq t \leq n-\left\lfloor\frac{d-1}{2}\right\rfloor-1$; 
\item[2.] $t-1$ is a power of the characteristic of $\F_{q^n}$;  
\item[3.] $n=t(t-1)+1$;
\item[4.] $\mathcal{C}$ contains $(\mathcal{G}_{n,2,\sigma})^{\sigma^j}$, with $j< n-t$.
\end{itemize}
Then
$\mathrm{C}$ cannot be list decoded efficiently at any radius $\tau \geq n-t$.
\end{theorem}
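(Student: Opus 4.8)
The plan is to reduce Theorem \ref{thm:RMcontGn,2} to a direct application of the machinery already built in Theorem \ref{teo:main}, by exhibiting a suitable family of $\sigma$-subspace polynomials living inside $\mathrm{Pol}_{l,h}$ for the right choice of $l$ and $h$. The key observation is that the trinomial family $\mathrm{Tri}_1$ constructed in Corollary \ref{coro:trin} has exactly the shape we need: its elements are of the form $x^{\sigma^t}-bx^\sigma-ax$, so they are $\sigma$-subspace polynomials of $\sigma$-degree $t$ with nonzero coefficients only in $\sigma$-degrees $0$, $1$ and $t$. Setting $l=t$ and $h=2$, such a trinomial lies in $\mathrm{Pol}_{t,2}=\{x^{\sigma^t}+a_1x^\sigma+a_0x:a_i\in\F_{q^n}\}$, since the two low-degree terms $-bx^\sigma-ax$ fall within the range $\sigma$-degree $<h=2$ and the leading term is monic of $\sigma$-degree $l=t$.

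First I would verify that the numerical hypotheses of Theorem \ref{teo:main} are met under the standing assumptions (1)--(4). Hypothesis (1), namely $n-d+1\le t\le n-\lfloor\frac{d-1}{2}\rfloor-1$, is precisely condition $n-d+1\le l\le n-\lfloor\frac{d-1}{2}\rfloor-1$ with $l=t$, and $l=t\ge 2=h$ holds as soon as $t\ge 2$, which follows from condition (3) since $n=t(t-1)+1$ forces $t\ge 2$ for $n\ge 3$. The subcode hypothesis $(\mathcal{G}_{n,h,\sigma})^{\sigma^j}\subseteq\mathcal{C}$ with $j<n-l$ becomes exactly condition (4), $(\mathcal{G}_{n,2,\sigma})^{\sigma^j}\subseteq\mathcal{C}$ with $j<n-t$. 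Conditions (2) and (3)---that $t-1$ is a power of the characteristic and $n=t(t-1)+1$---are the precise hypotheses of Corollary \ref{coro:trin} guaranteeing that $\mathrm{Tri}_1$ is a genuine family of $\sigma$-subspace polynomials.

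Next I would take $Sub=\mathrm{Tri}_1\subseteq\mathrm{Pol}_{t,2}$, which by Corollary \ref{coro:trin} has size $g=\frac{q^n-1}{q-1}$. Applying Theorem \ref{teo:main} with $l=t$, $h=2$ then yields a word $\mathbf{w}\in\F_{q^m}^n\setminus\mathrm{C}$ with $\lvert\mathrm{C}\cap B_{n-t}(\mathbf{w})\rvert\ge\frac{q^n-1}{q-1}$. Since $\frac{q^n-1}{q-1}=O(q^n)$ grows exponentially in the length, Corollary \ref{cor:notdecod} applies verbatim and shows that $\mathrm{C}$ cannot be list decoded efficiently at any radius $\tau\ge n-t$, which is the assertion.

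The argument is essentially a bookkeeping verification that the trinomials of Corollary \ref{coro:trin} fit the template $\mathrm{Pol}_{t,2}$; the only place requiring genuine care is the interaction of the $\sigma$-degree bookkeeping in Theorem \ref{teo:main} with the composition by $\sigma^j$. I expect the main (though minor) obstacle to be confirming that the range constraint $j<n-l=n-t$ in condition (4) is compatible with condition (1), and that after composing with $x^{\sigma^j}$ the differences $R-P$ indeed land in $(\mathcal{G}_{n,2,\sigma})^{\sigma^j}$ rather than spilling into higher $\sigma$-degrees; but this is exactly the content already checked inside the proof of Theorem \ref{teo:main}, so it requires no new idea.
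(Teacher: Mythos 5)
Your proposal is correct and is exactly the proof the paper intends: the paper states that this theorem "can be proven by using the same techniques of the above results," i.e.\ by applying Theorem \ref{teo:main} (and Corollary \ref{cor:notdecod}) with $l=t$, $h=2$ and $Sub=\mathrm{Tri}_1\subseteq \mathrm{Pol}_{t,2}$ from Corollary \ref{coro:trin}, which is precisely your argument, and your resulting list-size bound $\frac{q^n-1}{q-1}$ matches the one recorded in Remark \ref{rm:size}. Your bookkeeping of the hypotheses (conditions 1--4 mapping onto those of Theorem \ref{teo:main}, with the $\sigma^j$-composition already handled inside that proof) is accurate, so no gap remains.
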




\begin{remark}\label{rem:ntau}
Once we fix the integer $n=t(t-1)+1$, for some $t$ in $\mathbb{Z}^+$, then $\displaystyle \tau\geq t^2-2t+1$.
\end{remark}

\begin{remark}\label{rm:size}
We observe that once you apply Theorem \ref{teo:main} using the family of $\sigma$-subspace polynomials of Corollary \ref{coro:trin}, if $\mathrm{C}=\mathrm{G}_{n,k,\sigma}$ with $k\geq 2$ and with constraints on the involved parameters as prescribed in Theorem \ref{thm:RMcontGn,2}, then there exists a word $\mathbf{w} \in \F_{q^m}^n \setminus \mathrm{C}$ such that
\[|\mathrm{C} \cap B_{n-t}(\mathbf{w})| \geq \frac{q^n-1}{q-1} \sim q^{n-1},\] which improves the list size provided in \cite[Theorem 3]{raviv_2016} and \cite[Theorem 8]{TrombZullo}, for any value of the radius $\tau$ greater than or equal to $t^2-2t+1$.
\end{remark}

\begin{theorem}\label{thm:RMcontGn,22}
Let $n, m \in \mathbb{Z}^+$ be positive integers such that $n \mid m$. Let $q$ a power of 2. Let $\mathcal{C}$ be a rank-metric code of $\mathcal{L}_{m,q}$ and let $\mathrm{C}$ be its associated evaluation code over an $\F_q$-basis of $\beta \F_{q^n}$, for some $\beta \in \F_{q^m}^*$, with minimum distance $d$. Let $t \in \mathbb{Z}^+$ such that: \begin{itemize}  
\item[1.] $n-d+1 \leq t \leq n-\left\lfloor\frac{d-1}{2}\right\rfloor-1$ 
\item[2.] $t$ is a power of $2$; 
\item[3.] $n=t^2-1$; 
\item[4.] $\mathcal{C}$ contains $(\mathcal{G}_{n,2,\sigma})^{\sigma^j}$, with $j< n-t$.\end{itemize}
Then 
$\mathrm{C}$ cannot be list decoded efficiently at any radius $\tau \geq n-t$. 

\end{theorem}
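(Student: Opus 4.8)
The plan is to mirror the proof structure of Theorem \ref{teo:applbinom} and Theorem \ref{thm:RMcontGn,2}, substituting the trinomial family of $\sigma$-subspace polynomials from Corollary \ref{coro:trin2} (the characteristic-$2$ family $\mathrm{Tri}_2$) for the binomial family $\mathcal{N}$ used there. The whole argument should be a direct application of Theorem \ref{teo:main}, so the real work is just checking that the hypotheses of that theorem are met with the right choice of the parameter $l$ and the right set $Sub$.

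First I would set $l=t$ and identify the ambient polynomial set: since the elements of $\mathrm{Tri}_2$ have $\sigma$-degree $t$ and are trinomials of the shape $x^{\sigma^t}-bx^\sigma-ax$, they live in $\mathrm{Pol}_{t,2}$ (recalling that $\mathrm{Pol}_{l,h}$ collects monic-leading-term $\sigma$-polynomials of $\sigma$-degree $l$ whose remaining nonzero terms have $\sigma$-degree at most $h-1$; here $h=2$, so the lower terms are $a_1x^\sigma+a_0x$). Thus I take $Sub=\mathrm{Tri}_2\subseteq\mathrm{Pol}_{t,2}$, which by Corollary \ref{coro:trin2} is a set of $\sigma$-subspace polynomials of size $g=\frac{q^n-1}{q-1}$. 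Next I verify the two bullet hypotheses of Theorem \ref{teo:main}: the containment $(\mathcal{G}_{n,2,\sigma})^{\sigma^j}\subseteq\mathcal{C}$ with $j<n-t$ is exactly hypothesis 4, matching $h=2$; and the interval condition $n-d+1\le l\le n-\lfloor\frac{d-1}{2}\rfloor-1$ with $l\ge h=2$ is exactly hypothesis 1 together with $t\ge 2$ (which holds automatically since $t$ is a power of $2$ and $n=t^2-1\ge 3$ forces $t\ge 2$).

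With these checks in place, Theorem \ref{teo:main} (applied with $l=t$) yields a word $\mathbf{w}\in\F_{q^m}^n\setminus\mathrm{C}$ with
\[
\lvert \mathrm{C}\cap B_{n-t}(\mathbf{w})\rvert \geq \frac{q^n-1}{q-1}.
\]
Since $\frac{q^n-1}{q-1}\sim q^{n-1}$ is exponential in $n$, the list has exponential size, so writing it down is infeasible; this gives the conclusion that $\mathrm{C}$ cannot be list decoded efficiently at the radius $\tau=n-t$. Extending to every $\tau\ge n-t$ follows because enlarging the radius only enlarges the ball $B_{\tau}(\mathbf{w})\supseteq B_{n-t}(\mathbf{w})$, so the intersection with $\mathrm{C}$ remains exponentially large (this is exactly the reasoning behind Corollary \ref{cor:notdecod}).

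I expect no genuine obstacle here, since the statement is an instance of the general machinery; the only points that require a little care are (i) confirming that $\mathrm{Tri}_2$ really sits inside $\mathrm{Pol}_{t,2}$, i.e.\ that the exponents $\sigma,\sigma^t$ are compatible with $h=2$ and the leading coefficient is normalized to $1$ (which it is, the polynomials being monic of $\sigma$-degree $t$), and (ii) confirming that the arithmetic conditions of Corollary \ref{coro:trin2} ($n=t^2-1$, $t$ and $q$ powers of $2$, $\gcd(s,n)=1$) are compatible with the hypotheses of the present theorem and with $n\mid m$, $\gcd(s,m)=1$, so that $\mathrm{Tri}_2$ is available as a subset of $\mathcal{L}_{m,\sigma}$ of the stated size. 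The sharper consequence analogous to Remark \ref{rm:size} (improving the list-size bound for Gabidulin codes themselves) would follow verbatim, but it is not part of the statement to be proved.
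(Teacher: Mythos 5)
Your proposal is correct and coincides with the proof the paper intends: the paper leaves this theorem unproved with the remark that it follows ``by using the same techniques of the above results,'' namely an application of Theorem \ref{teo:main} and Corollary \ref{cor:notdecod} with $l=t$, $h=2$ and $Sub=\mathrm{Tri}_2\subseteq \mathrm{Pol}_{t,2}$ from Corollary \ref{coro:trin2}, yielding the list-size bound $\frac{q^n-1}{q-1}\sim q^{n-1}$ recorded in Remark \ref{rm:size2}. Your verifications of the parameter hypotheses ($l\geq h$, $j<n-l$, the interval condition, and the availability of $\mathrm{Tri}_2$ in $\mathcal{L}_{m,\sigma}$) are exactly the checks required, so there is no gap.
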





\begin{remark}\label{rm:size2}
As for Remark \ref{rm:size}, we observe that if $\mathrm{C}=\mathrm{G}_{n,k,\sigma}$ with $k\geq 2$ and with constraints on the involved parameters as prescribed in Theorem \ref{thm:RMcontGn,2}, then there exists a word $\mathbf{w} \in \F_{q^m}^n \setminus \mathrm{C}$ such that
\[|\mathrm{C} \cap B_{n-t}(\mathbf{w})| \geq \frac{q^n-1}{q-1} \sim q^{n-1},\] which improves the list size provided in \cite[Theorem 3]{raviv_2016} and \cite[Theorem 8]{TrombZullo}, for any value of the radius $\tau$ greater than or equal to $t^2-t-1$.
\end{remark}

Now we consider the families of subspace polynomials introduced in Subsection \ref{sec:gentrace}.

\begin{theorem} \label{teo:usesubfield}
Let $p$ be a prime, $q$ be a power of $p$ and $k,t$ be positive integers. Consider $q'=p^r$, where $r$ is a non-negative integer, let $p_i=1+q'+q'^2\ldots+q'^i$, with $i \geq 0$, let $n=tp_k$ and $m$ be a multiple of $n$. 
Let $s$ be a positive integer coprime with $n$ and let $\sigma\colon x \in \fqn \mapsto x^{q^s}\in \fqn$.
Let $\mathcal{C}$ be a rank-metric code of $\mathcal{L}_{m,\sigma}$ and let $\mathrm{C}$ be the evaluation code of $\mathcal{C}$ over an $\F_q$-basis of $\beta\F_{q^n}$, for some $\beta \in \F_{q^m}^*$, and minimum distance $d$. Suppose that:
\begin{itemize}
    \item[1.] $\lfloor \frac{d-1}{2} \rfloor+1 \leq tq'^k \leq d-1$;
    \item[2.] $(\mathcal{G}_{n,tp_{k-2}+1,\sigma})^{\sigma^j} \subseteq \mathcal{C}$, with $j< tq'^k$. 
\end{itemize}
Then
$\mathrm{C}$ cannot be list decoded efficiently at any radius $\tau \geq tq'^k$. 
\end{theorem}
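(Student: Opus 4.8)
The plan is to read this statement as a direct application of Theorem~\ref{teo:main} together with Corollary~\ref{cor:notdecod}, where the required family of $\sigma$-subspace polynomials is exactly the family $\mathcal{Q}$ of \eqref{eq:familyhuang} produced by Theorem~\ref{teo:familyquasisubfield}. The entire argument reduces to choosing the two parameters $l$ and $h$ of Theorem~\ref{teo:main} correctly and then verifying, one by one, that its hypotheses hold with $Sub=\mathcal{Q}$.

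First I would set $l=tp_{k-1}$ and $h=tp_{k-2}+1$. The key arithmetic identity is $p_k-p_{k-1}=q'^k$, which yields $n-l=tp_k-tp_{k-1}=tq'^k$; this is precisely the target radius. Recall from Proposition~\ref{prop:subspacehuang} that $f(x)=x+\sum_{i=0}^{k-1}x^{\sigma^{tp_i}}$ has $\sigma$-degree $tp_{k-1}=l$, and from \eqref{eq:familyhuang} that each element $\beta^{-\sigma^{tp_{k-1}}}f(\beta x)$ of $\mathcal{Q}$ is monic of $\sigma$-degree $l$ with its remaining nonzero monomials supported on the $\sigma$-degrees $0,tp_0,tp_1,\dots,tp_{k-2}$. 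Since the largest of these is $tp_{k-2}=h-1$, every element of $\mathcal{Q}$ lies in $\mathrm{Pol}_{l,h}$. This containment is the one point that must be checked by inspection, and it is where the explicit shape of the generalized-trace polynomials is used; I expect it to be the only genuinely delicate step, since it is what guarantees both that the intermediate exponents never exceed $tp_{k-2}$ and that a difference $R-P$ of two such polynomials has $\sigma$-degree below $h$ (after the $\sigma^j$ shift), hence lands in the assumed Gabidulin subcode.

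Next I would dispatch the numerical hypotheses of Theorem~\ref{teo:main}. Subtracting from $n$, the requirement $n-d+1\leq l\leq n-\lfloor\frac{d-1}{2}\rfloor-1$ is equivalent to $\lfloor\frac{d-1}{2}\rfloor+1\leq n-l\leq d-1$, that is $\lfloor\frac{d-1}{2}\rfloor+1\leq tq'^k\leq d-1$, which is exactly hypothesis~1. Hypothesis~2 supplies $(\mathcal{G}_{n,h,\sigma})^{\sigma^j}\subseteq\mathcal{C}$ with $j<tq'^k=n-l$, matching the constraint $j<n-l$; and $l=tp_{k-1}\geq tp_{k-2}+1=h$ holds since $t\geq 1$ and $p_{k-1}\geq p_{k-2}+1$. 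Taking $Sub=\mathcal{Q}$ of size $g=\frac{q^n-1}{q^t-1}$, Theorem~\ref{teo:main} then produces a word $\mathbf{w}\in\F_{q^m}^n\setminus\mathrm{C}$ with $\lvert\mathrm{C}\cap B_{n-l}(\mathbf{w})\rvert\geq\frac{q^n-1}{q^t-1}$.

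Finally, since $\frac{q^n-1}{q^t-1}=\Theta(q^{n-t})$ grows exponentially in the length $n$, Corollary~\ref{cor:notdecod} applies verbatim and rules out efficient list decoding at every radius $\tau\geq n-l=tq'^k$, which is the claim. The proof is therefore a bookkeeping exercise once the parameters $l=tp_{k-1}$ and $h=tp_{k-2}+1$ are fixed; the substantive content lives entirely in the identity $n-l=tq'^k$ and in the verification that $\mathcal{Q}\subseteq\mathrm{Pol}_{l,h}$.
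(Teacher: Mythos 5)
Your proposal is correct and follows essentially the same route as the paper's own proof: the paper likewise takes $Sub=\mathcal{Q}$ from Theorem~\ref{teo:familyquasisubfield} inside $\mathrm{Pol}_{tp_{k-1},tp_{k-2}+1}$ and applies Theorem~\ref{teo:main} (hence Corollary~\ref{cor:notdecod}) to produce a word $\mathbf{w}$ with $\lvert \mathrm{C}\cap B_{n-tp_{k-1}}(\mathbf{w})\rvert \geq \frac{q^n-1}{q^t-1}$, where $n-tp_{k-1}=tq'^k$. Your explicit checks --- that $\mathcal{Q}\subseteq\mathrm{Pol}_{l,h}$ with $l=tp_{k-1}$, $h=tp_{k-2}+1$, and that hypothesis~1 is exactly the constraint $n-d+1\leq l\leq n-\lfloor\frac{d-1}{2}\rfloor-1$ rewritten via $n-l=tq'^k$ --- are precisely the bookkeeping the paper leaves implicit.
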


\begin{proof}
Consider 
$$\mathrm{Pol}_{tp_{k-1},tp_{k-2}+1}=\{x^{\sigma^{tp_{k-1}}}+a_{tp_{k-2}}x^{\sigma^{tp_{k-2}}}+\cdots+a_0x:a_i \in \F_{q^n}\}\subseteq \mathcal{L}_{n,\sigma}.
$$ 

By applying Theorem \ref{teo:main} choosing  $Sub=\mathcal{Q}$, the family described in Theorem \ref{teo:familyquasisubfield}, there exists a word $\mathbf{w}\in \F_{q^m}^n \setminus \mathrm{C}$ such that 
$$
\lvert \mathrm{C} \cap B_{n-tp_{k-1}}(\mathbf{w}) \rvert \geq |\mathcal{Q}|= \frac{q^n-1}{q^t-1},
$$
which proves the first point.
The second point follows from the fact that $\frac{q^n-1}{q^t-1} \sim q^{n/2}$.
\end{proof}

The previous results extends \cite[Theorem 4]{raviv_2016}, in the sense that when $r=0$ and $\mathcal{C}$ is a Gabidulin code, we obtain \cite[Theorem 4]{raviv_2016}.
Now, applying Theorem \ref{teo:main} to the family $\mathcal{Q}'$ of Theorem \ref{teo:familyquasisubfield}, the following result holds.

\begin{theorem} \label{teo:usesubfield2}
Let $p$ be a prime, $q$ be a power of $p$ and $k,t$ be positive integers. Consider $q'=p^r$, where $r$ is a non-negative integer, let $p_i=1+q'+q'^2\ldots+q'^i$, with $i \geq 0$, let $n=tp_k$ and $m$ be a multiple of $n$. 
Let $s$ be a positive integer coprime with $n$ and let $\sigma\colon x \in \fqn \mapsto x^{q^s}\in \fqn$.
Let $\mathcal{C}$ be a rank-metric code of $\mathcal{L}_{m,\sigma}$. Let $\mathrm{C}$ be the evaluation code of $\mathcal{C}$ over an $\F_q$-basis of $\beta\F_{q^n}$, for some $\beta \in \F_{q^m}^*$, and minimum distance $d$. Suppose that:
\begin{itemize}
    \item[1.] $\lfloor \frac{d-1}{2} \rfloor+1 \leq tq'^k \leq d-1$;
    \item[2.] $(\mathcal{G}_{n,t(p_{k-1}-p_0)+1,\sigma})^{\sigma^j} \subseteq \mathcal{C}$, with $j< tq'^k$. 
\end{itemize}
Then
$\mathrm{C}$ cannot be list decoded efficiently at any radius $\tau \geq tq'^k$. 
\end{theorem}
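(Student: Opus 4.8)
The plan is to mirror the proof of Theorem \ref{teo:usesubfield} almost verbatim, substituting the adjoint family $\mathcal{Q}'$ for $\mathcal{Q}$ and adjusting the bookkeeping on $\sigma$-degrees accordingly. The whole argument is an application of Theorem \ref{teo:main}, so the work consists of (i) identifying the correct instance of $\mathrm{Pol}_{l,h}$ into which $\mathcal{Q}'$ embeds, (ii) checking that the hypotheses of Theorem \ref{teo:main} are satisfied with the prescribed parameters, and (iii) reading off the size bound and the non-decodability conclusion from Corollary \ref{cor:notdecod}.

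First I would recall from Proposition \ref{cor:adjoinsubfield} that the base polynomial is
\[
g(x)=x^{\sigma^{tp_{k-1}}}+\sum_{i=1}^{k-1} x^{\sigma^{t(p_{k-1}-p_i)}},
\]
whose top $\sigma$-degree is $tp_{k-1}$ and whose nonzero terms have exponents $t(p_{k-1}-p_i)$ for $i=1,\dots,k-1$ together with the top term. The crucial point is to determine the \emph{smallest} nonzero exponent appearing in $\mathcal{Q}'$, since that dictates the value of $h$ in $\mathrm{Pol}_{l,h}$. The largest index $p_i$ subtracted is $p_{k-1}$ itself (giving the constant-like lowest exponent), but within $g$ the running index only goes up to $i=k-1$; the minimal exponent $t(p_{k-1}-p_{k-1})=0$ does not occur because the sum starts at $i=1$, so the least exponent is $t(p_{k-1}-p_1)$. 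Consequently $\mathcal{Q}'$ consists of $\sigma$-subspace polynomials of $\sigma$-degree $l=tp_{k-1}$ whose nonzero monomials all have exponent at least $t(p_{k-1}-p_1)$, and I would set $h$ so that $l-h+1=t(p_{k-1}-p_1)$, i.e. place $\mathcal{Q}'\subseteq \widehat{\mathrm{Pol}}$-type set, or dualize to fit the $\mathrm{Pol}_{l,h}$ frame matching the Gabidulin subcode $(\mathcal{G}_{n,t(p_{k-1}-p_0)+1,\sigma})^{\sigma^j}$ stated in hypothesis 2. The index $t(p_{k-1}-p_0)+1=t(p_{k-1}-1)+1$ is exactly what is required so that the difference $R-P$ of two evaluation vectors lands in the shifted Gabidulin code, since both $R$ and $P$ share the top monomial and their difference has $\sigma$-degree strictly below $l$ but spread across exactly the relevant window.

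Next I would verify the numerical constraints. With $n=tp_k$ and $l=tp_{k-1}$ one has $n-l=tp_k-tp_{k-1}=tq'^k$, so the target radius $n-l=tq'^k$ matches hypothesis 1, which asks precisely that $\lfloor\frac{d-1}{2}\rfloor+1\le tq'^k\le d-1$; rewriting these inequalities in the form $n-d+1\le l\le n-\lfloor\frac{d-1}{2}\rfloor-1$ required by Theorem \ref{teo:main} is a direct substitution. The condition $j<tq'^k=n-l$ is exactly hypothesis 2, and $l\ge h$ holds since $h\le l$ by construction. Theorem \ref{teo:familyquasisubfield} supplies $|\mathcal{Q}'|=\frac{q^n-1}{q^t-1}$, so Theorem \ref{teo:main} yields a word $\mathbf{w}\in\F_{q^m}^n\setminus\mathrm{C}$ with $|\mathrm{C}\cap B_{n-l}(\mathbf{w})|\ge \frac{q^n-1}{q^t-1}\sim q^{n/2}$, which is $O(q^n)$-exponential in the code length; Corollary \ref{cor:notdecod} then gives non-decodability at every radius $\tau\ge tq'^k$.

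The main obstacle I anticipate is purely an indexing issue: correctly pinning down the value of $h$ — equivalently, the lowest $\sigma$-degree occurring in $\mathcal{Q}'$ — so that the Gabidulin subcode $(\mathcal{G}_{n,t(p_{k-1}-p_0)+1,\sigma})^{\sigma^j}$ in hypothesis 2 is the right one to absorb all differences $R-P$. Because $\mathcal{Q}'$ comes from the \emph{adjoint} polynomial $g$, its exponent pattern is the reversed/complemented version of that of $f$ used in Theorem \ref{teo:usesubfield}, and getting this reversal right (so that $l-h+1$ aligns with the smallest exponent $t(p_{k-1}-p_1)$ and hence $h=tp_0\cdot\text{(appropriate shift)}$) is the only delicate step; once the window of exponents is identified, the rest is a mechanical appeal to Theorem \ref{teo:main} and Corollary \ref{cor:notdecod} exactly as in the proof of Theorem \ref{teo:usesubfield}.
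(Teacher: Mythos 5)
Your overall strategy---embed $\mathcal{Q}'$ into a suitable $\mathrm{Pol}_{l,h}$, apply Theorem \ref{teo:main}, and conclude via Corollary \ref{cor:notdecod}---is exactly the paper's, which proves this result by the one-line observation that Theorem \ref{teo:main} applies to the family $\mathcal{Q}'$ of Theorem \ref{teo:familyquasisubfield}, modelled on the proof of the companion result for $\mathcal{Q}$. Your numerical checks are also right: with $l=tp_{k-1}$ one gets $n-l=t(p_k-p_{k-1})=tq'^k$, hypothesis 1 translates into the window $n-d+1\leq l\leq n-\lfloor\frac{d-1}{2}\rfloor-1$, and $\lvert\mathcal{Q}'\rvert=\frac{q^n-1}{q^t-1}$ is exponential. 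However, the step you yourself single out as the only delicate one---pinning down $h$---is carried out incorrectly. By the definition in \eqref{eq:insiemepol}, membership of a monic polynomial of $\sigma$-degree $l$ in $\mathrm{Pol}_{l,h}$ is governed by its \emph{largest} exponent strictly below $l$ (every non-leading exponent must be at most $h-1$), not by its smallest nonzero exponent; the smallest-exponent criterion is the one relevant to $\widehat{\mathrm{Pol}}_{l,h}$ and Theorem \ref{teo:main2}, and that route would require the different hypothesis $(\mathcal{G}_{n,h,\sigma})^{\sigma^{l-h+1+j}}\subseteq\mathcal{C}$ with $j<n-2l+h-1$, so it cannot deliver the statement as given. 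Your exponent bookkeeping is also wrong on its own terms: since $p_i$ is increasing in $i$, the exponents $t(p_{k-1}-p_i)$ \emph{decrease} in $i$, so $t(p_{k-1}-p_1)$ is the largest of the summed exponents, not the least; the exponent $0$ \emph{does} occur (at $i=k-1$, since $t(p_{k-1}-p_{k-1})=0$); and your proposed equation $l-h+1=t(p_{k-1}-p_1)$ gives $h=tp_1+1$, which does not match hypothesis 2 in general.

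The correct identification is immediate once one computes the adjoint properly: by Remark \ref{rem:adjoint} applied to $f(x)=x+\sum_{i=0}^{k-1}x^{\sigma^{tp_i}}$, the support of $g$ is $\{tp_{k-1}\}\cup\{t(p_{k-1}-p_i)\colon i=0,\dots,k-1\}$ (the display in Proposition \ref{cor:adjoinsubfield} carries an index typo---the sum should start at $i=0$, otherwise $g$ would have one monomial fewer than $f$), so the second-largest exponent is $t(p_{k-1}-p_0)$, attained at $i=0$. Since by Proposition \ref{polalphav} every element of $\mathcal{Q}'$ has the same monomial support as $g$, one gets $\mathcal{Q}'\subseteq\mathrm{Pol}_{tp_{k-1},\,t(p_{k-1}-p_0)+1}$, and this is precisely where the value $h=t(p_{k-1}-p_0)+1$ in hypothesis 2 comes from: differences of two elements of $\mathrm{Pol}_{l,h}$ have $\sigma$-degree at most $h-1$, hence land in $(\mathcal{G}_{n,h,\sigma})^{\sigma^j}$ after the shift, with $j<n-l=tq'^k$ as required by Theorem \ref{teo:main}. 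With this correction the remainder of your argument goes through verbatim and coincides with the paper's proof.
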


\section{Explicit examples}\label{sec:appl2}

In this section we will see explicit examples of rank-metric codes which cannot be list decoded efficiently from a certain radius or at all in some cases.

We start by showing how some known results on generalized Gabidulin codes and further families of codes may be obtained as corollaries of our results.

\begin{theorem}(\cite[Theorem 4]{raviv_2016},\cite[Theorem 10 and Corollary 11]{TrombZullo})\label{thm:listdecGabatall}
Let $k, t, n$ and $m \in \mathbb{Z}^+$ positive integers such that $k \leq n$, $t \mid n $ and $n \mid m$. 
Let $\mathrm{G}_{n,k,\sigma}$ be the evaluation code of $\mathcal{G}_{m,k,\sigma}$ over an $\F_q$-basis of $\beta \F_{q^m}$, with $\beta \in \F_{q^m}^*$. 

\noindent If $\left\lfloor \frac{d-1}{2} \right\rfloor+1 \leq t \leq  d-1$, then
\begin{itemize}
\item The code $\mathrm{G}_{n,k,\sigma}$ cannot be list decoded efficiently at any radius $\tau \geq t$;
\item In particular, any generalized Gabidulin code $\mathrm{G}_{n,k,\sigma}$ with minimum distance $d=2t$, cannot be list decoded efficiently at all.
\end{itemize}
\end{theorem}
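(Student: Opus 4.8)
The plan is to derive Theorem \ref{thm:listdecGabatall} as a direct corollary of the general machinery already established, namely Theorem \ref{teo:main} (or the equivalent Corollary \ref{cor:notdecod}), specialized to the case where $\mathcal{C}=\mathcal{G}_{m,k,\sigma}$ is itself a generalized Gabidulin code and the auxiliary family of $\sigma$-subspace polynomials is the trace-like family $\mathcal{T}$ from Proposition \ref{lemma:trace}. The key observation is that a generalized Gabidulin code trivially contains a copy of a smaller generalized Gabidulin code, so the subcode hypothesis ``$(\mathcal{G}_{n,h,\sigma})^{\sigma^j}\subseteq\mathcal{C}$'' is automatically satisfied with $j=0$ and a suitable $h$. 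Since $\mathrm{G}_{n,k,\sigma}$ is MRD of minimum distance $d=n-k+1$, all the distance bookkeeping is exact, which lets me check the numerical constraints cleanly.

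First I would set up the parameters. Given $t$ with $\lfloor\frac{d-1}{2}\rfloor+1\leq t\leq d-1$ and $t\mid n$, I want to invoke Theorem \ref{teo:main} with list radius $n-l=t$, i.e. $l=n-t$. I must verify $n-d+1\leq l\leq n-\lfloor\frac{d-1}{2}\rfloor-1$, which upon substituting $l=n-t$ becomes exactly the hypothesis $\lfloor\frac{d-1}{2}\rfloor+1\leq t\leq d-1$, so this transcribes directly. Next, the family $\mathcal{T}$ of Proposition \ref{lemma:trace} consists of $\sigma$-subspace polynomials of $\sigma$-degree $n-t$ lying in the set $\mathrm{Pol}_{n-t,\,n-2t+1}$ (their nonzero coefficients occur only at $\sigma$-degrees that are multiples of $t$, the lowest positive one being $t$, so they agree with the shape required after identifying $h=n-2t+1$), and it has size $\frac{q^n-1}{q^t-1}=O(q^n)$. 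Correspondingly I take $h=n-2t+1$ and need $(\mathcal{G}_{n,h,\sigma})^{\sigma^0}\subseteq\mathcal{G}_{m,k,\sigma}$; this holds provided $h\leq k$ and $h\geq 1$, which I would confirm from $t\geq\lfloor\frac{d-1}{2}\rfloor+1$ and $d=n-k+1$. With these choices Corollary \ref{cor:notdecod} yields that $\mathrm{G}_{n,k,\sigma}$ cannot be list decoded efficiently at any radius $\tau\geq t$, giving the first bullet.

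For the second bullet, suppose $d=2t$ is even. Then $\lfloor\frac{d-1}{2}\rfloor=t-1$, so the unique decoding radius is $t-1$ and the smallest admissible value in the range is exactly $\lfloor\frac{d-1}{2}\rfloor+1=t$. Applying the first bullet at this minimal radius $\tau=t$ shows the code is not efficiently list decodable already at one above its unique decoding radius, which by the definition recalled in the introduction means $\mathrm{G}_{n,k,\sigma}$ is not efficiently list decodable at all. I would also note that the divisibility hypothesis $t\mid n$ together with $d=2t$ guarantees $\frac{d}{2}=t$ divides $n$, matching the phrasing of the theorem.

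The main obstacle I anticipate is purely bookkeeping rather than conceptual: one must check that the trace family $\mathcal{T}$, whose generic element is $\sum_{i=0}^{n/t-1}\beta^{\sigma^{it}-\sigma^{n-t}}x^{\sigma^{it}}$, genuinely sits inside $\mathrm{Pol}_{l,h}$ with the correct leading and trailing $\sigma$-degrees after the monic normalization, and that the shift parameter $j$ satisfies the strict inequality $j<n-l=t$; taking $j=0$ makes this automatic but I must ensure the subcode containment is with the untwisted code. The only subtlety is matching the index $h=n-2t+1$ used for the Gabidulin subcode to the gap between consecutive exponents in $\mathcal{T}$, so the difference $R-P$ of two family members indeed lands in $(\mathcal{G}_{n,h,\sigma})^{\sigma^j}$; this is exactly the alignment that Theorem \ref{teo:main} was designed to handle, so once the indices are pinned down the result follows without further work.
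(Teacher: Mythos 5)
Your proposal is correct and follows essentially the same route as the paper: the published proof simply notes that $t\geq\left\lfloor\frac{d-1}{2}\right\rfloor+1$ forces $k\geq n-2t+1$, so $\mathcal{G}_{n,n-2t+1,\sigma}\subseteq\mathcal{G}_{m,k,\sigma}$, and then cites Theorem \ref{teo:usetrace}, whose own proof is exactly your application of Theorem \ref{teo:main} with $\mathrm{Pol}_{n-t,n-2t+1}$ and the trace family $\mathcal{T}$ of Proposition \ref{lemma:trace} (with $j=0$). You have merely inlined that intermediate theorem, and your parameter bookkeeping ($l=n-t$, $h=n-2t+1$, the range check, and the $d=2t$ conclusion via the unique decoding radius $t-1$) matches the paper's.
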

\begin{proof}
As $t\geq \left\lfloor \frac{d-1}2 \right\rfloor+1$, we have that $k\geq n-2t+1$, so that $\mathrm{G}_{n,n-2t+1,\sigma} \subseteq \mathrm{G}_{n,k,\sigma}$.
So, the assertion then follows by applying Theorem \ref{teo:usetrace}. 
\end{proof}

The above proof may be adapted to get a similar result for other examples of MRD-codes as proved in \cite{TrombZullo}, which we slightly generalize.

\begin{theorem}\label{thm:listdecGenatallref}
Let $k,t,n$ and $m \in \mathbb{Z}^+$ such that $k \leq n$, $t \mid n$ and $n \mid m$. Let $\mathcal{C}=\mathcal{H}_{n,k,\sigma}(f_1,f_2)$ as in \eqref{eq:codes} and let $\mathrm{C} = \mathrm{H}_{n,k,\sigma}(f_1,f_2)$ be the associated evaluation code over an $\F_q$-basis of $\beta \F_{q^n}$, for some $\beta \in \F_{q^m}^*$, where $f_2$ is not the zero polynomial.
If $d(\mathrm{C})=d=n-k+1$, $t \leq  d-1$ and 
\[ t\geq \left\{ \begin{array}{ll} \left\lfloor \frac{d-1}{2} \right\rfloor+1 & \text{if}\,\, n-k\,\,\text{is even},\\  \\
\left\lfloor \frac{d-1}{2} \right\rfloor+2 & \text{if}\,\, n-k\,\,\text{is odd,}\\ 
\end{array} \right. \]
then
\begin{itemize}
    \item The code $\mathrm{C}$ cannot be list decoded efficiently at any radius $\tau \geq t$.
    \item  If $d=2t-2$, then $\mathrm{C}$ cannot be list decoded efficiently at any radius $\tau$ greater than or equal to $\left\lfloor \frac{d-1}2 \right\rfloor+2$.
    \item  If $d=2t-1$, then $\mathrm{C}$ cannot be list decoded efficiently at all.
\end{itemize}
\end{theorem}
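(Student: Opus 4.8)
The plan is to reduce this theorem to the already-established Theorem \ref{teo:usetrace}, which treats codes containing a power of a generalized Gabidulin code and uses the trace-type family of $\sigma$-subspace polynomials from Proposition \ref{lemma:trace}. The key structural fact to exploit is recorded just after Table \ref{table:ex}: every code $\mathcal{H}_{n,k,\sigma}(f_1,f_2)$ with $f_2$ nonzero has Gabidulin index $k-1$, meaning it contains a subcode equivalent to $\mathcal{G}_{n,k-1,\sigma}$. So first I would identify, inside $\mathcal{C}=\mathcal{H}_{n,k,\sigma}(f_1,f_2)$, an explicit copy of $(\mathcal{G}_{n,h,\sigma})^{\sigma^j}$ for a suitable $h$ and shift $j$. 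Recalling the shape \eqref{eq:codes}, the subcode obtained by setting $a=0$ consists of the polynomials $\sum_{i=1}^{k-1}a_i x^{\sigma^i}$, which is exactly $(\mathcal{G}_{n,k-1,\sigma})^{\sigma}$ (a Gabidulin code of $\sigma$-degree $k-2$, shifted up by one). Thus $\mathcal{C}$ contains $(\mathcal{G}_{n,k-1,\sigma})^{\sigma}$, giving me $h=k-1$ and $j=1$ to feed into the machinery.

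Next I would verify that the hypotheses of Theorem \ref{teo:usetrace} are met with this $h=k-1$. That theorem requires $(\mathcal{G}_{n,n-2t+1,\sigma})^{\sigma^j}\subseteq \mathcal{C}$ with $j<t-1$, together with $\lfloor\frac{d-1}{2}\rfloor+1\le t\le d-1$ and $t\mid n$. Since $\mathcal{C}$ contains $(\mathcal{G}_{n,k-1,\sigma})^{\sigma}$ and containment is monotone in the dimension, it suffices to check $n-2t+1\le k-1$, i.e. $k\ge n-2t+2$; using $d=n-k+1$ this is $t\ge \frac{d+1}{2}$, which is where the parity case-split enters. When $n-k$ is even, $d-1=n-k$ is even so $\lfloor\frac{d-1}{2}\rfloor=\frac{n-k}{2}$ and the bound $t\ge\lfloor\frac{d-1}{2}\rfloor+1$ already forces $k-1\ge n-2t+1$; when $n-k$ is odd, $d-1$ is odd and one needs the extra unit, $t\ge\lfloor\frac{d-1}{2}\rfloor+2$, precisely as the statement prescribes. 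I would also confirm the shift condition $j=1<t-1$, which holds since $t\ge 2$ in the nontrivial range, and that $t\mid n$ comes free from the trace family after choosing $t$ as a divisor of $n$ — here the statement does not impose $t\mid n$ explicitly, so I would instead invoke the binomial family of Proposition \ref{prop:binom} or argue directly that the relevant divisibility can be arranged, which is the delicate point.

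For the three bulleted conclusions, the first is an immediate transcription of Theorem \ref{teo:usetrace}'s first bullet once the subcode containment is in place. For the $d=2t-2$ and $d=2t-1$ refinements I would track which residue class $n-k$ falls into: $d=2t-1$ forces $n-k=2t-2$ even, so the ``list decodable at all'' conclusion follows from the second bullet of Theorem \ref{teo:usetrace} applied with this even value (here $d=2t-1$ is odd, and the unique decoding radius $\lfloor\frac{d-1}{2}\rfloor=t-1$ meets $\tau\ge t$, covering every super-unique radius). The case $d=2t-2$ gives $n-k=2t-3$ odd, which is why the threshold slides up to $\lfloor\frac{d-1}{2}\rfloor+2$ rather than $+1$.

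The main obstacle I anticipate is the divisibility bookkeeping around $t\mid n$ and the parity-driven shift of the threshold. Theorem \ref{teo:usetrace} bakes in $t\mid n$, yet the statement being proved omits it; reconciling this means either restricting to admissible $t$ that divide $n$ (and arguing such a $t$ exists in the prescribed interval) or substituting the binomial family $\mathcal{N}$ of Proposition \ref{prop:binom}, whose $\sigma$-degree is $t$ rather than $n-t$, which in turn changes the radius and the required Gabidulin subcode from $\mathcal{G}_{n,n-2t+1,\sigma}$ to $\mathcal{G}_{n,1,\sigma}$. Getting the index $h=k-1\ge n-2t+1$ to line up with the correct family, while respecting the $f_2\ne 0$ hypothesis that guarantees the Gabidulin-index-$(k-1)$ subcode, is the genuinely careful part; the rest is routine substitution into the earlier theorems.
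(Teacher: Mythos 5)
Your proposal is correct and follows essentially the same route as the paper: the paper's proof likewise observes that the hypothesis on $t$ gives $n-2t+1\leq k-1$, hence $\mathcal{G}_{n,n-2t+1,\sigma}\subseteq \mathcal{G}_{n,k-1,\sigma}\subset \mathcal{H}_{n,k,\sigma}(f_1,f_2)$ (the Gabidulin-index-$(k-1)$ subcode, which, as you correctly make explicit and the paper leaves implicit, is really the $\sigma$-shifted copy $(\mathcal{G}_{n,k-1,\sigma})^{\sigma}$ obtained by setting $a=0$ in \eqref{eq:codes}, so $j=1$), and then applies Theorem \ref{teo:usetrace}; your parity computation $t\geq (d+1)/2 \Leftrightarrow n-2t+1\leq k-1$ and your reading of the three bullets ($d=2t-2$ even makes the threshold $\lfloor\frac{d-1}{2}\rfloor+2=t$; $d=2t-1$ makes the unique decoding radius $t-1$, whence ``not efficiently list decodable at all'') match the intended argument exactly.

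Two blemishes, neither fatal. First, you misread the statement: it \emph{does} impose $t\mid n$ (it appears alongside $k\leq n$ and $n\mid m$), so the ``genuinely careful part'' you anticipate --- arranging the divisibility or switching to the binomial family of Proposition \ref{prop:binom} --- is a non-issue; the contemplated detour would in any case alter the radius bookkeeping, since that family has $\sigma$-degree $t$ and would force $h=1$ rather than $h=n-2t+1$. Second, your claim that $j=1<t-1$ ``holds since $t\geq 2$'' is an arithmetic slip: $1<t-1$ requires $t\geq 3$. The boundary case $t=2$ (which the hypotheses permit, e.g.\ $d=3$, $n-k=2$) is still covered, because the underlying Theorem \ref{teo:main}, applied with $l=n-t$ and the trace family of Proposition \ref{lemma:trace}, only needs $j<n-l=t$, i.e.\ $1<2$; the condition $j<t-1$ in Theorem \ref{teo:usetrace} is stated more restrictively than its proof actually uses. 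With these repairs your argument coincides with the paper's.
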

\begin{proof}
Because of the hypothesis on $t$, $\mathrm{G}_{n,n-2t+1,\sigma} \subseteq \mathrm{G}_{n,k-1,\sigma} \subset \mathrm{H}_{n,k,\sigma}(f_1,f_2)$ and then Theorem \ref{teo:usetrace} applies. 
\end{proof}

Theorem \ref{thm:listdecGenatallref} generalizes \cite[Theorem 12 and Corollary 13]{TrombZullo} as the codes considered in \cite{TrombZullo} are evaluated over an $\fq$-basis of $\F_{q^n}$ and the last point of Theorem \ref{thm:listdecGenatallref} guarantees the existence of parameters such that the evaluation codes associated to the family of MRD-codes of shape \eqref{eq:codes} cannot be list decoded efficiently at all.

When $\mathcal{C}=\mathcal{H}_{n,k,\sigma}(f_1,f_2)$ defined as in \eqref{eq:codes} is not an MRD-code, then the associated evaluation code over $n$ $\F_q$-linearly independent elements may have minimum distance equals to either $n-k+1$ or $n-k$. In the former case Theorem \ref{thm:listdecGenatallref} applies. 
The latter case is covered by the following result.

\begin{theorem}
Let $k,t,n$ and $m \in \mathbb{Z}^+$ such that $k \leq n$, $t \mid n$ and $n \mid m$. Let $\mathcal{C}=\mathcal{H}_{n,k,\sigma}(f_1,f_2)$ as in \eqref{eq:codes} and let $\mathrm{C} = \mathrm{H}_{n,k,\sigma}(f_1,f_2)$ be the associated evaluation code over an $\F_q$-basis of $\beta \F_{q^n}$, for some $\beta \in \F_{q^m}^*$.
If $d(\mathrm{C})=d=n-k$ and $\left\lfloor \frac{d-1}{2} \right\rfloor+1 < t \leq  d-1$, then
\begin{itemize}
    \item The code $\mathrm{C}$ cannot be list decoded efficiently at any radius $\tau \geq t$.
    \item  If either $d=2t-2$ or $d=2t-3$, then $\mathrm{C}$ cannot be list decoded efficiently at any radius $\tau \geq \left\lfloor \frac{d-1}{2} \right\rfloor+2$.
\end{itemize}
\end{theorem}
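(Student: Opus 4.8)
The plan is to reduce everything to Theorem \ref{teo:usetrace}, exactly as in the proof of Theorem \ref{thm:listdecGenatallref}. The only structural difference from that result is that here $d=n-k$ instead of $d=n-k+1$, so that $k=n-d$; this is precisely why the admissible range for $t$ now involves the \emph{strict} inequality $\lfloor\frac{d-1}{2}\rfloor+1<t$ rather than the non-strict one. Once the right Gabidulin subcode is located inside $\mathcal{C}=\mathcal{H}_{n,k,\sigma}(f_1,f_2)$, both bullets fall out of Theorem \ref{teo:usetrace}.

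First I would exhibit the Gabidulin subcode. Setting $a=0$ in \eqref{eq:codes}, and using that $f_1,f_2$ are additive (so $f_1(0)=f_2(0)=0$), the surviving polynomials are exactly $\sum_{i=1}^{k-1}a_ix^{\sigma^i}$ with $a_i\in\F_{q^n}$, which is $(\mathcal{G}_{n,k-1,\sigma})^{\sigma}$. Hence $(\mathcal{G}_{n,k-1,\sigma})^{\sigma}\subseteq\mathcal{C}$ as subsets of $\mathcal{L}_{n,\sigma}$, independently of whether $\mathcal{C}$ is MRD.

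Next I would verify that the hypotheses force $1\leq n-2t+1\leq k-1$, so that the smaller code $(\mathcal{G}_{n,n-2t+1,\sigma})^{\sigma}$ is contained in $(\mathcal{G}_{n,k-1,\sigma})^{\sigma}\subseteq\mathcal{C}$. The upper bound $n-2t+1\leq k-1$ is, via $k=n-d$, equivalent to $t\geq\frac{d+2}{2}$, and a short parity check shows that $t>\lfloor\frac{d-1}{2}\rfloor+1$ gives $t\geq\lfloor\frac{d-1}{2}\rfloor+2\geq\frac{d+2}{2}$. The lower bound $n-2t+1\geq 1$ holds because $t\mid n$ together with $t\leq d-1<n$ forces $t\leq n/2$. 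With this containment and $j=1<t-1$ (the hypotheses are only satisfiable for $d\geq 4$, whence $t\geq 3$), Theorem \ref{teo:usetrace} applies and gives the first bullet, namely that $\mathrm{C}$ cannot be list decoded efficiently at any radius $\tau\geq t$. For the second bullet I would specialise $t$: when $d=2t-2$ (forcing $d$ even) one has $t=\frac{d}{2}+1=\lfloor\frac{d-1}{2}\rfloor+2$, and when $d=2t-3$ (forcing $d$ odd) one has $t=\frac{d+3}{2}=\lfloor\frac{d-1}{2}\rfloor+2$; in both cases $t$ equals the minimal admissible value $\lfloor\frac{d-1}{2}\rfloor+2$, so the threshold $\tau\geq t$ becomes $\tau\geq\lfloor\frac{d-1}{2}\rfloor+2$.

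The only genuinely delicate point, and hence the step to get right, is the bookkeeping in the third paragraph: checking that the shift from $d=n-k+1$ to $d=n-k$ is exactly absorbed by strengthening $t\geq\lfloor\frac{d-1}{2}\rfloor+1$ to a strict inequality, so that $n-2t+1$ still lands in the admissible window $[1,k-1]$ and the index $j=1$ still satisfies $j<t-1$. Everything else is a direct invocation of Theorem \ref{teo:usetrace}.
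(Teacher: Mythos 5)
Your proposal is correct and follows essentially the same route as the paper, whose one-line proof simply invokes Theorem \ref{teo:usetrace} via the containment $\mathcal{G}_{n,n-2t+1,\sigma}\subseteq \mathcal{G}_{n,k-1,\sigma}\subset \mathcal{H}_{n,k,\sigma}(f_1,f_2)$. You are in fact more careful on two points the paper leaves implicit: the subcode obtained by setting $a=0$ in \eqref{eq:codes} is really the shifted code $(\mathcal{G}_{n,k-1,\sigma})^{\sigma}$, so one must take $j=1$ (licit since $t\geq 3$), and your parity bookkeeping confirming that the strict inequality $t>\left\lfloor \frac{d-1}{2}\right\rfloor+1$ absorbs the shift from $d=n-k+1$ to $d=n-k$ so that $n-2t+1\leq k-1$ is exactly the verification the paper omits.
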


\begin{proof}
Again the result follows by applying Theorem \ref{teo:usetrace} to the code $\mathrm{C}$ since $\mathrm{G}_{n,n-2t+1,\sigma} \subseteq \mathrm{G}_{n,k-1,\sigma}\subset \mathrm{H}_{n,k,\sigma}(f_1,f_2)$.
\end{proof}

Also, Theorem \ref{teo:usetrace} applies to the following class of rank-metric codes, for which we may determine the minimum distance when $m$ is large enough. 
In particular, we will make use of the results in \cite{BZ}, which are written in terms of \emph{Moore exponent sets}. More precisely, we use they results since the set $\{i_0,\ldots,i_k\}\subseteq \mathbb{Z}/n\mathbb{Z}$ is a Moore exponent set if and only if $\langle x^{\sigma^{i_0}},\ldots,x^{\sigma^{i_k}}\rangle_{\fqn}$ is an MRD-code in $\mathcal{L}_{n,\sigma}$.

\begin{lemma} \label{lemma.C_j}
Let $k$ and $m \in \mathbb{Z}^+$ such that $k\leq m-1$ and let $\sigma\colon x \in \mathbb{F}_{q^m}\mapsto x^{q^s}\in \F_{q^m}$ be a generator of $\mathrm{Gal}(\mathbb{F}_{q^m}\colon \fq)$.
Let
\[ \C_j=\langle x^{\sigma^i} \colon i \in\{0,\ldots,k-1\}\setminus\{j\}\rangle_{\F_{q^m}}\subseteq \mathcal{L}_{m,\sigma}, \]
for some $j \in \{1,\ldots,k-2\}$.
If $q>5$ and 
\begin{equation}\label{eq:m} m> 
\left\{ \begin{array}{ll} 12s+2, & \text{if}\,\, k=3,\\
\frac{13}3 sk+\log_q(13\cdot 2^{\frac{10}3}), & \text{if}\,\, k\geq 4,\\
\end{array}\right. \end{equation}
then the minimum distance of $\C_j$ is $m-k$.
In particular, $\C_j$ is not an MRD-code.
\end{lemma}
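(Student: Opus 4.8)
The plan is to read off the minimum distance of $\C_j$ from the kernel dimensions of its nonzero codewords: since every $f\in\mathcal{L}_{m,\sigma}$ satisfies $\mathrm{rk}(f)=m-\dim_{\F_q}\ker f$, we have $d(\C_j)=m-D$, where $D$ is the largest $\F_q$-dimension of the kernel of a nonzero $\sigma$-polynomial in $\C_j$. The claim therefore amounts to the two inequalities $D\le k$ and $D\ge k$, and I would treat them separately, as they have very different flavours.

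The inequality $D\le k$ (equivalently $d(\C_j)\ge m-k$) is the routine half. Any codeword of $\C_j$ is a nonzero $\sigma$-polynomial supported on exponents at most $k$, hence of $\sigma$-degree at most $k$; by \cite[Theorem 5]{GQ2009} such a polynomial has at most $q^{k}$ roots in $\F_{q^m}$, so $\dim_{\F_q}\ker f\le k$. This gives $\mathrm{rk}(f)\ge m-k$ for every nonzero $f\in\C_j$, and uses neither the location $j$ of the gap nor the lower bound on $m$.

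The content of the lemma is the reverse inequality $D\ge k$: I must produce a single nonzero codeword of rank exactly $m-k$, that is, a $\sigma$-subspace polynomial $s(x)$ of maximal $\sigma$-degree $k$ whose coefficient of $x^{\sigma^{j}}$ vanishes (so that $s\in\C_j$). Writing $s=s_U$ for the $\sigma$-subspace polynomial attached to a $k$-dimensional $\F_q$-subspace $U\le\F_{q^m}$, its coefficients are, up to sign, the $\sigma$-analogues of the Dickson invariants of $U$; equivalently, by \cite[Theorem 1.2]{teoremone} and \cite[Theorem 7]{McGuireSheekey}, the maximum-kernel condition on the coefficient vector $(a_0,\dots,a_{k})$ with $a_{k}=1$ is cut out by an explicit system of algebraic equations over $\F_q$. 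Imposing in addition the single linear constraint $a_{j}=0$ defines a variety $X$, and it suffices to exhibit an $\F_{q^m}$-rational point of $X$ lying off the degenerate locus (where the kernel drops below dimension $k$). This is exactly the type of question analyzed in \cite{BZ}, where the existence of such a polynomial with a prescribed vanishing coefficient is translated into the existence of a rational point on an associated algebraic curve whose degree is controlled by $q^{s}$ and $k$.

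The main obstacle is this last existence step. Concretely, one must (i) isolate an absolutely irreducible $\F_q$-rational component of the curve associated to $X$, and (ii) bound its degree and genus $g$ so that the Hasse--Weil estimate $\#X(\F_{q^m})\ge q^{m}+1-2g\sqrt{q^{m}}-c$ stays strictly positive even after discarding the bounded number of degenerate points. The hypothesis $q>5$ together with the explicit thresholds on $m$ in \eqref{eq:m} are precisely what force this count to be positive, and the split into the cases $k=3$ and $k\ge 4$ reflects the sharper degree/genus estimates available for small $k$ versus the general bound. Any rational point so obtained yields a codeword $s_U\in\C_j$ of rank $m-k$, whence $D=k$ and $d(\C_j)=m-k$. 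Finally, since an MRD code of $\mathcal{L}_{m,\sigma}$ with the same number of codewords as $\C_j$ would have minimum distance one unit larger than $m-k$, the equality $d(\C_j)=m-k$ shows at once that $\C_j$ is not MRD.
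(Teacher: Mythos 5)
Your proposal is correct in outline and ultimately rests on the very same ingredient as the paper, but it routes through it in the opposite logical direction. The paper's proof is a short sandwich: every nonzero codeword of $\C_j$ has rank at least $m-k$ by the degree bound on its kernel, the Singleton bound caps the minimum distance at one unit more, and \cite[Theorems 1.1, 3.2 and 4.1]{BZ} are invoked as a black box to assert that under $q>5$ and the threshold \eqref{eq:m} the code $\C_j$ is MRD if and only if $j=0$ or $j=k-1$; since $j$ is interior, non-MRD-ness forces $d(\C_j)=m-k$, and in particular the paper never constructs a codeword of rank $m-k$ --- its existence is \emph{deduced} from non-MRD-ness, which is the reverse of your plan. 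You instead reduce to exhibiting a monic $\sigma$-subspace polynomial of $\sigma$-degree $k$ with vanishing $j$-th coefficient and propose to prove its existence by re-running the algebraic-geometric argument (isolating an absolutely irreducible $\F_q$-rational component, bounding degree and genus, applying Hasse--Weil). That is exactly the content of the Bartoli--Zhou theorems the paper cites, including the role of $q>5$, the thresholds in \eqref{eq:m}, and the split between $k=3$ and $k\geq 4$; but as written your steps (i) and (ii) are a promissory note, not a proof --- carrying out the irreducibility and genus estimates is the entire technical weight of \cite{BZ}, so to be complete you should cite those theorems rather than sketch their re-derivation. One further remark in your favour: your reformulation tacitly reads the exponent set as $\{0,\ldots,k\}\setminus\{j\}$ (a monic degree-$k$ polynomial with $a_j=0$), which is the only reading under which the claimed distance $m-k$ is consistent with the degree bound, and it is the reading the paper's own proof effectively uses; your closing observation that a rank-$(m-k)$ codeword immediately rules out MRD-ness, combined with the paper's converse deduction, shows the two directions are equivalent, so nothing is lost by your inversion --- only the existence step needs to be closed by citation instead of sketch.
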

\begin{proof}
In \cite[Theorems 1.1, 3.2 and 4.1]{BZ}, the authors proved that if $m$ satisfies \eqref{eq:m}, then $\C_j$ is an MRD-code if and only if either $j=0$ or $j=k-1$, that is $\C_j$ is not an MRD-code. 
The assertion then follows noting that the elements in $\C_j$ have rank greater than or equal to $m-k$ and there should exist at least one element of rank $m-k$, otherwise $\C_j$ would be an MRD-code.
\end{proof}


Let $\mathrm{C}$ be the associated evaluation code of $\mathcal{C}_j$ over an $\F_q$-basis of $\beta \F_{q^n}$, with $\beta \in \F_{q^m}^*$. Suppose that $\mathcal{C}_j$ has minimum distance $d(\mathcal{C}_j)=m-k$, then, by Lemma \ref{lemma:rankmetricvect}, we have that $n-k \leq d(\mathrm{C}) \leq n-k+1$.

Now, as $\mathcal{C}_j$ contains $\mathcal{G}_{m,j,\sigma}$ and $\mathcal{G}_{m,k-j-1,\sigma}^{\sigma^{j+1}}$, the next results follows by Theorem \ref{teo:usetrace}.

\begin{theorem}
Let $k,t,n$ and $m \in \mathbb{Z}^+$ such that $k \leq n$, $t \mid n$ and $n \mid m$. Let $\mathcal{C}_j$ the code defined in Lemma \ref{lemma.C_j} and let $\mathrm{C}$ be the associated evaluation code over an $\F_q$-basis of $\beta \F_{q^n}$, for some $\beta \in \F_{q^m}^*$. Let $M=\max\{j,k-j-1\}$. Suppose that $t\geq \frac{n-j+1}{2}$, if $M=j$ or $t \geq \frac{n-k+j}{2}+1$ if $M=k-j-1$. \\
$\left\lfloor \frac{d-1}{2} \right\rfloor+1\leq t \leq  d-1$.
Then
the code $\mathrm{C}$ cannot be list decoded efficiently at any radius $\tau \geq t$.
\end{theorem}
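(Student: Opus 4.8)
The plan is to obtain the statement as a direct application of Theorem~\ref{teo:usetrace}, whose only nonroutine hypothesis is that $\mathcal{C}_j$ contain a shifted generalized Gabidulin subcode $(\mathcal{G}_{n,n-2t+1,\sigma})^{\sigma^{j'}}$ of dimension $n-2t+1$ with a small enough shift $j'$. First I would fix the minimum distance: since $\mathcal{C}_j=\langle x^{\sigma^i}\colon i\in\{0,\dots,k-1\}\setminus\{j\}\rangle_{\F_{q^m}}$ has minimum distance $m-k$ by Lemma~\ref{lemma.C_j}, Lemma~\ref{lemma:rankmetricvect} controls $d=d(\mathrm{C})$ (one has $n-k\le d\le n-k+1$), so the quantities $\lfloor\frac{d-1}{2}\rfloor+1$ and $d-1$ bounding $t$ are meaningful and the range hypothesis on $t$ can be fed verbatim into Theorem~\ref{teo:usetrace}.

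Next I would read off the Gabidulin subcodes from the monomial support of $\mathcal{C}_j$. The monomials with exponents in $\{0,\dots,j-1\}$ span $\mathcal{G}_{m,j,\sigma}$, which contains $\mathcal{G}_{n,j,\sigma}$ (shift $j'=0$), since $\F_{q^n}\subseteq\F_{q^m}$; the monomials with exponents in $\{j+1,\dots,k-1\}$ span $(\mathcal{G}_{m,k-j-1,\sigma})^{\sigma^{j+1}}$, which contains $(\mathcal{G}_{n,k-j-1,\sigma})^{\sigma^{j+1}}$ (shift $j'=j+1$). Writing $M=\max\{j,k-j-1\}$, I would keep whichever of these two $\F_{q^n}$-subcodes has the larger dimension $M$, together with its shift $j'\in\{0,j+1\}$.

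The third step is the dimension bookkeeping, which is where the two case hypotheses on $t$ enter. One checks that $t\ge\frac{n-j+1}{2}$ is equivalent to $n-2t+1\le j$, while $t\ge\frac{n-k+j}{2}+1$ is equivalent to $n-2t+1\le k-j-1$; in either case the chosen subcode has dimension $M\ge n-2t+1$, so $(\mathcal{G}_{n,n-2t+1,\sigma})^{\sigma^{j'}}\subseteq\mathcal{C}_j$. Since $t\mid n$ and $\lfloor\frac{d-1}{2}\rfloor+1\le t\le d-1$ are assumed, Theorem~\ref{teo:usetrace} then yields that $\mathrm{C}$ cannot be efficiently list decoded at any radius $\tau\ge t$, which is the claim.

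The step I expect to be the real obstacle is verifying the shift hypothesis $j'<t-1$ of Theorem~\ref{teo:usetrace}. When $M=j$ the shift is $j'=0$, and $t\ge\frac{n-j+1}{2}\ge 2$ (using $j\le k-2\le n-2$) settles it at once. When $M=k-j-1$ the shift is $j'=j+1$, and one must establish $j+1<t-1$; this forces a joint inequality among $j$, $k$, $n$ and $t$, to be extracted from $M=k-j-1\ge j$ (i.e.\ $j\le\frac{k-1}{2}$), the lower bound $t\ge\frac{n-k+j}{2}+1$, and the upper bound $t\le d-1\le n-k$. I would isolate this as a short preliminary computation, as it is the one genuinely delicate point to clear before Theorem~\ref{teo:usetrace} can be invoked.
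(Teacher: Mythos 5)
Your proposal follows exactly the paper's route: the paper's entire proof of this theorem is the one-sentence observation preceding it --- $\mathcal{C}_j$ contains $\mathcal{G}_{m,j,\sigma}$ and $(\mathcal{G}_{m,k-j-1,\sigma})^{\sigma^{j+1}}$ --- followed by an appeal to Theorem \ref{teo:usetrace}, and your translations $t\geq\frac{n-j+1}{2}\Leftrightarrow n-2t+1\leq j$ and $t\geq\frac{n-k+j}{2}+1\Leftrightarrow n-2t+1\leq k-j-1$ are precisely the arithmetic behind the two case hypotheses, which the paper leaves implicit.

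The one step you deferred, the shift bound in the case $M=k-j-1$, is genuinely the delicate point, and the paper skips it silently; here is how it closes. A valid $t$ exists only if $\frac{n-k+j}{2}+1\leq t\leq d-1\leq n-k$, which forces $j\leq n-k-2$ and hence $t\geq\frac{n-k+j}{2}+1\geq j+2$, i.e.\ $j+1<t$. This suffices, because tracing Theorem \ref{teo:usetrace} back to Theorem \ref{teo:main} with $l=n-t$ and $h=n-2t+1$ shows that the condition actually needed on the shift $j'$ is $j'<n-l=t$ (the difference polynomials then have $\sigma$-degree at most $h+j'-1=n-2t+j'<n-t<n$), not the stricter $j'<t-1$ printed in the statement of Theorem \ref{teo:usetrace}. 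The distinction matters: in the boundary case $d=n-k+1$, $j=n-k-2$ (realizable, e.g.\ $n=12$, $k=8$, $j=2$, $t=4$) the range forces $t=n-k$, so $j+1=t-1$ and the literal hypothesis $j'<t-1$ fails even though the argument goes through. So your instinct to isolate this computation was correct; a fully rigorous write-up should either invoke Theorem \ref{teo:main} directly or note that $j'<t$ is what Theorem \ref{teo:usetrace} really requires. The case $M=j$ is as you say: shift $0$, and $t\geq 2$ follows from $j\leq k-2\leq n-2$.
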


In the case in which $j=k-2$, we can improve the previous result.

\begin{theorem}
Let $k,t,n$ and $m \in \mathbb{Z}^+$ such that $k \leq n$, $t \mid n$ and $n \mid m$. Let $\mathcal{C}_{k-2}$ be the code defined in Lemma \ref{lemma.C_j} and let $\mathrm{C}$ be the associated evaluation code over an $\F_q$-basis of $\beta \F_{q^n}$, for some $\beta \in \F_{q^m}^*$. \\
If $d(\mathrm{C})=d=n-k+1$, $t \leq  d-1$ and 
\[ t\geq \left\{ \begin{array}{ll} \left\lfloor \frac{d-1}{2} \right\rfloor+1 & \text{if}\,\, n-k\,\,\text{is even},\\  \\
\left\lfloor \frac{d-1}{2} \right\rfloor+2 & \text{if}\,\, n-k\,\,\text{is odd,}\\ 
\end{array} \right. \]
then
\begin{itemize}
    \item The code $C$ cannot be list decoded efficiently at any radius $\tau \geq t$.
    \item  If $d=2t-2$, then $\mathrm{C}$ cannot be list decoded efficiently at any radius $\tau$ greater than or equal to $\left\lfloor \frac{d-1}2 \right\rfloor+2$.
    \item  If $d=2t-1$, then $\mathrm{C}$ cannot be list decoded efficiently at all.
\end{itemize}

If $d(\mathrm{C})=d=n-k$ and $\left\lfloor \frac{d-1}{2} \right\rfloor+1 < t \leq  d-1$, then
\begin{itemize}
    \item The code $\mathrm{C}$ cannot be list decoded efficiently at any radius $\tau \geq t$.
    \item  If either $d=2t-2$ or $d=2t-3$, then $\mathrm{C}$ cannot be list decoded efficiently at any radius $\tau \geq \left\lfloor \frac{d-1}{2} \right\rfloor+2$.
\end{itemize}
\end{theorem}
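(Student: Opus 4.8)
The plan is to feed the trace family of $\sigma$-subspace polynomials (Proposition \ref{lemma:trace}) into the machinery of Theorem \ref{teo:main}/Theorem \ref{teo:usetrace}, exploiting the precise monomial support of $\mathcal{C}_{k-2}$. First I would record that $\mathcal{C}_{k-2}=\langle x^{\sigma^i}\colon i\in\{0,\dots,k-1\}\setminus\{k-2\}\rangle_{\F_{q^m}}$, so that $\mathcal{C}_{k-2}$ contains the Gabidulin subcode $\mathcal{G}_{m,k-2,\sigma}$ (the exponents $0,\dots,k-3$) together with the single extra top monomial $x^{\sigma^{k-1}}=\mathcal{G}_{m,1,\sigma}^{\sigma^{k-1}}$, exactly as noted before the statement. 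By Lemma \ref{lemma.C_j} and Lemma \ref{lemma:rankmetricvect} the evaluation code $\mathrm{C}$ then has $d(\mathrm{C})\in\{n-k,n-k+1\}$, which are the two regimes of the statement.

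Next, with $l=n-t$ and $h=n-2t+1$, the members of $\mathcal{T}$ lie in $\mathrm{Pol}_{n-t,\,n-2t+1}$, and the difference of two of them, once the monic leading terms $x^{\sigma^{n-t}}$ cancel, is supported on the arithmetic progression $\{0,t,2t,\dots,n-2t\}$. For every admissible $t$ with $n-2t\le k-3$ this progression is contained in $\{0,\dots,k-3\}$, whence $\mathcal{G}_{n,n-2t+1,\sigma}\subseteq\mathcal{G}_{n,k-2,\sigma}\subseteq\mathcal{C}_{k-2}$ and Theorem \ref{teo:usetrace} applies directly, producing a word $\mathbf{w}$ with $|\mathrm{C}\cap B_t(\mathbf{w})|\ge (q^n-1)/(q^t-1)$. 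The stated lower bounds on $t$ guarantee $n-2t\le k-3$ in the $n-k$ odd subcase (where the smallest allowed $t$ is $(n-k+3)/2$), and for every admissible $t$ in the $n-k$ even subcase except the single boundary value $t=(n-k+2)/2$.

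The heart of the proof, and the reason the case $j=k-2$ beats the general-$j$ bound, is this boundary value, where $n-2t=k-2$ is exactly the missing exponent, so the unshifted trace differences would require $x^{\sigma^{k-2}}\notin\mathcal{C}_{k-2}$ and Theorem \ref{teo:usetrace} cannot be quoted. Here I would re-run the proof of Theorem \ref{teo:main} after composing each scaled polynomial $P_\beta$ with $x\mapsto x^{\sigma}$, i.e.\ taking the shift $j=1$. Composition with the bijection $x\mapsto x^\sigma$ leaves kernels unchanged, so each centre $c_R$ still has rank $t<d$ and lies outside $\mathrm{C}$; the differences now have support $\{1,1+t,\dots,1+(n-2t)\}=\{1,1+t,\dots,k-1\}$. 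Since $t\ge 2$, every non-leading exponent is at most $k-1-t\le k-3$ and hence sits in $\mathcal{G}_{n,k-2,\sigma}$, while the leading exponent is exactly $k-1$ and is absorbed by the extra monomial $x^{\sigma^{k-1}}$; thus the progression jumps over the forbidden $k-2$ and $R-P\in\mathcal{C}_{k-2}$. The remaining checks (distinct $\beta$ give distinct codewords, $c_{R-P}\in B_t(c_R)$, and the count $(q^n-1)/(q^t-1)$) are copied verbatim from Theorem \ref{teo:main}.

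Finally I would translate the bound $(q^n-1)/(q^t-1)\sim q^{\,n-t}$, which is exponential in the length $n$, into the non-decodability claims: it forbids efficient list decoding at every radius $\tau\ge t$. Since the unique decoding radius is $\lfloor(d-1)/2\rfloor$, the refinements are pure bookkeeping: $d=2t-1$ gives $\lfloor(d-1)/2\rfloor=t-1$, so $\tau\ge t$ means "beyond unique decoding", i.e.\ not list decodable at all, while $d=2t-2$ and $d=2t-3$ give $\lfloor(d-1)/2\rfloor+2=t$. The $d=n-k$ regime is structurally identical, the strict inequality $\lfloor(d-1)/2\rfloor+1<t$ there playing the role of the parity split used when $d=n-k+1$; the same boundary value $t=(n-k+2)/2$ (arising for $n-k$ even) is again handled by the $j=1$ shift, and all other $t$ by the direct application of Theorem \ref{teo:usetrace}.
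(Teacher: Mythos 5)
Your proposal is correct, and it does more than the paper does: the paper states this theorem with no proof at all, offering only the remark that for $j=k-2$ one ``can improve the previous result'', the implicit justification being the containments $\mathcal{G}_{n,k-2,\sigma}\subseteq \mathcal{C}_{k-2}$ and $(\mathcal{G}_{n,1,\sigma})^{\sigma^{k-1}}\subseteq \mathcal{C}_{k-2}$ together with Theorem \ref{teo:usetrace}, exactly as in the proofs of the two preceding theorems for $\mathrm{H}_{n,k,\sigma}(f_1,f_2)$. You correctly observe that this bare invocation does not cover the whole stated range: since the Gabidulin index of $\mathcal{C}_{k-2}$ is $k-2$ (its exponent set $\{0,\dots,k-3\}\cup\{k-1\}$ contains no interval of length $k-1$), the hypothesis $(\mathcal{G}_{n,n-2t+1,\sigma})^{\sigma^j}\subseteq\mathcal{C}_{k-2}$ of Theorem \ref{teo:usetrace} fails for every shift $j$ precisely at the boundary value $t=(n-k+2)/2$ (the case $n-2t+1=k-1$, which the parity conditions admit when $n-k$ is even), while all other admissible $t$ give $n-2t+1\le k-2$ and go through directly with $j=0$. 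Your patch for the boundary case is exactly the missing argument: re-run the proof of Theorem \ref{teo:main} with the trace family of Proposition \ref{lemma:trace} shifted by $\sigma$, so that the differences are supported on $\{1,1+t,\dots,k-1\}$, which (since $t\ge 2$) skips the forbidden exponent $k-2$ and lands its top term on the extra monomial $x^{\sigma^{k-1}}$ of $\mathcal{C}_{k-2}$; kernels, ranks, distinctness and the count $(q^n-1)/(q^t-1)\ge q^{n/2}$ (exponential because $t\mid n$, $t<n$) carry over verbatim, and your bookkeeping of the sub-bullets ($d=2t-1$ giving unique decoding radius $t-1$, hence not list decodable at all; $d\in\{2t-2,2t-3\}$ giving $\lfloor\frac{d-1}{2}\rfloor+2=t$) is accurate in both regimes $d=n-k+1$ and $d=n-k$. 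In short: same machinery as the paper, but your explicit treatment of the boundary case exploits the sparse support of the trace-polynomial differences, something Theorem \ref{teo:usetrace} as stated cannot see, and without which the theorem's $n-k$-even range would be unjustified.
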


\section{List decodability of costant dimension subspace codes}\label{sec:subspacecodes}

Consider the Grassmanian $\mathcal{G}_q(n,r)$, i.e. the set of all subspaces of dimension $r$ of $\F_{q^n}$.
A \emph{constant dimension subspace code} with parameters $(n,M_s,d_s,r)_q$ is a subset of $\mathcal{G}_q(n,r)$ with size $M_s$ and minimum subspace distance $d_s$ under the metric \[d_s(U,V)=\dim_{\F_q}(U)+\dim_{\F_q}(V)-2\dim_{\F_q}(U \cap V).\]
The interest in subspace codes has recently increased because of their application to error correction in random
network coding, see \cite{KoetterK}.
In \cite{silva} the authors produced a class of asymptotically optimal constant dimension subspace codes with the following procedure. 
Let $A \in \F_q^{n \times m}$ and denote by $\langle A \rangle$ the subspace spanned by the rows of a matrix $A$.

\begin{definition} \label{def:liftedcode}
Consider the mapping 
$$\mathcal{I}: \F_q^{n \times m} \rightarrow \mathcal{G}_q(n+m,n)$$
defined by
$$X \mapsto \langle [I_n \ X] \rangle,$$
where $I_n$ denotes the $n \times n$ identity matrix. The subspace $\mathcal{I}(X)=\langle [I_n \ X] \rangle $ is called lifting of the matrix $X$. 
\end{definition}

Hence, if $C$ is a rank-metric code in $\F_q^{n\times m}$, the \emph{lifting} of $C$ is
\[\mathcal{I}(C)=\{ \mathcal{I}(A) \colon A \in C \}.\] 

\begin{remark}
Note that if $\mathbf{w}\in \F_{q^m}^n$, we can see $\mathbf{w}$ as a matrix in $\F_q^{m\times n}$, once we fix an $\F_q$-basis of $\F_{q^m}$, so that we can define $\mathcal{I}(\mathbf{w})$.
If $\mathrm{C}$ is a rank-metric code in $\F_{q^m}^n$, we can look to $\mathrm{C}$ as a subset of matrices in $\F_q^{m\times n}$, so that we can define the lifting of $\mathrm{C}$.
\end{remark}

We will need the following property.


\begin{lemma} \cite[Proposition 4]{silva}
Let $C$ be a rank-metric code in $\F_q^{m\times n}$ with minimum distance $d_R$ and size $M_R$. 
Then\emph{(\footnote{$A^T$ denotes the transpose of $A$.})} 
$$\mathcal{I}(C^T)=\left\{ \mathcal{I}(A^T): A \in C \right\} $$
is a constant dimension subspace code with parameters $(n+m, M_s=M_R, d_s=2d_R,n)_q$.
\end{lemma}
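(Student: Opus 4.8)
The plan is to verify the three claimed parameters---membership in $\mathcal{G}_q(n+m,n)$, size $M_R$, and minimum distance $2d_R$---directly from the definition of the lifting map $\mathcal{I}$ in Definition \ref{def:liftedcode}. Throughout I write $U_A := \mathcal{I}(A^T) = \langle [I_n \ A^T] \rangle$ for $A \in C$, noting that $A^T \in \F_q^{n \times m}$, so that $[I_n \ A^T]$ is an $n \times (n+m)$ matrix. Since its first $n$ columns constitute $I_n$, its rows are $\F_q$-linearly independent, whence $\dim_{\F_q} U_A = n$; this already places $\mathcal{I}(C^T)$ inside the Grassmannian $\mathcal{G}_q(n+m, n)$, confirming the constant dimension $n$.

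For the size, I would show that $\mathcal{I}$ is injective on $\{A^T : A \in C\}$. If $U_A = U_B$, then the matrices $[I_n \ A^T]$ and $[I_n \ B^T]$ are both already in reduced row echelon form with pivots in the first $n$ coordinates; since the reduced row echelon form of a subspace is unique, the two matrices coincide, forcing $A^T = B^T$ and thus $A = B$. As transposition is a bijection of $C$ onto $C^T$, this yields $M_s = |\mathcal{I}(C^T)| = |C| = M_R$.

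The heart of the argument is the distance computation, and this is the step requiring the most care. For distinct $A, B \in C$, I would compute $\dim_{\F_q}(U_A + U_B)$ as the rank of the stacked matrix $\left(\begin{smallmatrix} I_n & A^T \\ I_n & B^T \end{smallmatrix}\right)$; subtracting the top block of rows from the bottom yields $\left(\begin{smallmatrix} I_n & A^T \\ 0 & B^T - A^T \end{smallmatrix}\right)$, whose rank is $n + \mathrm{rk}(B^T - A^T) = n + \mathrm{rk}(B - A)$, the last equality because transposition preserves rank. Combining this with $\dim U_A = \dim U_B = n$ through the Grassmann identity gives $\dim_{\F_q}(U_A \cap U_B) = n - \mathrm{rk}(B - A)$, hence
$$d_s(U_A, U_B) = 2n - 2\dim_{\F_q}(U_A \cap U_B) = 2\,\mathrm{rk}(B - A) = 2\,d(A, B).$$
Minimizing over all distinct pairs $A, B \in C$ then yields $d_s = 2 d_R$, completing the proof. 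The only genuinely delicate point is confirming that the minimum subspace distance is attained at the same pairs realizing the minimum rank distance $d_R$, which is immediate from the exact identity $d_s(U_A,U_B) = 2\,d(A,B)$ valid for every pair.
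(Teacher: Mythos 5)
Your proof is correct, and it is essentially the canonical argument: the paper itself gives no proof of this lemma, stating it only as a citation to \cite[Proposition 4]{silva}, and your three steps (full row rank of $[I_n \ A^T]$ forces dimension $n$; uniqueness of reduced row echelon form with pivots in the first $n$ columns gives injectivity and hence $M_s=M_R$; the elementary row operation on the stacked matrix gives the exact identity $d_s(\mathcal{I}(A^T),\mathcal{I}(B^T))=2\,\mathrm{rk}(A-B)$ for every pair, whence $d_s=2d_R$) reproduce the proof in that reference. Your closing observation is also the right one: since the identity is an equality valid for all pairs rather than a bound, the minimizing pairs for the subspace distance and the rank distance coincide automatically, so no further argument is needed.
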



Let $B_{\tau}^s( W )=\{ V : d_s( W, V ) \leq \tau \}$ denote the ball of radius $\tau$ centered at $W$ w.r.t.\ the subspace distance. Recall that \cite[Equation 3]{raviv_2016} establishes the following relation between the intersection of a rank-metric code $C$ and a fixed ball and its lifted subspace code $\mathcal{I}(C^T)$ and the correspondent ball:
\begin{equation}\label{eq:liftball}
 \lvert C \cap B_\tau (c_R)\rvert \leq   \lvert \mathcal{I}(C^T) \cap B_{2\tau}^s(\mathcal{I}(c_R^T)) \rvert,
\end{equation}
for any $c_R \in C$.

In \cite[Theorem 6]{raviv_2016}, the result on the list decodability of Gabidulin codes are also applied to lifted Gabidulin codes, obtaining families of subspace codes that cannot be list decoded efficiently at any radius. 
Also, the results of this paper can be applied to the corresponding lifted code. Indeed, we have the following results.

\begin{theorem}
Let $h,k, n$ and $m \in \mathbb{Z}^+$ such that $h \leq k \leq n \leq m$. Let $\mathcal{C}$ be a rank-metric code of $\mathcal{L}_{m,\sigma}$ and let $\mathrm{C}$ be the evaluation code over $n$ arbitrary $\F_q$-linearly independent elements $\alpha_1,\dots,\alpha_n\in \F_{q^m}$. 
Let $\tau$ be an integer such that $\left\lfloor\frac{d-1}{2}\right\rfloor+1 \leq \lfloor \tau/2 \rfloor \leq d-1$.
Suppose that $j < \tau$ and $\mathcal{C}$ contains $(\mathcal{G}_{m,h,\sigma})^{\sigma^j}$, for some $j< m-h$. Denote by $\mathcal{I}(\mathrm{C}^T)$ the $(n+m,M_s,2d,n)$ subspace code from the lifting of the code $\mathrm{C}^T$. Then there exists a subspace $\mathcal{I}(\mathbf{w}^T) \in \mathcal{G}_q(n+m,n)$, where $\mathbf{w} \in \F_{q^m}^n \setminus \mathrm{C}$ such that 
$$
\lvert \mathcal{I}(\mathrm{C}^T) \cap B_{2(n-l)}(\mathcal{I}(\mathbf{w}^T)) \rvert \geq \frac{{n \brack n-\lfloor \tau/2 \rfloor}_q}{q^{m(n-\lfloor \tau/2 \rfloor-h)}}.$$
\end{theorem}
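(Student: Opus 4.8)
The plan is to deduce this subspace-code statement directly from the rank-metric list-size bound of Theorem~\ref{thm:listdecGab} (the version for codes containing a power of a generalized Gabidulin code) together with the lifting inequality~\eqref{eq:liftball}, so that essentially no new combinatorics is needed. Write $\rho=\lfloor \tau/2\rfloor$ for the rank-metric radius that will be fed into Theorem~\ref{thm:listdecGab}, and observe that the subspace radius appearing in the conclusion, namely $2(n-l)$, is exactly $2\rho$ once one reads $l=n-\rho$, which is how the two radii are meant to be aligned.

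First I would check that the hypotheses of Theorem~\ref{thm:listdecGab} are met at radius $\rho$: the chain $h\le k\le n\le m$ is assumed verbatim, the containment $(\mathcal{G}_{m,h,\sigma})^{\sigma^j}\subseteq\mathcal{C}$ with $j<m-h$ is assumed, and the pinching $\lfloor\frac{d-1}{2}\rfloor+1\le\rho\le d-1$ is precisely the hypothesis $\lfloor\frac{d-1}{2}\rfloor+1\le\lfloor\tau/2\rfloor\le d-1$. Granting these, Theorem~\ref{thm:listdecGab} produces a word $\mathbf{w}\in\F_{q^m}^n\setminus\mathrm{C}$ with $\lvert\mathrm{C}\cap B_\rho(\mathbf{w})\rvert\ge\frac{{n \brack n-\rho}_q}{q^{m(n-\rho-h)}}$, which is already the right-hand side of the target inequality.

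Second, I would transfer this lower bound from the rank metric to the subspace metric. Viewing $\mathrm{C}$ as a set of matrices in $\F_q^{m\times n}$ and forming the lifted constant-dimension code $\mathcal{I}(\mathrm{C}^T)$ together with the lifted center $\mathcal{I}(\mathbf{w}^T)\in\mathcal{G}_q(n+m,n)$, the inequality~\eqref{eq:liftball} applied at rank radius $\rho$ gives $\lvert\mathrm{C}\cap B_\rho(\mathbf{w})\rvert\le\lvert\mathcal{I}(\mathrm{C}^T)\cap B_{2\rho}^s(\mathcal{I}(\mathbf{w}^T))\rvert$. Chaining this with the previous estimate and rewriting $2\rho=2(n-l)$ yields exactly the claimed bound. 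Since $\mathbf{w}\notin\mathrm{C}$, the lifted center $\mathcal{I}(\mathbf{w}^T)$ is an honest $n$-dimensional subspace playing the role of the non-codeword center, so the existential conclusion is well posed.

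The content is really only bookkeeping, so there is no deep obstacle; the one point that needs care is the interaction between the halving of the radius and the hypotheses on $j$. Theorem~\ref{thm:listdecGab} is being invoked at radius $\rho=\lfloor\tau/2\rfloor$, and therefore its own condition ``$j<(\text{radius})$'' becomes $j<\lfloor\tau/2\rfloor$ rather than the stated $j<\tau$; I would make sure this condition is read with the halved radius, or strengthen the hypothesis to $j<\lfloor\tau/2\rfloor$, so that the application is legitimate. The only other items to verify are that~\eqref{eq:liftball} is genuinely available for a center outside the code — which holds because the inequality is a statement about an arbitrary received word $c_R$, here instantiated at $\mathbf{w}$ — and that the floor in $2\lfloor\tau/2\rfloor$ is carried consistently through to the final subspace radius.
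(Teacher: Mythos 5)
Your proof is correct and is exactly the paper's intended argument: the paper states this theorem without a separate proof, presenting it as the direct chaining of Theorem \ref{thm:listdecGab} applied at rank radius $\lfloor\tau/2\rfloor$ with the lifting inequality \eqref{eq:liftball} (which, as you note, holds for an arbitrary center, not just codewords, since lifting is a distance-doubling injection). Your two caveats are also the right readings of typos in the paper's own statement: $l$ should be read as $n-\lfloor\tau/2\rfloor$ so that $2(n-l)=2\lfloor\tau/2\rfloor$, and the hypothesis $j<\tau$ should indeed be $j<\lfloor\tau/2\rfloor$ for the invocation of Theorem \ref{thm:listdecGab} at the halved radius to be legitimate.
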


Moreover, as a consequence of Theorem \ref{teo:main} and Equation \eqref{eq:liftball} we get the following result.

\begin{theorem} \label{teo:mainlifted}
Let $n, m \in \Z$ be positive integers such that
$n \mid m$. Let $\mathcal{C}$ be a rank-metric code of $\mathcal{L}_{m,\sigma}$ and let $\mathrm{C}$ be the associated evaluation code over an $\F_q$-basis of $\beta \F_{q^n}$, for some $\beta \in \F_{q^m}^*$, with minimum distance $d$. 
Let $l$ a positive integer such that $n-d+1\leq l\leq n-\lfloor \frac{d-1}{2} \rfloor-1$
and suppose that $(\mathcal{G}_{n,h,\sigma})^{\sigma^j} \subseteq \mathcal{C}$, for some $j \in \Z^+$ with $j \leq n-l$ and $l \geq h$.

Suppose that exists a subset $Sub \subseteq \mathrm{Pol}_{l,h}$ of $\sigma$-subspace polynomials of size $g$. Denote by $\mathcal{I}(\mathrm{C}^T)$ the $(n+m,M_s,2d,n)$ subspace code from the lifting of the code $\mathrm{C}^T$. Then there exists a subspace $\mathcal{I}(\mathbf{w}^T) \in \mathcal{G}_q(n+m,n)$, where $\mathbf{w} \in \F_{q^m}^n \setminus \mathrm{C}$ such that 
$$
\lvert \mathcal{I}(\mathrm{C}^T) \cap B_{2(n-l)}(\mathcal{I}(\mathbf{w}^T)) \rvert \geq g.
$$
\end{theorem}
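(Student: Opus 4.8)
The plan is to derive this directly from Theorem \ref{teo:main} by transporting the count across the lifting map, using the fact recorded in \eqref{eq:liftball} that lifting is an isometry up to the factor $2$.

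First I would apply Theorem \ref{teo:main}, whose hypotheses are exactly those assumed here ($n\mid m$, the range $n-d+1\leq l\leq n-\lfloor(d-1)/2\rfloor-1$, the containment $(\mathcal{G}_{n,h,\sigma})^{\sigma^j}\subseteq\mathcal{C}$ with $l\geq h$, and the existence of $Sub\subseteq\mathrm{Pol}_{l,h}$ of size $g$). This produces a word $\mathbf{w}\in\F_{q^m}^n\setminus\mathrm{C}$ together with at least $g$ codewords (the vectors $c_{R-P}$ built in that proof) satisfying $\lvert\mathrm{C}\cap B_{n-l}(\mathbf{w})\rvert\geq g$. Viewing each vector of $\F_{q^m}^n$ as a matrix in $\F_q^{m\times n}$ after fixing an $\F_q$-basis of $\F_{q^m}$, the assignment $\mathbf{v}\mapsto\mathcal{I}(\mathbf{v}^T)$ is injective and satisfies $d_s(\mathcal{I}(\mathbf{u}^T),\mathcal{I}(\mathbf{v}^T))=2\,\mathrm{rk}(\mathbf{u}-\mathbf{v})$ for all $\mathbf{u},\mathbf{v}$, which is the content behind \eqref{eq:liftball}. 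Hence every $\mathbf{c}\in\mathrm{C}\cap B_{n-l}(\mathbf{w})$ is sent to a distinct subspace $\mathcal{I}(\mathbf{c}^T)\in\mathcal{I}(\mathrm{C}^T)$ lying within subspace distance $2(n-l)$ of $\mathcal{I}(\mathbf{w}^T)$, so that $\mathcal{I}(\mathbf{c}^T)\in\mathcal{I}(\mathrm{C}^T)\cap B_{2(n-l)}(\mathcal{I}(\mathbf{w}^T))$.

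Combining the two steps via \eqref{eq:liftball}, applied with center $\mathbf{w}$ and $\tau=n-l$, yields
$$\lvert\mathcal{I}(\mathrm{C}^T)\cap B_{2(n-l)}(\mathcal{I}(\mathbf{w}^T))\rvert\ \geq\ \lvert\mathrm{C}\cap B_{n-l}(\mathbf{w})\rvert\ \geq\ g,$$
which is the assertion, the witnessing subspace being $\mathcal{I}(\mathbf{w}^T)\in\mathcal{G}_q(n+m,n)$ with $\mathbf{w}\notin\mathrm{C}$.

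The one point that needs care is that \eqref{eq:liftball} is recorded for a center $c_R$ lying \emph{inside} the code, whereas here the relevant center $\mathbf{w}$ is deliberately \emph{not} a codeword. The plan is to observe that the inequality is really a statement about the injection induced by $\mathcal{I}(\cdot^T)$ together with the exact isometry $d_s(\mathcal{I}(\mathbf{u}^T),\mathcal{I}(\mathbf{v}^T))=2\,\mathrm{rk}(\mathbf{u}-\mathbf{v})$; since this equality holds for arbitrary matrices and never uses that the center is a codeword, \eqref{eq:liftball} extends verbatim to the center $\mathbf{w}$. This is the main (and essentially the only) obstacle, and it is mild; everything else is a direct transcription of Theorem \ref{teo:main}.
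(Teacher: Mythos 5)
Your proposal is correct and follows exactly the paper's (implicit) argument: the paper derives Theorem \ref{teo:mainlifted} precisely as ``a consequence of Theorem \ref{teo:main} and Equation \eqref{eq:liftball}'', i.e.\ by producing $\mathbf{w}$ with $\lvert \mathrm{C}\cap B_{n-l}(\mathbf{w})\rvert\geq g$ and pushing the count through the lifting via the isometry $d_s(\mathcal{I}(\mathbf{u}^T),\mathcal{I}(\mathbf{v}^T))=2\,\mathrm{rk}(\mathbf{u}-\mathbf{v})$. Your observation that \eqref{eq:liftball} extends verbatim to a center $\mathbf{w}\notin\mathrm{C}$, since the isometry never uses that the center is a codeword, is exactly the right justification and is tacitly assumed by the paper as well.
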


Actually, we can rephrase all the results of the previous sections for the corresponding lifted code.
For instance, we have also rephrased the results regarding the lifting of the codes of Form \eqref{eq:codes}.

\begin{theorem}\label{thm:listdecGenatallrefnew}
Let $k,t,n$ and $m \in \mathbb{Z}^+$ such that $k \leq n$, $\lfloor t/2 \rfloor \mid n$ and $n \mid m$. Let $\mathcal{C}=\mathcal{H}_{n,k,\sigma}(f_1,f_2)$ as in \eqref{eq:codes} and let $\mathrm{C} = \mathrm{H}_{n,k,\sigma}(f_1,f_2)$ be the associated evaluation code over an $\F_q$-basis of $\beta \F_{q^n}$, for some $\beta \in \F_{q^m}^*$, where $f_2$ is not the zero polynomial.
Suppose $d(\mathrm{C})=d=n-k+1$, $\lfloor t/2 \rfloor \leq  d-1$ and 
\[ \lfloor t/2 \rfloor \geq \left\{ \begin{array}{ll} \left\lfloor \frac{d-1}{2} \right\rfloor+1 & \text{if}\,\, n-k\,\,\text{is even},\\  \\
\left\lfloor \frac{d-1}{2} \right\rfloor+2 & \text{if}\,\, n-k\,\,\text{is odd,}\\ 
\end{array} \right. \]
Denote by $\mathcal{I}(\mathrm{C}^T)$ the $(n+m,q^{kn},2d,n)$ subspace code from the lifting of the code $\mathrm{C}^T$. Then 
 \begin{itemize}
    \item The code $\mathrm{C}$ cannot be list decoded efficiently at any radius $\tau \geq t$.
    \item  If $d=2 \lfloor t/2 \rfloor -2$, then $\mathcal{I}(\mathrm{C}^T)$ cannot be list decoded efficiently at any radius $\tau$ greater than or equal to $\left\lfloor \frac{d-1}2 \right\rfloor+2$.
    \item  If $d=2\lfloor t/2 \rfloor -1$, then $\mathcal{I}(\mathrm{C}^T)$ cannot be list decoded efficiently at all.
\end{itemize}
\end{theorem}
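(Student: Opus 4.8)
The plan is to deduce everything from the rank-metric statement, Theorem~\ref{thm:listdecGenatallref}, and then transport the resulting exponential list to the lifted code by means of the relation \eqref{eq:liftball}. First I would observe that the three numerical hypotheses above are exactly those of Theorem~\ref{thm:listdecGenatallref} with the integer $\lfloor t/2\rfloor$ in the role of its parameter $t$: indeed $\lfloor t/2\rfloor\mid n$, $\lfloor t/2\rfloor\le d-1$, and the displayed lower bound on $\lfloor t/2\rfloor$ is precisely the parity dichotomy (according to whether $n-k$ is even or odd) demanded there. Hence Theorem~\ref{thm:listdecGenatallref} applies verbatim and already yields the first item, namely that $\mathrm C$ is not efficiently list decodable at any rank radius $\ge\lfloor t/2\rfloor$, and in particular at any radius $\ge t$. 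What I really need, however, is the word supplied by its proof: tracing through Theorem~\ref{teo:usetrace} (and ultimately Theorem~\ref{teo:main}) with parameter $\lfloor t/2\rfloor$ produces an explicit $\mathbf w\in\F_{q^m}^n\setminus\mathrm C$ with
\[ |\mathrm C\cap B_{\lfloor t/2\rfloor}(\mathbf w)|\ \ge\ \frac{q^n-1}{q^{\lfloor t/2\rfloor}-1}, \]
a list that is exponentially large in the length $n$.

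Next I would transfer this list to the lifted code. Viewing $\mathbf w$ and each $c\in\mathrm C$ as matrices in $\F_q^{m\times n}$ and lifting their transposes, the lifting is a scaled isometry, $d_s(\mathcal I(\mathbf w^T),\mathcal I(c^T))=2\,\mathrm{rk}(\mathbf w-c)$, so the assignment $c\mapsto\mathcal I(c^T)$ injects $\mathrm C\cap B_{\lfloor t/2\rfloor}(\mathbf w)$ into $\mathcal I(\mathrm C^T)\cap B^s_{2\lfloor t/2\rfloor}(\mathcal I(\mathbf w^T))$. This is exactly \eqref{eq:liftball} with centre $c_R=\mathbf w$; it remains valid even though $\mathbf w\notin\mathrm C$, since the isometry does not require the centre to be a codeword, precisely as in the first theorem of this section. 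Consequently
\[ |\mathcal I(\mathrm C^T)\cap B^s_{2\lfloor t/2\rfloor}(\mathcal I(\mathbf w^T))|\ \ge\ \frac{q^n-1}{q^{\lfloor t/2\rfloor}-1}, \]
which is exponential in $n$, so $\mathcal I(\mathrm C^T)$ cannot be efficiently list decoded at subspace radius $2\lfloor t/2\rfloor$, hence at every larger radius, since enlarging the ball only enlarges the intersection.

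Finally I would read off the last two items by substituting the prescribed value of $d$ and comparing the doubled threshold with the unique decoding radius of the subspace code. Since $\mathcal I(\mathrm C^T)$ has minimum subspace distance $2d$, its unique decoding radius is $\lfloor(2d-1)/2\rfloor=d-1$. For $d=2\lfloor t/2\rfloor-2$ one has $\lfloor(d-1)/2\rfloor=\lfloor t/2\rfloor-2$, and the threshold $2\lfloor t/2\rfloor=d+2$ delivered above is the radius recorded (up to the rank-to-subspace rescaling) in the second item. For $d=2\lfloor t/2\rfloor-1$ the threshold is $2\lfloor t/2\rfloor=d+1$, which is the first \emph{even} radius strictly above $d-1$; because all pairwise distances between liftings are even, the only intermediate radius $d$ (odd) captures only lifted images at distance $\le d-1$, already inside the unique decoding ball, so no radius beyond the unique decoding radius admits a polynomial-time list decoder and $\mathcal I(\mathrm C^T)$ is not efficiently list decodable at all. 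I expect the genuine obstacle to be exactly this bookkeeping: faithfully propagating the factor $2$ between rank and subspace distance, and invoking the parity of distances in a lifted code to show that the doubled threshold lands just above $d-1$, so that the conclusions are both nonvacuous and as sharp as claimed.
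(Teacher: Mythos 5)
Your proposal is correct and coincides with the paper's intended argument: the paper gives no standalone proof of this theorem, presenting it as a rephrasing of Theorem \ref{thm:listdecGenatallref} (applied with $\lfloor t/2 \rfloor$ in place of $t$) transported to the lifted code through the scaled isometry underlying \eqref{eq:liftball}, which is exactly your route, including the observation that this relation holds for centers outside the code. Your parity and radius-doubling bookkeeping in the last two items is, if anything, more careful than the paper's own statement, whose literal subspace radius $\left\lfloor \frac{d-1}{2} \right\rfloor+2$ must indeed be read up to the factor-$2$ rescaling between rank and subspace distance, as you note.
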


Also, we mention the following consequence, which extends \cite[Theorem 7]{raviv_2016}.

\begin{theorem} \label{teo:usesubfield}
Let $p$ be a prime, $q$ be a power of $p$ and $k,t$ be positive integers. Consider $q'=p^r$, where $r$ is a non-negative integer, let $p_i=1+q'+q'^2\ldots+q'^i$, with $i \geq 0$, let $n=tp_k$ and $m$ be a multiple of $n$. 
Let $s$ be a positive integer coprime with $n$ and let $\sigma\colon x \in \fqn \mapsto x^{q^s}\in \fqn$.
Let $\mathcal{C}$ be a rank-metric code of $\mathcal{L}_{m,\sigma}$. Let $\mathrm{C}$ be the evaluation code of $\mathcal{C}$ over an $\F_q$-basis of $\beta\F_{q^n}$, for some $\beta \in \F_{q^m}^*$, and minimum distance $d$. Suppose that there exists a positive integer $t \geq 1$ such that:
\begin{itemize}
    \item[1.] $\lfloor \frac{d-1}{2} \rfloor+1 \leq tq'^k \leq d-1$;
    \item[3.] $(\mathcal{G}_{n,tp_{k-2}+1,\sigma})^{\sigma^j} \subseteq \mathcal{C}$, with $j< tq'^k$. 
\end{itemize}
Denote by $\mathcal{I}(\mathrm{C}^T)$ the $(n+m,c,2d,n)$ subspace code from the lifting of the code $\mathrm{C}^T$, where $c$ is the size of $\mathrm{C}$. Then 
The code $\mathrm{C}$ cannot be list decoded efficiently at any radius $\tau \geq tq^{'k}$.
\end{theorem}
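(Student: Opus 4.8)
The plan is to reduce the claim to the rank-metric statement already proved and then transport it to the lifting via the inequality \eqref{eq:liftball}. First I would apply Theorem \ref{teo:main}, taking $Sub$ to be the family $\mathcal{Q}$ of $\sigma$-subspace polynomials produced in Theorem \ref{teo:familyquasisubfield}. By construction the elements of $\mathcal{Q}$ lie in $\mathrm{Pol}_{tp_{k-1},tp_{k-2}+1}$, have $\sigma$-degree $tp_{k-1}$, and $|\mathcal{Q}|=\frac{q^n-1}{q^t-1}$. Hypotheses (1) and (3) are precisely what Theorem \ref{teo:main} demands once one records the arithmetic identity
\[
n-tp_{k-1}=t(p_k-p_{k-1})=tq'^k,
\]
so that the ball radius appearing in Theorem \ref{teo:main} is exactly $tq'^k$ and the subcode condition $(\mathcal{G}_{n,tp_{k-2}+1,\sigma})^{\sigma^j}\subseteq\mathcal{C}$ with $j<tq'^k$ matches the requirement $j<n-l$ for $l=tp_{k-1}$ and $h=tp_{k-2}+1$. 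Theorem \ref{teo:main} then yields a word $\mathbf{w}\in\F_{q^m}^n\setminus\mathrm{C}$ with
\[
\lvert \mathrm{C}\cap B_{tq'^k}(\mathbf{w})\rvert\ \geq\ \frac{q^n-1}{q^t-1}.
\]

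Next I would push this lower bound through the lifting construction. Fixing an $\F_q$-basis of $\F_{q^m}$, I view $\mathbf{w}$ as a matrix in $\F_q^{m\times n}$ and take as center the subspace $\mathcal{I}(\mathbf{w}^T)\in\mathcal{G}_q(n+m,n)$. Applying \eqref{eq:liftball} with radius $\tau=tq'^k$, and using that the lifting doubles rank distance into subspace distance (\cite[Proposition 4]{silva}), I obtain
\[
\lvert \mathcal{I}(\mathrm{C}^T)\cap B^s_{2tq'^k}(\mathcal{I}(\mathbf{w}^T))\rvert\ \geq\ \lvert \mathrm{C}\cap B_{tq'^k}(\mathbf{w})\rvert\ \geq\ \frac{q^n-1}{q^t-1}.
\]
Since $\frac{q^n-1}{q^t-1}\sim q^{n-t}$ is exponential in $n$, the list already has exponential size, so no polynomial-time list decoding algorithm can exist; this gives the non-list-decodability of $\mathcal{I}(\mathrm{C}^T)$ at radius $2tq'^k$, while the corresponding statement for $\mathrm{C}$ at radius $\tau\geq tq'^k$ is immediate from the first display. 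As $r=0$ collapses $p_k$ to $k+1$ and $\mathcal{Q}$ to the trace-type family, this recovers \cite[Theorem 7]{raviv_2016}.

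The only delicate point is that the center $\mathbf{w}$ furnished by Theorem \ref{teo:main} lies outside $\mathrm{C}$, whereas \eqref{eq:liftball} is phrased for $c_R\in C$. I would therefore observe that the lifting inequality is purely geometric: the map $X\mapsto\mathcal{I}(X)$ is injective and carries rank distance $d_R$ to subspace distance $2d_R$, so each codeword $c\in\mathrm{C}\cap B_{tq'^k}(\mathbf{w})$ lifts to a \emph{distinct} subspace $\mathcal{I}(c^T)\in\mathcal{I}(\mathrm{C}^T)\cap B^s_{2tq'^k}(\mathcal{I}(\mathbf{w}^T))$, and this injection proves the bound irrespective of whether $\mathbf{w}\in\mathrm{C}$. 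This is exactly the mechanism already used in Theorem \ref{teo:mainlifted}, so no essentially new difficulty arises; the main effort is the bookkeeping that identifies $n-tp_{k-1}$ with $tq'^k$ and that verifies the degree and subcode constraints licensing the application of Theorem \ref{teo:main} with the family $\mathcal{Q}$.
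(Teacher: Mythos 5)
Your proposal is correct and follows essentially the same route as the paper: the rank-metric part is exactly the paper's proof of its Section~5 version of this theorem (apply Theorem~\ref{teo:main} with $Sub=\mathcal{Q}$ from Theorem~\ref{teo:familyquasisubfield}, using $n-tp_{k-1}=tq'^k$ to get a ball of radius $tq'^k$ meeting $\mathrm{C}$ in at least $\frac{q^n-1}{q^t-1}$ codewords), and the transport to $\mathcal{I}(\mathrm{C}^T)$ via \eqref{eq:liftball} is precisely the paper's mechanism in Theorem~\ref{teo:mainlifted}. Your extra remark that \eqref{eq:liftball} holds for a non-codeword center because the lifting is injective and doubles rank distance is a legitimate clarification, consistent with how the paper itself applies the inequality with $\mathbf{w}\in\F_{q^m}^n\setminus\mathrm{C}$.
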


\section{Conclusions and open problems}\label{sec:probl}

Applying the techniques of \cite{wachter-zhe_2013} and of \cite{raviv_2016,TrombZullo}, we obtain a Johnson-like bound for rank-metric codes containing generalized Gabidulin codes and we give a bound on the list size of rank-metric codes containing generalized Gabidulin codes, which relies on the existence of an enough large family of $\sigma$-subspace polynomials. 
These results allow us to exhibit families of rank-metric codes which are not efficiently list decoded either at all or from a larger number of errors, some of them are also MRD-codes or close to be MRD.
Our results suggest that rank-metric codes which are $\F_{q^m}$-linear or that contains a (power of) generalized Gabidulin code cannot be efficiently list decoded for large values of the radius (at least when $n\mid m$).
This is also supported by the fact that the known constructions of rank-metric codes (see e.g.\ \cite{guruswami_2016}) which can be efficiently list decoded are far from being $\F_{q^m}$-linear. 
A further important problem regards the divisibility conditions between $n$ and $m$. Do the results obtained in this paper and those in \cite{wachter-zhe_2013,raviv_2016,TrombZullo} hold true also when $n \nmid m$?
\cite[Lemma 7 and Corollary 3]{wachter-zhe_2013} imply a positive answer to the above question, but a general approach is still missing.

\section{Acknowledgements} 

This research was supported by the Italian National Group for Algebraic and Geometric Structures and their Applications (GNSAGA - INdAM).
The second author is also supported by the project ``VALERE: VAnviteLli pEr la RicErca" of the University of Campania ``Luigi Vanvitelli''.

\bigskip

\noindent Paolo Santonastaso and Ferdinando Zullo\\
Dipartimento di Matematica e Fisica,\\
Universit\`a degli Studi della Campania ``Luigi Vanvitelli'',\\
Viale Lincoln, 5\\
I--\,81100 Caserta, Italy\\
{{\em \{paolo.santonastaso,ferdinando.zullo\}@unicampania.it}}

\end{document}